\RequirePackage{rotating}
\documentclass[oneside,letterpaper,10pt]{article}
\pdfoutput=1

\usepackage[letterpaper,total={6.5in, 9in},top=1in]{geometry}

\def\SidewayTables{1}
\usepackage{multirow}
\usepackage{xspace}
\usepackage{pifont}
\usepackage[plain]{fancyref}
\usepackage{flushend}
\usepackage[nounderscore]{syntax}
\usepackage{graphicx} 
\usepackage{amsmath} 
\usepackage{amsxtra} 
\usepackage{amssymb} 
\usepackage{amsthm} 
\usepackage{latexsym} 
\usepackage{setspace} 
\usepackage[titles]{tocloft} 
\usepackage[final]{listings}
\usepackage{mathpartir}
\newtheorem{theorem}{Theorem}[section]
\newtheorem{lemma}[theorem]{Lemma}
\usepackage{longtable}
\usepackage[draft=true]{minted}
\usepackage{hyperref}
\usepackage{booktabs}
\hypersetup{
    linktocpage=true,
    colorlinks=true,
    citecolor=blue,
    urlcolor=blue,
    linkcolor=blue,
    citebordercolor={1 0 0},
    urlbordercolor={1 0 0},
    linkbordercolor={.7 .8 .8},
    breaklinks=true,
    }
    
\newcommand{\eg}{e.g.,\xspace}
\newcommand{\ie}{i.e.,\xspace}
\newcommand{\etal}{et al.\xspace}

\newenvironment{customTable}{
  \ifdef{\SidewayTables}{\begin{sidewaystable}}{\begin{table*}}
}{
  \ifdef{\SidewayTables}{\end{sidewaystable}}{\end{table*}}
}

\newcommand{\app}[3]{\textit{appId} \hspace{1mm} #1 \hspace{1mm} #2 \hspace{1mm} #3}

\newcommand{\srule}[3]{\langle #1, #2 \rangle \rightarrow #3}

\newcommand{\st}[2]{(#1,#2)}

\newcommand{\screen}[4]{\textbf{screen} \hspace{1mm} #1 \hspace{1mm} #2 \hspace{1mm} #3 \hspace{1mm} #4}

\newcommand{\ps}[3]{\textbf{proxy} \hspace{1mm} #1 \hspace{1mm} #2 \hspace{1mm} #3}

\newcommand{\widget}[3]{#1 \hspace{1mm} #2 \hspace{1mm} #3}

\newcommand{\trans}[5]{\textbf{transition} \hspace{1mm} #1 \hspace{1mm} \textbf{dest} \hspace{1mm} #2 \hspace{1mm} (\textbf{cond} \hspace{1mm} #3 \hspace{1mm} \textbf{and} \hspace{1mm} #4) \hspace{1mm} #5}

\newcommand{\param}[2]{\textbf{param} \hspace{1mm} #1 \hspace{1mm} #2}

\newcommand{\bop}[3]{#1 \hspace{1mm} \textbf{#2} \hspace{1mm} #3}

\newcommand{\fn}[3]{\textbf{fun} \hspace{1mm} \textit{#1} \hspace{1mm} #2 \hspace{1mm} #3}

\newcommand{\dash}{\hspace{1mm}|\hspace{1mm}}

\newcommand{\ruleExp}[3]{\langle #1, \hspace{1mm}#2 \rangle \rightarrow #3}

\newcommand{\ruleExpL}[3]{\langle (#1), \hspace{1mm}#2 \rangle \rightarrow^* #3}

\newcommand{\ruleExpK}[4]{\langle (#1), \hspace{1mm}#2 \rangle \rightarrow^{#4} #3}

\newcommand{\keyword}[1]{\textbf{#1}}

\begin{document}

\title{SeMA: Extending and Analyzing Storyboards to Develop Secure Android Apps}

\author{
Joydeep Mitra\\
Kansas State University, USA\\
joydeep@ksu.edu 
\and
Venkatesh-Prasad Ranganath\\
rvprasad.free@gmail.com
\and
Torben Amtoft\\
Kansas State University, USA\\
tamtoft@ksu.edu
\and
Michael Higgins\\
Kansas State University, USA\\
mikehiggins@ksu.edu
}

\maketitle

\begin{abstract}
  Mobile apps provide various critical services, such as banking, communication, and healthcare. To this end, they have access to our personal information and have the ability to perform actions on our behalf. Hence, securing mobile apps is crucial to ensuring the privacy and safety of its users. 
  
  Recent research efforts have focused on developing solutions to secure mobile ecosystems (\ie app platforms, apps, and app stores), specifically in the context of detecting vulnerabilities in Android apps. Despite this attention, known vulnerabilities are often found in mobile apps, which can be exploited by malicious apps to harm the user. Further, fixing vulnerabilities after developing an app has downsides in terms of time, resources, user inconvenience, and information loss. 
  
  In an attempt to address this concern, we have developed SeMA, a mobile app development methodology that builds on existing mobile app design artifacts such as storyboards. With SeMA, security is a first-class citizen in an app’s design -- app designers and developers can collaborate to specify and reason about the security properties of an app at an abstract level without being distracted by implementation level details.  Our realization of SeMA using Android Studio tooling demonstrates the methodology is complementary to existing design and development practices.  An evaluation of the effectiveness of SeMA shows the methodology can detect and help prevent 49 vulnerabilities known to occur in Android apps.  Further, a usability study of the methodology involving ten real-world developers shows the methodology is likely to reduce the development time and help developers uncover and prevent known vulnerabilities while designing apps.
\end{abstract}

\section{Introduction}

\subsection{Motivation}
Android apps help us with critical tasks such as banking and communication. Consequently, they need access to our personal information. Hence, developers try to ensure that their apps do not cause harm to the user. Despite such efforts, apps exhibit vulnerabilities that can be exploited by malicious apps either installed in the device or executing remotely. In 2018, an industrial study of 17 fully functional Android apps discovered that all the apps had vulnerabilities ~\cite{ptSec:URL}. 43\% of the vulnerabilities were classified as high risk, and 60\% of them were on the client-side. In 2019, a vulnerability in Google's camera app allowed a malicious app without required permissions to gain full control of the camera app and access the photos and videos stored by the camera app ~\cite{checkmarx:URL}. A recent study showed that approximately 11\% of 2,000 Android apps collected from a couple of app stores, including Google Play, are vulnerable to Man-In-The-Middle (MITM) and phishing attacks ~\cite{Yingjie:ACM19}. Further, a vulnerability in numerous real-world apps was recently discovered (\eg the GMail app on Android) that allowed a malicious app to exploit the vulnerability and carry out phishing and denial-of-service attacks on those apps ~\cite{TaskAffinityCVE:URL}.

In the last decade, researchers have developed a plethora of tools and techniques to help detect vulnerabilities in Android apps ~\cite{Sufatrio:2015, sadeghi:2017, Li:2017}. Even so, apps with vulnerabilities find their way to app stores because app developers do not use these tools, or the tools are ineffective. The latter reason is supported by Ranganath and Mitra ~\cite{Ranganath:EMSE19}.  Additionally, Pauck \etal ~\cite{Pauck:2018} assessed six prominent taint analysis tools aimed at discovering vulnerabilities in Android and found tools to be ineffective in detecting vulnerabilities in real-world apps. In a similar vein, Luo \etal ~\cite{Luo:2019} conducted a qualitative analysis of Android static taint analysis tools and found that these tools not only miss certain taint flows since they do not consider an app's context, but their default configurations lead to many false positives.

Moreover, existing approaches to secure Android apps are curative; that is, they are aimed at detecting vulnerabilities after they are introduced during implementation. Identifying and fixing vulnerabilities after they occur in implementation increases the cost of development. Consequently, there has been a growing call to integrate security into every phase of the software development life-cycle. In a 2014 study, Green and Smith ~\cite{Green:2016} argued that to build secure software, developers need support in various areas ranging from safer programming languages to better security testing tools. Therefore, given the current landscape of Android app security and to support the call for secure software development, we are exploring a preventive approach that can help prevent the occurrence of vulnerabilities in app implementations (as opposed to curing vulnerabilities in app implementations).

\subsection{Contributions}
Motivated by the above observations, we make the following contributions in this article:

\begin{enumerate}
    \item We propose SeMA, a development methodology based on an existing mobile app design technique called \emph{storyboarding}. We describe the methodology along with the techniques used to realize it.
    
    \item We demonstrate the effectiveness of SeMA -- \textit{can SeMA detect and help prevent vulnerabilities?} -- by applying it to 49 known Android app vulnerabilities that are captured in the Ghera benchmark suite and are likely to occur in real-world apps.  Our findings show SeMA is highly effective in detecting these vulnerabilities.

    \item We demonstrate the usability of SeMA by having developers build Android apps with features commonly found in real-world apps.  Our findings suggest, with basic interventions, SeMA is highly effective in helping developers detect and prevent the vulnerabilities targeted by SeMA without increasing the design and development times.
\end{enumerate}

\emph{In the rest of the paper, all our remarks about SeMA's ability to prevent vulnerabilities are limited to the 49 vulnerabilities targeted by SeMA.}
    
\section{Background}

SeMA borrows heavily from Model Driven development to iteratively build an app from its storyboard, an existing design artifact.

\paragraph{Model Driven Development (MDD)} In MDD, software is developed by iteratively refining abstract and formal models ~~\cite{Brambilla:2017}. A model is meant to capture the application's behavior and is expressed in a domain-specific language (DSL). Apart from models, the domain-specific platform is a crucial entity in MDD. The domain-specific platform provides frameworks and APIs to enable easy refinement of a model into a platform-specific implementation. Since every aspect of an application can seldom be specified in the DSL, the resulting implementation is often extended with additional code; mostly, the business logic of the application. 

Traditionally, the software engineering community has been hesitant to adopt MDD due to social, cultural, and technical reasons ~\cite{Bran:2012}. However, with the relatively recent development of sophisticated tools that enable MDD, software engineering teams are more open to embracing it. For example, Amazon uses TLA+ to develop web services ~\cite{Lamport:2015, Wayne:2018}. Tools like Alloy ~\cite{Jackson:2002}, UML/OCL ~\cite{Richters:2002}, and UMLSec ~\cite{jan:2002} help create and analyze models of software behavior, which form the basis of further development.

\paragraph{Storyboarding} Android app development teams use storyboards to design an app's navigation ~\cite{DesTech:URL,UXMag:URL}. A storyboard is a sequence of screens and transitions between the screens. It serves as a model for the user's observed behavior in terms of screens and transitions. Numerous tools such as Xcode ~\cite{Xcode:URL}, Sketch ~\cite{Sketch:URL}, and Android's JetPack ~\cite{JetPackNav:URL} help express a storyboard digitally. The storyboarding process is \textit{participatory} in nature because it allows designers to get feedback from potential users about the features captured in the storyboard and from developers about the feasibility of implementing those features. However, traditional storyboards cannot capture an app's behavior beyond UI and navigational possibilities. This limitation hinders the possibility of collaboration between designers and developers while specifying the storyboard.   

\paragraph{MDD with Storyboarding} Existing MDD approaches to mobile app development do not use storyboards and do not consider security requirements. Extending storyboards with capabilities to capture an app's behavior to enable MDD has numerous benefits. First, since storyboards are becoming an integral part of the mobile app development process ~\cite{Trapp:2013}, MDD based on storyboarding is not likely to disrupt the current mobile app development process. Second, an extended storyboard can serve as a common substrate for collaboration between designers and developers to specify an app's behavior along with its UI and navigational features.  Third, an extended storyboard can serve as a model of the app to formally analyze its behavior. Further, the model's abstractness reduces the cost of such analysis as it eliminates the need for and the associated cost of extracting high-level (abstract) behaviors captured in code. Finally, the storyboard can serve as a reference for an app's behavior when auditing the app's implementation.

\section{The Methodology}
\label{sec:method}
SeMA enables the reasoning and verification of security properties of an app's storyboard via iterative refinement. A schematic of the process of developing an app with SeMA is shown in \Fref{fig:schematic}. The process begins with a developer sketching the initial storyboard of an app, as shown in Figure \ref{fig:init}. The developer then extends the storyboard with the app's behavior and checks if the behaviors satisfy various pre-defined security properties. The developer may repeat the previous steps to revise and refine the behaviors. Once the storyboard has captured the behavior as intended by the developer while satisfying pre-defined security properties, the developer generates an implementation from the storyboard with a push of a button. As the final step, the developer adds business logic to the implementation via hooks provided in the generated code. 

\begin{figure}
    \centering
    \includegraphics[height=6cm,width=0.5\textwidth]{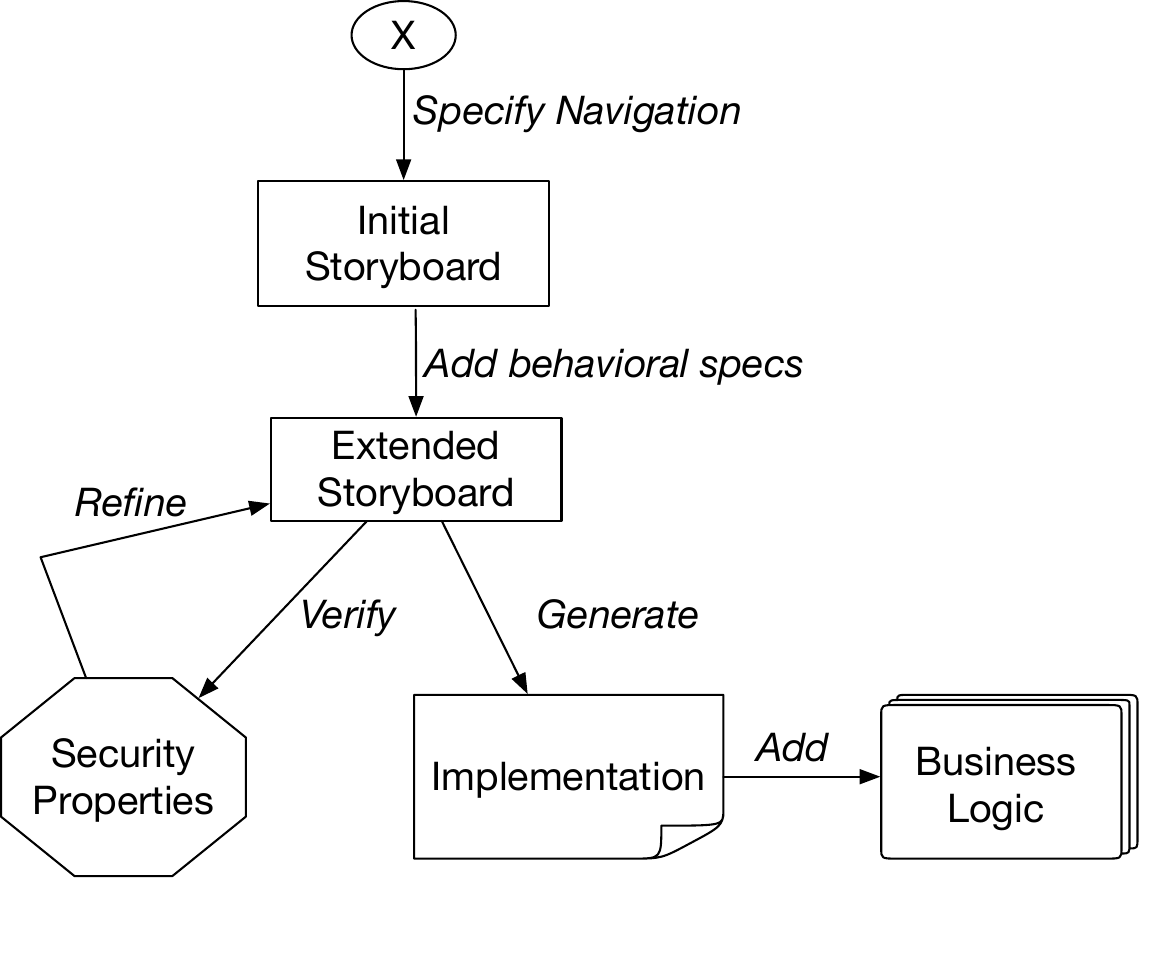}
    \caption{Development Process in SeMA}
    \label{fig:schematic}
\end{figure}

\begin{figure}
    \centering
    \includegraphics[width=\textwidth]{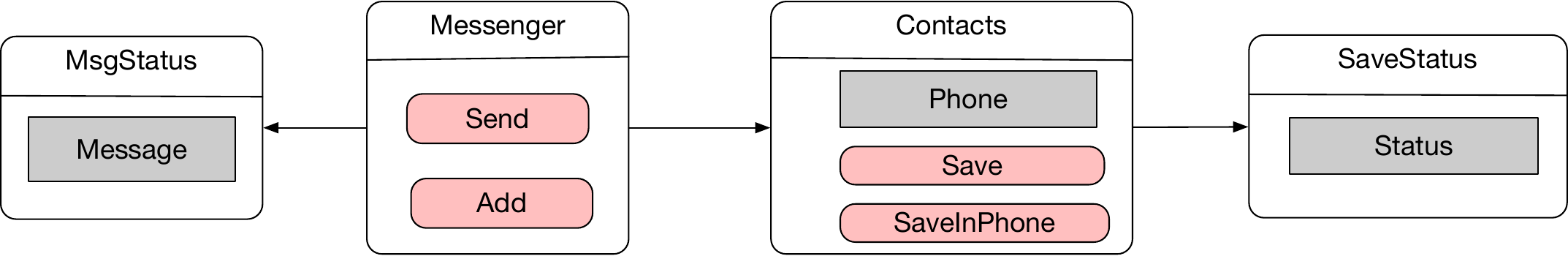}
    \caption{Diagrammatic representation of an initial storyboard of an app in SeMA, which is also a traditional storyboard.}
    \label{fig:init}
\end{figure}

\subsection{Extended Storyboard}
\label{sec:informal-sem}

A \emph{traditional storyboard} used in the design of mobile apps is composed of screens and transitions between screens. A screen is a collection of named widgets that allow the user to interact with the app, \eg clicking a button.  A transition (edge) between two screens depicts a navigation path from the source screen to the destination screen. The basic structure of a storyboard defines the navigational paths in an app. 

For example, \Fref{fig:init} shows the traditional storyboard of an app with four screens: \texttt{Messenger}, \texttt{Contacts}, \texttt{MsgStatus}, and \texttt{SaveStatus}. Starting from the \textit{Messenger} screen, a user can either add contact numbers to the app via \texttt{Contacts} screen or send a message to all saved contact numbers.

A traditional storyboard does not support the specification of app behavior. Hence, we propose the following extensions to traditional storyboards to enable the specification of app behaviors in storyboards (\ie enrich a traditional storyboard, as in \Fref{fig:init}) into an extended storyboard, as in \Fref{fig:story_ex}. \textit{These figures are diagrammatic representations of storyboards (known as  navigation graphs) in Android Studio.}

\begin{figure}
    \centering
    \includegraphics[width=\linewidth]{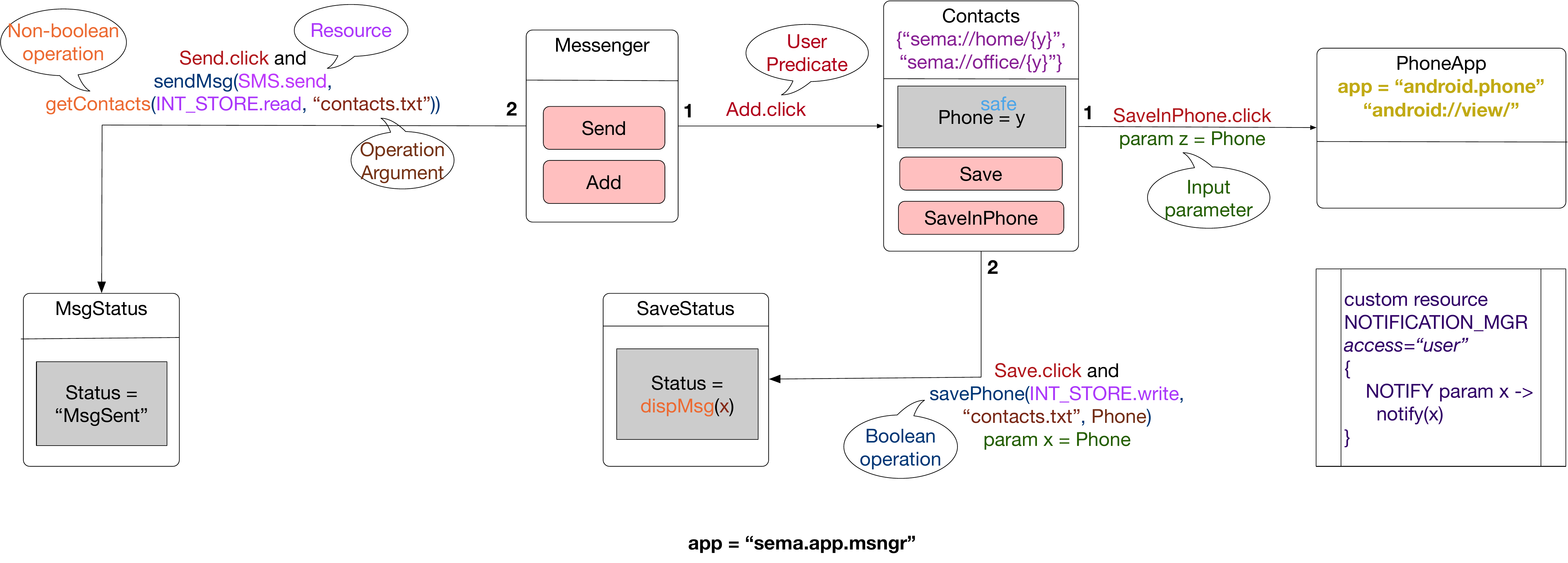}
    \caption{Diagrammatic representation of an extended storyboard. The (information) bubbles are not part of the storyboard.}
    \label{fig:story_ex}
\end{figure}

\paragraph{App Identity} Every storyboard has an \textit{app} attribute, which is a unique string constant used to identify the app described by the storyboard. 

\paragraph{Extensions to Screens} Screens are extended with a mandatory \emph{name}, optional \emph{input parameters} (in green in \Fref{fig:story_ex}), and optional \textit{URIs} (in purple in \Fref{fig:story_ex}). 

The \emph{Input parameters of a screen} are similar to parameters of a function in mainstream programming.  Input parameters bind to the values (arguments) provided when the screen is activated by either another screen in the app via an incoming transition or an app via the screen's \textit{URI}.

The \textit{URIs} associated with a screen can be used to access the screen from outside the app. A URI can have input parameters; similar to query parameters in web URLs. URI input parameters serve as input parameters of the screen. All URIs associated with a screen must have the same set of input parameters. For example, both URIs associated with the \texttt{Contacts} screen in \Fref{fig:story_ex}, have the input parameter \texttt{y}. External apps accessing a screen via its URI must provide the arguments corresponding to the URI's input parameters.  Every URI without its parameters must be unique in an app. 

\paragraph{Proxy Screens of External Apps} Apps often interact with external apps.  To capture this interaction, \emph{proxy screen} representing a screen of an external app can be included in an extended storyboard of an app; see \texttt{PhoneApp} in \Fref{fig:story_ex}. Such proxy screens have a mandatory \textit{name}, a mandatory \textit{URI}, and an optional \emph{app} attributes. If \textit{app} is specified in a proxy screen, then the proxy screen denotes the screen identified by the \textit{URI} in the app named \textit{app}. If \textit{app} is not specified, then the proxy screen denotes any screen identified by the \textit{URI} in an app installed on the device and determined by Android.


\paragraph{Extensions to Widgets} Widgets are extended with a mandatory \textit{value} that can be assigned by the developer (\eg in labels), entered by the user (\eg in fields), provided by an input parameter of the containing screen (\eg when the screen is activated by a transition), or returned by an operation. Based on the displayed content, widgets can be of different types, \eg a label and a text widget display plain text while a web widget displays web content. Further, depending on the widget's type, a widget can be configured with a pre-defined set of rules that regulate the data displayed in a widget, \eg a \textit{WebView} widget can be configured with a whitelist of trusted URL patterns (via \textit{trusted-patterns} attribute) to display content only from URLs in the whitelist.

\paragraph{Resources} Android provides apps with resources with different capabilities, \eg storage, networking. Hence, to complement this aspect, storyboards are extended with a pre-defined set of resources with specific capabilities that can be used by the apps being designed. \footnote{The current realization of SeMA supports a subset of pre-defined resources offered by Android.}  Android apps can offer UI-less services to other apps, \eg broadcast receivers, content providers. Such services are denoted by custom resources in storyboards. A \emph{custom resource} offers capabilities that can be used by apps installed on the device. Each custom resource has a mandatory identifier that is unique to the app. Each capability of a resource has a mandatory identifier that is unique to the resource.  Also, each capability that has security implications can be marked as \emph{privileged}.  Access to a resource and its capabilities can be controlled via the \textit{access} attribute of the resource.  This attribute can take on one of the following three values: \textit{all} indicates any app can access the resource/capability, \textit{user} indicates user permission is required to access the resource/capability, and \textit{own} indicates only the resource defining app \textit{x} or an app that shares the digital signature of app \textit{x} can access the resource/capability.  For example, in \Fref{fig:story_ex}, NOTIFICATION\_MGR is a custom resource that offers NOTIFY capability with a \texttt{notify} operation. Based on its access attribute, a client will need to seek the user's permission to use its NOTIFY capability.  These values of the \textit{access} attribute are based on permission levels supported by Android.

\paragraph{Operations} In an extended storyboard, an \textit{operation} indicates a task to be performed, \eg read from a file, get contents from a web server. An operation has a \textit{name}, returns a value, may have \textit{input parameters}, and may use a \textit{capability} (provided by a resource). An operation is used by mentioning its name along with arguments and any required capabilities. For example, in \Fref{fig:story_ex}, operation \texttt{savePhone} is used to save data in the device's internal storage by using the \emph{write} capability of the internal storage device exposed as the resource \textit{INT\_STORE}. A use of an operation introduces (declares) it in the storyboard.  An operation is defined in the generated implementation (described in \Fref{sec:codegen}). Use of operations \textit{must} be consistent; that is, a non-boolean operation cannot be used in a boolean value position.

\paragraph{Extension to Transitions} Transitions between screens can be adorned with constraints that, when satisfied, enable/trigger a transition. A constraint is a conjunction of a \emph{user action} (\eg click of a button) and a \emph{boolean expression}, comprising a combination of boolean operations. A constraint is satisfied when the user action is performed (\eg \emph{save} button is clicked) and the boolean expression evaluates to true.  For example, in \Fref{fig:story_ex}, the transition from \texttt{Contacts} screen to \texttt{SaveStatus} screen is taken only when the user clicks the \texttt{Save} button and the \texttt{savePhone} operation evaluates to \textit{true}; that is, the value of \texttt{Phone} is successfully saved in "contacts.txt", a file on the internal storage of the device.

As part of a transition, arguments are provided to the input parameters of the destination screen (in green in \Fref{fig:story_ex}). An argument can be a literal specified in the storyboard, a value available in the source screen of the transition (\eg value of a contained widget, input parameter to the screen), or a value returned by an operation. Further, every transition to a screen must provide values (arguments) for every input parameter of that screen. For example, if there are two transitions \emph{t1} and \emph{t2} to screen \emph{s} with input parameters \emph{x} and \emph{y}, then arguments for both \emph{x} and \emph{y} must be provided along both \emph{t1} and \emph{t2}.

Multiple outgoing transitions from a screen may be simultaneously enabled when their constraints are not mutually exclusive. Hence, to handle such situations, all outgoing transitions from a screen must be totally ordered; see the number labels on outgoing transitions from \texttt{Contacts} screen in \Fref{fig:story_ex}. The implementation derived from the storyboard will evaluate the constraints according to the specified order of transitions and take the first enabled transition.

The formal description of the syntax and semantics of the extended storyboard is provided in Appendix \ref{sec:syntax} and \ref{sec:semantics}, respectively.

\subsection{Security Properties}
\label{sec:secProps}
An Android app often interacts with other apps on the device, the underlying platform, and remote servers. Such interaction involves sharing of information, responding to events, and performing tasks based on user actions. Many of these interactions have security implications that should be considered during app development, \eg can the user's personal information be stored safely on external storage? (information leak and data injection), who should have access to the content provided by the app? (permissions and privilege escalation), how should an app contact the server? (encryption). 

While security implications are relevant, our current effort focuses on implications related to confidentiality and integrity of data.

\paragraph{Confidentiality} \emph{An app violates confidentiality if it releases data to an untrusted sink.} Hence, an app (and, consequently, its storyboard) that violates confidentiality is deemed as insecure. We explain the concepts of untrusted sinks in a storyboard in \Fref{sec:analysis}.  

\paragraph{Integrity} \emph{An app violates integrity if it uses a value from an untrusted source.} Hence, an app (and, consequently, its storyboard) that violates integrity is deemed as insecure.  We explain how a source is identified as untrusted in \Fref{sec:analysis}.

\subsection{Storyboard Analysis}
\label{sec:analysis}
There are multiple ways to check if extended storyboards satisfy various properties.  In our current realization of SeMA, we use \emph{information flow analysis} and \emph{rule checking} to check and help ensure extended storyboards satisfy security properties concerning confidentiality and integrity. 

\subsubsection{Information Flow Analysis (IF)}
\label{sec:ifa}
This analysis tracks the flow of \textit{information} in the form of values from sources to sinks in a storyboard.

A \emph{source} is either a widget in a screen, an input parameter of a screen, or (the return value of) an operation.  The set of sources in a storyboard are partitioned trusted and untrusted sources based on the guarantee of data integrity.  Specifically, a source is \emph{untrusted} if it is an input parameter of a screen's URI or it is an operation that reads data from an HTTP server, an open socket, device's external storage, or device's clipboard; or uses the capability of a resource provided by another app.  All other sources are deemed as trusted. 

A \textit{sink} is either a widget in a screen or an argument to a screen or an operation.  The set of sinks in a storyboard are partitioned trusted and untrusted sinks based on the guarantee of data confidentiality.  Specifically, a sink is deemed as \textit{untrusted} if it is an argument to an external screen identified without the \textit{app} attribute or to an operation that writes data to an HTTP server, an open socket, device's external storage, or device's clipboard; or uses the capability of a resource provided by another app. All other sinks are deemed as trusted.

The analysis relies on other actors (e.g., developer) and systems (e.g., user authentication and key management by Android runtime platform and app validation and authentication by app stores) to ensure aspects such as authenticity and trust of sources and sinks.

To reason about the flow of information between sources and sinks, we define a binary reflexive relation named \textit{influences} between sources and sinks as follows: \textit{influences(x,y)} (\ie source $x$ influences sink $y$) if 

\begin{enumerate}
  \item $x$ is assigned to an input parameter $y$ of screen $s$ on an incoming transition to $s$,
  \item $x$ is an argument to operation $y$, or
  \item value of operation $x$ or input parameter $x$ of screen $s$ is assigned to widget $y$ in screen $s$.
\end{enumerate}

Here are few instances of this relation in \Fref{fig:story_ex}. \textit{influences(y, Phone)} because input parameter \textit{y} of \texttt{Contacts} screen is assigned to \texttt{Phone} widget.  \textit{influences(x,dispMsg)} because \textit{x} is provided as an argument to operation \texttt{dispMsg}.  \textit{influences(dispMsg, Status)} because the value of operation \texttt{dispMsg} is assigned to \texttt{Status} widget in \texttt{SaveStatus} screen.

Further, all data flows inside the app are guaranteed to preserve the confidentiality and integrity of used/processed data.  The operations provided as part of the capabilities of custom resources provided by the app are assumed to preserve the confidentiality and integrity of used/processed data.

With the above (direct) influence relation, guarantees, and assumptions, we use the transitive closure of \textit{influences} relation to detect violation of security properties.  We detect \emph{potential violation of integrity} by identifying transitive (indirect) influences $(x,y) \in \mbox{\emph{influences}}^*$ in which $x$ is an \emph{untrusted} source.  Likewise, we detect \emph{potential violation of confidentiality} by identifying transitive (indirect) influences $(x,y) \in \mbox{\emph{influences}}^*$ in which $y$ is an \emph{untrusted} sink.  All such identified indirect influences are reported to the developer and must be eliminated.  Such an indirect influence can be eliminated by either replacing untrusted sources/sinks with trusted sources/sinks or indicating the indirect influence as safe by marking one or more of the direct influences that contribute to the indirect influence as \emph{safe}.\footnote{This marking is similar to data declassification in traditional information flow analysis.}\footnote{When an indirect influence is enabled via multiple sequences of direct influences between a source and a sink (similar to multiple paths between nodes in a graph), at least one direct influence in each sequence should be marked as safe to indicate the indirect influence is safe.}

For example, in \Fref{fig:story_ex}, the \textit{y} parameter of \texttt{Contacts} screen is untrusted as it is provided by an external app.  So, the analysis will flag $\mbox{\emph{influences}}^*(y, Phone)$ as violating integrity due to the assignment of \textit{y} to \texttt{Phone} widget in \texttt{Contacts} screen. A developer can fix this violation either by removing the assignment of \textit{y} to \texttt{Phone} or marking the assignment as \textit{safe} (as done in \Fref{fig:story_ex}).

\paragraph{Correctness of the analysis}
The purpose of the analysis is to identify violations of confidentiality or integrity.  This is done in two steps: 1) calculate the potential flows in the storyboard using the transitive closure of \textit{influences} relation and 2) check if a flow involves an untrusted source or sink and is not marked as \emph{safe}.  Since the second step is based on pre-defined classification of sources and sinks and developer-provided \emph{safe} annotations, the correctness of the analysis hinges on the first step.

If the \textit{influences} relation correctly captures the direct flow between sources and sinks in a storyboard, then the transitive closure $\mathit{influences}^*$ will capture all possible flows between the sources and sinks, including all flows violating confidentiality or integrity. Hence, \emph{the analysis is complete} in identifying every violation of confidentiality or integrity.

The $\mathit{influences}^*$ relation does not consider the effect of constraints on flows between sources and sinks; that is, all constraints on transitions are assumed to be true.  Consequently, the analysis may identify a flow between a source and a sink when there is no flow between the source and the sink (at runtime).  For example, suppose a screen $s$ has two incoming transitions $i_1$ and $i_2$ and two outgoing transitions $o_1$ and $o_2$ along with transition constraints that dictate transition $o_x$ must be taken if and only if $s$ was reached via transition $i_x$.  Further, suppose an input parameter $m$ of $s$ is assigned a value along $i_1$ and $i_2$ and used as an argument along $o_1$ and $o_2$. In this case, the value assigned to $m$ along $i_1$ ($i_2$) will not flow out of $m$ along $o_2$ ($o_1$).  However, the analysis will incorrectly identify the value assigned to $m$ along $i_1$ ($i_2$) may flow out of $m$ along $o_2$ ($o_1$). Since the analysis may report invalid violations, \textit{the analysis is unsound}. However, a developer can override such violations by marking relevant direct flows as \emph{safe}.  

A formal description of the information flow analysis along with the algorithm and its correctness proof is provided in Appendix \ref{sec:infoFlowFormal}.

\subsubsection{Rule Checking (RC)}
\label{sec:rule-based}
Prior research has developed guidelines and best practices for secure Android app development ~\cite{Owasp:URL, Standards:URL}. Based on these standards, we have developed rules that can be enforced at design time (in addition to information flow analysis) to prevent the violation of properties related to confidentiality and integrity. 

Following is the list of rules supported by the current realization of SeMA, along with the reasons for the rules.

\begin{enumerate}
  \item \textit{Capabilities offered by custom resources must be protected by access control.}  If any external client can access a custom resource (\ie its \emph{access} attribute is set to \emph{all}) and the resource offers privileged capabilities, then malicious clients can gain access to privileged capabilities without the user's consent.  Further, \emph{Android's policy of least privilege} stipulates that apps should have minimal privileges and acquire the privileges required to use protected services.  
    
  \item \textit{WebView widgets must be configured with a whitelist of URL patterns}. A WebView widget in an app works like a browser -- it accepts a URL and loads its contents -- but it does not have many of the security features of full-blown browsers.  Also, a WebView widget has the same privileges as the containing app, has access to the app's resources, can be configured to execute JavaScript code.  Hence, loading content from untrusted sources into WebView widgets facilitates exploitation by malicious content.
    
  \item \textit{Operations configured to use HTTPS remote servers must use certificate pinning.}  HTTPS remote servers are signed with digital certificates issued by certificate authorities (CAs). Android defines a list of trusted CAs and verifies that the certificate of an HTTPS remote server is signed with a signature from a trusted CA. However, if a trusted CA is compromised, then it can be used to issue certificates for malicious servers. Hence, to protect against such situations, certificates of trusted servers are pinned (stored) in apps, and only servers with these certificates are recognized as legit servers by the apps.
    
  \item \textit{Operations configured to use SSL sockets must use certificate pinning.} The reasons from the case of certificate pinning for HTTPS applies here as well.
  
  \item \textit{Cipher operations must use keys stored in secure key stores (containers)}. The results of cipher operations can be influenced by tampering the cryptographic keys used in cipher operations.  Further, since cryptographic keys are often used across multiple executions of an app, they need to be stored in secondary storage that is often accessible by all apps on a device. Hence, to protect against unwanted influences via key tampering, cipher keys should be stored in secure key stores (containers).
    
\end{enumerate}

\paragraph{Realization of Rule Checking}

Violations of rule 1 are detected by checking if a custom resource offers a privileged capability and has its \emph{access} attribute set to \emph{all}.

The \textit{trust-patterns} attribute of \textit{WebView} widget is used to specify the whitelist of trusted URL patterns.  Violations of rule 2 is detected by checking if \textit{trust-patterns} attribute is specified for every \textit{WebView} widget.

Violations of rule 5 are detected by checking if the key argument provided to a cipher operation is the value returned by a pre-defined operation to keys from a secure container.

Violations of rules 1, 2, and 5 are flagged as errors and must be addressed before moving to the code generation phase.  

Certificate pinning is enabled by default in every storyboard in the methodology.  However, since techniques other than certificate pinning can be used to secure connections to servers, a developer can disable certificate pinning by setting \textit{disableCertPin} attribute in a network-related operation.  Such cases are detected as violations of rules 3 and 4.  They are flagged as warnings but do not inhibit the developer from moving to the code generation phase. 

\subsection{Code Generation}
\label{sec:codegen}

\begin{figure*}
    \centering
    \includegraphics[width=\linewidth]{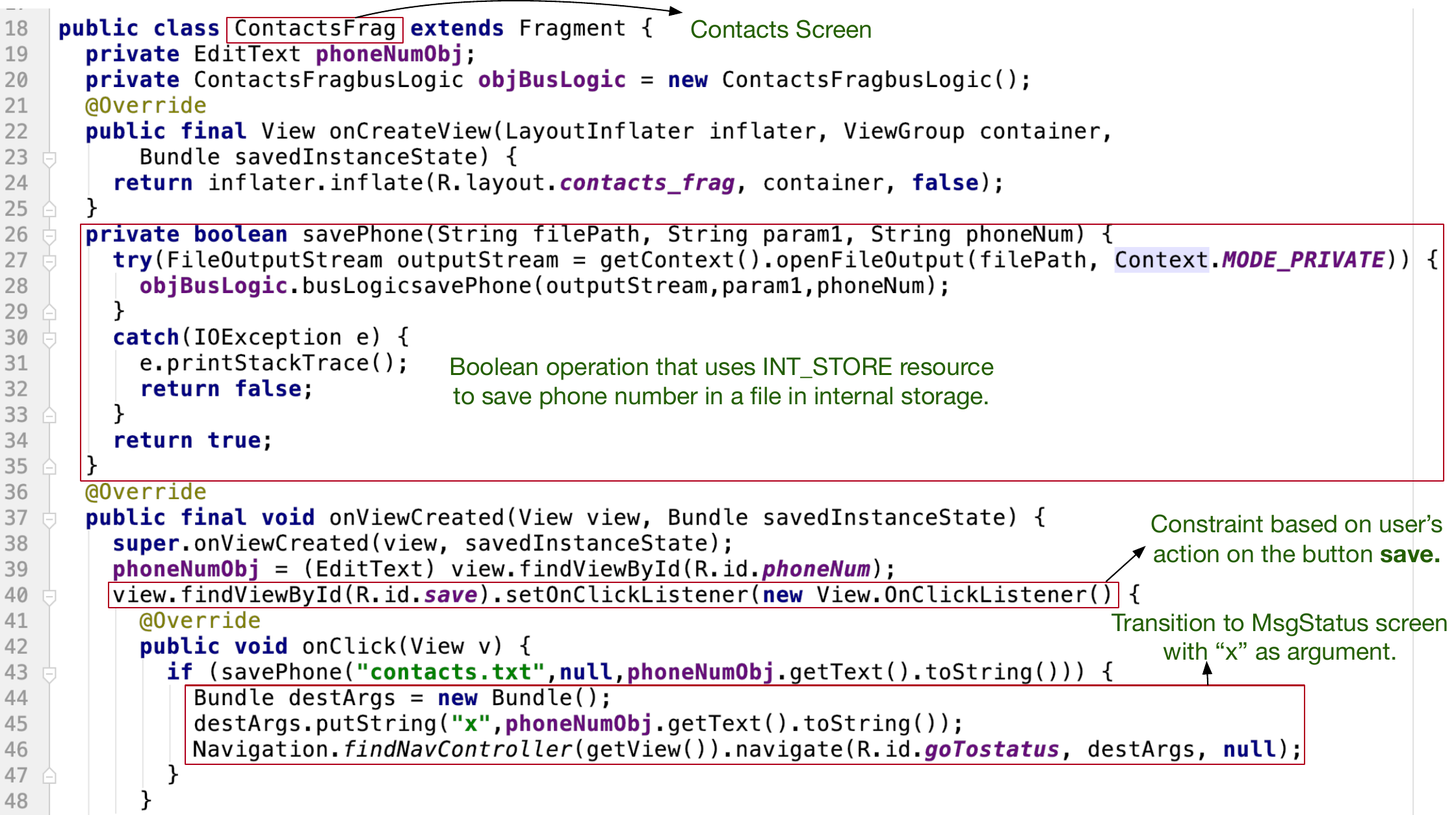}
    \caption{Code generated for the Contacts screen in the storyboard depicted in \Fref{fig:story_ex}}
    \label{fig:codeGen_ex}
\end{figure*}

Once the developer has verified that the specified storyboard does not violate properties related to confidentiality and integrity, she can generate code from the storyboard. \Fref{fig:codeGen_ex} shows a fragment of generated code for the running example. 

The current realization of SeMA hinges on various choices in mapping and translating storyboard-level entities and concepts into code-level entities and concepts. These choices are encoded in the following mapping and translation rules used during code generation.

\begin{enumerate}
  \item A Screen is translated to a \texttt{Fragment}. For each input parameter of the screen, a function to obtain the value of the parameter is generated.  If an input parameter is not available at runtime, then the corresponding function raises a runtime exception.
    
  \item A widget is translated to the corresponding widget type in Android, \eg a widget displaying text is translated to \texttt{TextView}. The value of the widget is the corresponding value specified in the storyboard. For example, if the value is provided by a screen's input parameter \textit{x}, then the return value from the getter function of \textit{x} is set as the widget's value, \eg \texttt{TextView.setText(getX())}. The value in a widget is obtained via the corresponding getter function, \eg \texttt{TextView.getText()}.
    
  \item The constraint associated with a transition from a source screen to a destination screen is a conjunction of a user action and boolean operations. The user action part of the constraint is translated to a listener/handler function in the source screen that is triggered by the corresponding user action, \eg button click.  If the constraint has a boolean expression, then a conditional statement is generated with the boolean expression as the condition in the body of the listener function. The \emph{then} block of the conditional statement has the statements required to trigger the destination screen. If the constraint has no boolean operations, then the body of the listener function has statements required to trigger the destination screen.  If the constraint has no user action, then the checks corresponding to the boolean operations are performed when the source fragment is loaded.
  
  We use Android's navigation APIs to trigger a destination screen. If the destination screen is a proxy screen, then intents are used to trigger the destination screen determined by the \textit{URI} and \textit{app} attribute. Arguments to destination screens are provided as key/value pairs bundled via the \texttt{Bundle} API.
  
  When a screen has multiple outgoing transitions, the statements corresponding to the transitions are chained in the specified order of the transitions in the storyboard.
    
  \item An operation is translated to a function with appropriate input parameters and return value. Each reference to the operation in a storyboard is translated to call the corresponding function. 
  
  The type of the input parameters depends on the type of the arguments provided to the function. For example, if the argument is provided by a widget that displays text, then the type of the parameter will be \texttt{String}. The return type depends on how the function is used. For example, if the function is used as a boolean operation in a constraint, then its return type will be \texttt{boolean}. If the function is assigned to a widget that displays text, then the function's return type will be \texttt{String}. 
    
  If the operation uses a capability provided by a pre-defined resource, then the body of the corresponding function will contain the statements required to use the capability. Otherwise, the function will have an empty body that needs to be later filled in by the developer. For example, on line 27 in \Fref{fig:codeGen_ex}, function \textit{savePhone} contains the statement required to create a file in the device's internal storage since the same operation uses \textit{write} capability of the resource \textit{INT\_STORE} in the storyboard.
  
\item A developer can extend the generated definitions of functions. For example, on line 28 in \Fref{fig:codeGen_ex}, the generated code provides a hook for the developer to extend \textit{savePhone}.
  
  \item A custom resource is translated to an appropriate Android component.  The capabilities provided by a custom resource can be accessed via an Android intent.  Currently, we only support broadcast receivers as custom resources. 

  \item The use of a resource in the storyboard indicates an app depends on the resource.  Such dependencies are captured in the app's configuration during code generation while relying on the Android system to satisfy these dependencies at runtime in accordance with the device's security policy, \eg grant permission to use a resource at install time.
  

\end{enumerate}

\section{Implementation of the Methodology}
Android JetPack Navigation (AJN) is a suite of libraries that helps Android developers design their apps' navigation in the form of navigation graphs. A navigation graph is a realization of a traditional storyboard in Android Studio. We have extended navigation graphs with features that enable developers to specify an app's storyboard in Android Studio. The developer can visually represent the screens, widgets, and transitions in the navigation graph. While a developer cannot specify operations and constraints visually, she can specify them in the corresponding XML structure of the navigation graph.

We have extended Android Lint ~\cite{Lint:URL}, a static analysis tool to analyze files in Android Studio, to implement the analysis and verification of security properties. The analysis is packaged as a Gradle Plugin that can be used from Android Studio.

We have implemented a code generation tool that takes a navigation graph and translates it into Java code for Android. A developer can extend the generated code with business logic in Java or Kotlin. The code generation tool is also packaged as a Gradle Plugin that can be used from Android Studio.

\paragraph{Designing an app in Android Studio with SeMA} An app developer uses the existing AJN libraries to build a navigation graph of the app.  This graph serves as the initial storyboard (\eg \Fref{fig:init}).  Next, she uses the SeMA provided extensions to the navigation graph to iteratively specify the app's behavior. In each iteration, she uses the extensions to Android Lint tool to analyze the navigation graph for violations of pre-defined security properties. Once the developer has verified the storyboard satisfies the desired properties, she uses the code generation tool to generate an implementation of the app. Finally, she adds the business logic to the generated implementation to complete the implementation of the app.  

All through the development process, the developer operates within the existing development environment while exercising the extensions provided by SeMA.

\section{Canonical Examples}

\begin{figure}
    \centering
    \includegraphics[width=\linewidth]{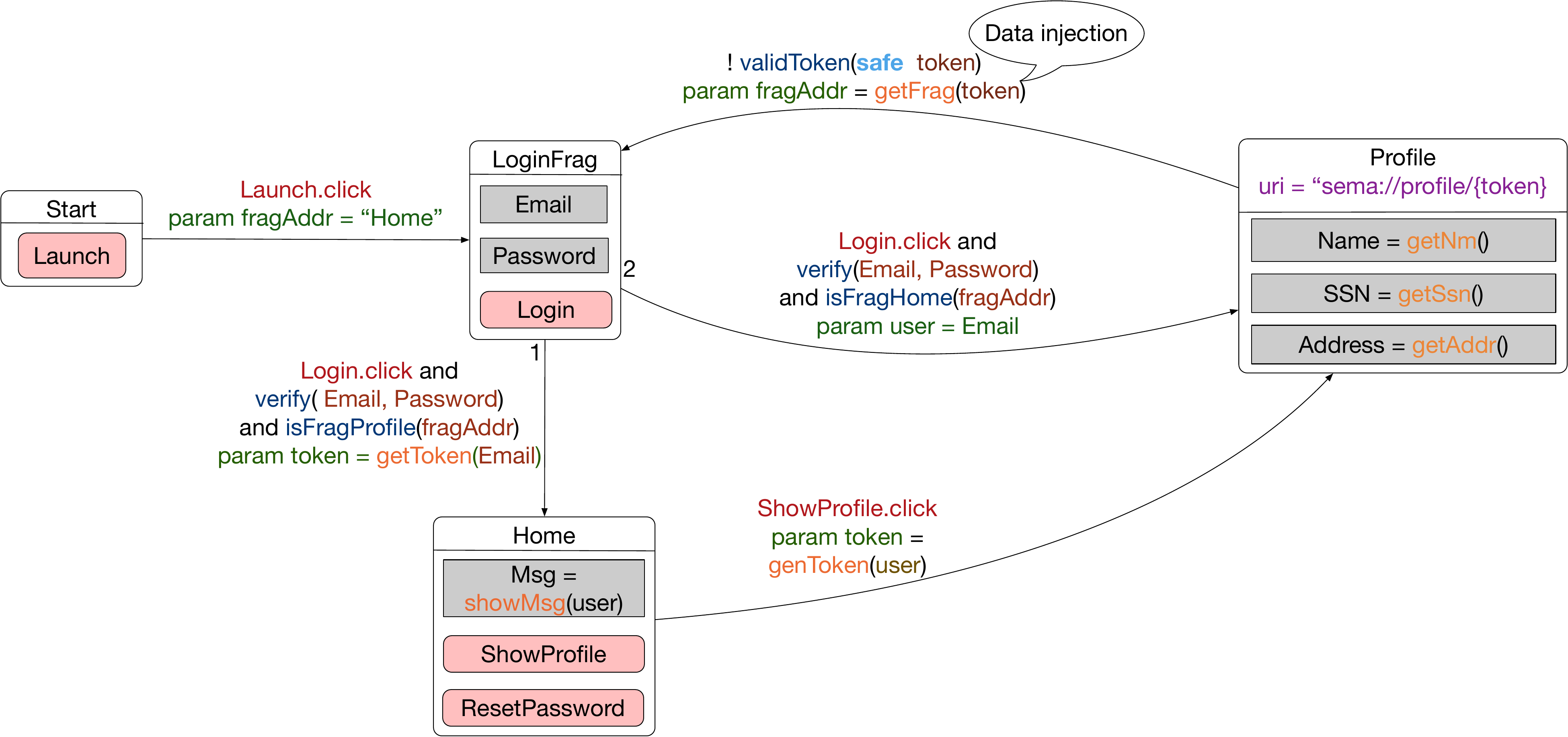}
    \caption{Example illustrating a data injection vulnerability. The (information) bubbles are not part of the storyboard.}
    \label{fig:type-ex}
\end{figure}

We illustrate the methodology with a couple of canonical examples based on vulnerabilities found in real-world apps. Each example demonstrates a vulnerability that can be detected and prevented by the methodology at design time. For each example, we describe the expected behavior and outline the steps a developer would take to specify the app's behavior in a storyboard and verify the security properties related to confidentiality and integrity. The steps outlined are intended to prescribe one way of developing an app with SeMA. SeMA does not enforce a workflow. Developers may choose to build a storyboard iteratively or they may choose to make it all at once. However, SeMA enforces developers to verify the specified storyboard for security violations before translating it to code.

\subsection{Example Illustrating the Use of Information Flow Analysis}
Consider an app that allows users to log in and view their profile information. From another app, a valid user of the app can navigate \textit{only} to the screen showing profile information. The example is based on the fragment injection vulnerability discovered in real-world apps ~\cite{IBMSec:URL}.

A developer will specify an app in SeMA as follows:

\paragraph{Specify screens and navigation} A developer starts by specifying the screens of the app, the widgets in each screen, and the possible transitions between the screens. For example, in \Fref{fig:type-ex}, a developer initially specifies 4 screens: \texttt{Start}, \texttt{LoginFrag}, \texttt{Home}, and \texttt{Profile}. Each screen has widgets, \eg the \texttt{Start} screen has one button with label \texttt{Launch}. Finally, the screens are connected to each other via transitions to indicate how the user can navigate between screens, \eg from the \texttt{LoginFrag}, a user can navigate to either \texttt{Home} or \texttt{Profile}. 

\paragraph{Add user action-related constraints to transitions} A developer adds constraints to transitions that correspond to actions/gestures performed by users on the widgets in the screens. In \Fref{fig:type-ex}, such constraints are highlighted in red. For example, the transition from \texttt{Start} to \texttt{LoginFrag} is taken when \texttt{Launch} button in \texttt{Start} is clicked. 

\paragraph{Refine transitions with additional constraints} A developer adorns a transition with guards/constraints not related to user actions. Such constraints are highlighted in blue in \Fref{fig:type-ex}. For example, the transition from \texttt{LoginFrag} to \texttt{Home} is taken when the \texttt{verify} operation and the \texttt{isFragHome} operation returns true in addition to the user action-related constraint, that is, \texttt{Login} button clicked. Likewise, the transition from \texttt{Profile} to \texttt{LoginFrag} is taken when the \texttt{validToken} operation returns true. 

These boolean operations may have input parameters. For example, the \texttt{verify} operation takes two input parameters, the arguments to which are provided by the values in widgets \texttt{Email} and \texttt{Password}.

\paragraph{Connect input parameters of screens to data sources} A developer specifies the data sources that will provide arguments to the input parameters of a screen. These sources can be provided in two ways -- (1) as part of incoming transitions to a screen (in green \Fref{fig:type-ex}) or (2) as part of URIs associated with screens (in purple in \Fref{fig:type-ex}). For example, the argument to the \texttt{token} input parameter of screen \texttt{Profile} is provided by a non-boolean operation \texttt{getToken} (highlighted in orange in \Fref{fig:type-ex}) when the transition to \texttt{Profile} is taken. On the other hand, the argument of \texttt{token} is provided by the \texttt{token} variable associated with the URI of screen \texttt{Profile} when \texttt{Profile} is triggered by an external app via the associated URI.

\paragraph{Check security properties} Finally, the developer analyzes the specified storyboard for security property violations, which results in a data injection warning, as shown in \Fref{fig:type-ex}. This warning is because \texttt{isFragProfile} and \texttt{isFragHome} operations consume the input parameter \texttt{fragAddr} as argument.  On the transition from \texttt{Profile} screen to \texttt{Login Frag} screen, \texttt{fragAddr} takes on the return value of \texttt{getFrag} operation that consumes \texttt{token} parameter of \texttt{Profile} screen.  Since an external app provides the \texttt{token} argument to the \texttt{Profile} screen via its URI, an external app can manipulate \texttt{token} to gain access to \texttt{Home} screen. Information flow analysis will detect and flag this violation by following the chain of flow from untrusted source and sinks.

\paragraph{Apply fix suggested by SeMA} The reported vulnerability can be fixed by changing \texttt{param fragAddr = getFrag(token)} to \texttt{param fragAddr = "profile"} on the transition from \texttt{Profile} screen to \texttt{LoginFrag} screen as this breaks the dependence between the app's navigation and \texttt{token} parameter.

\begin{figure}
    \centering
    \includegraphics[width=\linewidth]{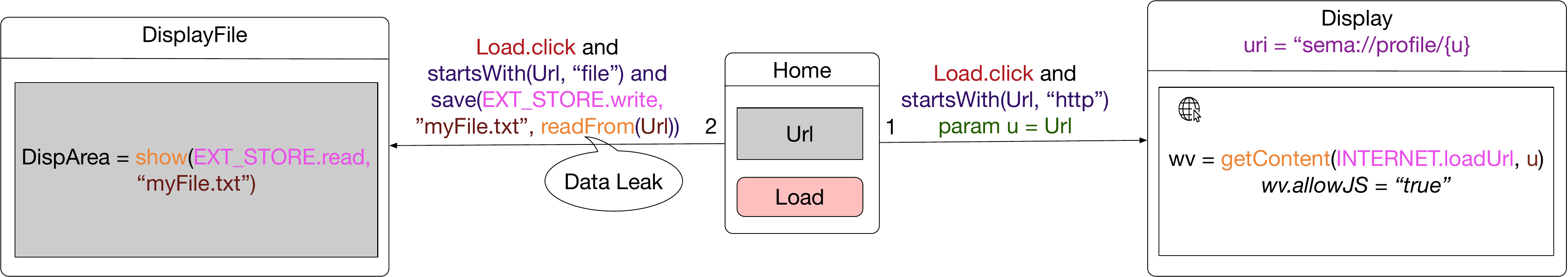}
    \caption{Example illustrating a data leak vulnerability. The (information) bubbles are not part of the storyboard.}
    \label{fig:dl-ex}
\end{figure}

\subsection{Example Illustrating the Use of Information Flow Analysis and Rule Checking}
Consider an app that allows a user to enter a trusted URL in a text field and displays the content from the URL. If the URL is a file URL, then the data in the file is displayed to the user and downloaded to the device's external storage. The URL can be a file URL or a web URL. The example is based on the sensitive data leak vulnerability discovered in the Firefox app for Android ~\cite{Mozilla:URL} and the Zomato app ~\cite{Zomato:URL}.  

A developer will specify the storyboard of this app iteratively as follows: 

\paragraph{Specify screens and navigation} As in the previous example, a developer starts by specifying the screens, widgets in each screen, and the possible transitions between the screens. As shown in \Fref{fig:dl-ex}, the app has 3 screens -- \texttt{Home}, \texttt{Display}, and \texttt{DisplayFile}. The \texttt{Display} screen is used to display web content from a URL entered in \texttt{Url} in \texttt{Home} screen. The \texttt{DisplayFile} screen is used to display the contents of a file entered by the user in \texttt{Url} in \texttt{Home} screen.

\paragraph{Add user action-related constraints to transitions} A developer attaches constraints on transitions related to gestures/actions by a user on widgets in a screen (highlighted in red \Fref{fig:dl-ex}). For example, in the figure, a user can either navigate to \texttt{Display} or \texttt{DisplayFile} from \texttt{Home} upon the click of the \texttt{Load} button in \texttt{Home}.

\paragraph{Refine transitions with additional constraints} A developer adds guards/constraints on transitions not related to user actions (shown in blue \Fref{fig:dl-ex}). These guards are specified as operations (in orange in \Fref{fig:dl-ex}) that return boolean values. For example, the transitions from \texttt{Home} to \texttt{Display} is taken when the \texttt{Load} button is clicked and the \texttt{startsWith} operation returns true. The \texttt{startsWith} verifies if the URL entered by the user starts with "http:". Likewise, the transition from \texttt{Home} to \texttt{DisplayFile} is taken when the \texttt{Load} button is clicked, the entered URL starts with "file" (see \texttt{startsWith} operation in \Fref{fig:dl-ex}), and the \texttt{save} operation returns true (\ie the content in the file path \texttt{Url} is saved). 

\paragraph{Specify pre-defined resources used by operations} A developer adds more detail to the operations via pre-defined resources. For example, a developer specifies the type of local storage that will be used by the \texttt{save} operation, as shown in \Fref{fig:dl-ex} (highlighted in pink). In the storyboard shown, the \texttt{save} operation uses the \texttt{write} capability provided by the pre-defined resource \texttt{EXT\_STORE} to write data to a file in the device's external storage.

\paragraph{Connect input parameters of screens to data sources} As in the previous example, a developer specifies a data source to \texttt{u}, an input parameter of screen \texttt{Display} in \Fref{fig:dl-ex}. The source is either the widget \texttt{Url} in screen \texttt{Home} (in green in \Fref{fig:dl-ex}) or the variable \texttt{u} in the URI associated with the \texttt{Display} screen (in purple).

\paragraph{Connect widgets to data sources} A developer specifies the data source that provides the value to be displayed in a widget. As shown in \Fref{fig:dl-ex}, the values from non-boolean operations \texttt{show} and \texttt{getContent} are used as values to be displayed in \texttt{DispArea} and \texttt{wv} widgets respectively. The operation \texttt{show} reads a file in the device's external storage and the operation \texttt{getContent} loads web content from a URL. 

\paragraph{Configure WebView widgets} A developer configures the \texttt{Webview} widget by setting pre-defined attributes of the widget. For example, in \Fref{fig:dl-ex} JavaScript execution is allowed by setting the value of the widget's \texttt{allowJS} attribute to "true".

\paragraph{Check security properties} Finally, a developer analyzes the specified behavior in the storyboard to check for violations related to confidentiality and integrity. The analysis reveals a violation of confidentiality due to a data leak vulnerability. If the file path provided in \texttt{Url} is a file in the app's internal storage, then any file from the app's internal storage can be stored in the device's external storage. Thus, all apps installed in the device can access the app's internal files since files in external storage can be accessed by all apps. This violates the confidentiality of the data read from trusted sources. Information flow analysis will detect and flag this violation by following the chain of flow from sources to untrusted sinks. Further, rule checking will flag a violation of Rule 2 because a whitelist of trusted URLs is not specified for \texttt{wv} widget via \texttt{trust-patterns} attribute. The absence of whitelist will allow \texttt{wv} to execute JavaScript embedded in potentially malicious URLs.

\paragraph{Apply fix suggested by SeMA} One way to fix this vulnerability is not to save internal files to external storage. Instead, the data in an internal file should be directly displayed in \texttt{DisplayFile}. The user should be provided with a button, in \texttt{DisplayFile}, to download the displayed data in external storage. This design is more secure than the one described above because data in the app's internal files cannot be exposed without the user's explicit approval via a UI action, \eg button click. However, SeMA will still report a violation since data is being written to an external storage file. In this case, since the app's design requires explicit consent from the user before writing to external storage, a developer can override this violation by marking the flow of information to the external storage file as \textit{safe}. 

The rule violation is fixed by specifying a whitelist of trusted URLs for \texttt{wv}, \eg \texttt{wv.trust-patterns=\linebreak\{".*sema.org.*"\}}.

\section{Evaluation}
\label{sec:eval}
We evaluated SeMA in terms of (1) \textit{Effectiveness} focused on its ability to detect and prevent vulnerabilities, and (2) \textit{Usability} focused on its ability to help developers build apps with features that are often found in real-world apps but without a set of known Android app vulnerabilities.

\subsection{Effectiveness}
\label{sec:evalFeas}
We used the Ghera benchmark suite~\cite{Mitra:2017} for this evaluation. Ghera has 60 benchmarks. Each benchmark captures a unique vulnerability. We used Ghera because the vulnerabilities in Ghera benchmarks are \textit{valid}; that is, they have been previously reported in the literature or documented in Android documentation. These vulnerabilities are \textit{general and exploitable} as they can be verified by executing the corresponding benchmarks on vanilla Android devices and emulators. Further, each vulnerability is \textit{current} as they are based on Android API levels 22 thru 27, which enable more than 90\% of Android devices in the world and are targeted by both existing and new apps. Finally, the benchmarks are \textit{representative} of real-world apps in terms of the APIs they use, as established by Ranganath and Mitra ~\cite{Ranganath:EMSE19}. Hence, the benchmarks in Ghera are well-suited for this evaluation.

For each Ghera benchmark, we used SeMA to create a storyboard of the \textit{Benign} app. If SeMA detected the vulnerability in \textit{Benign} app's storyboard, then we modified the storyboard till SeMA found no vulnerabilities. We generated the code from this storyboard and verified the absence of the vulnerability captured in \textit{Benign} by executing it with the \textit{Malicious} app. If the \textit{Malicious} app was unable to exploit the generated app, then we concluded that SeMA prevented the vulnerability. 

Of the 60 vulnerabilities captured in Ghera benchmarks, SeMA detected 49 vulnerabilities: 15 were detected via information flow analysis, nine were detected via rule checking, six were detected via a combination of info flow analysis and rule checking, and 19 were prevented via code generation. \Fref{tab:results} provides the breakdown of this information.

\begin{longtable}{llr}
\caption{Results showing how a vulnerability in a benchmark was detected/prevented. CG, IF, and RC refer to Code Generation, Information Flow Analysis, and Rule-based Analysis, respectively.}\\
\hline
\textbf{ID} & \textbf{Benchmark} & \textbf{Method}\\
\hline
\endfirsthead
\multicolumn{3}{c}%
{\tablename\ \thetable\ -- \textit{Continued from previous page}} \\
\hline
\textbf{ID} & \textbf{Benchmark} & \textbf{Method}\\\\
\hline
\endhead
\hline \multicolumn{3}{r}{\textit{Continued on next page}} \\
\endfoot
\hline
\endlastfoot
     C1 & Block Cipher in ECB mode weakens the encryption. & CG\\
     C2 & Non-random IV weakens the encryption. & CG \\
     C3 & Use of a constant secret key exposes the secret key. & RC\\
     C4 & No authentication of KeyStore exposes the KeyStore. & CG\\
     C5 & Constant Salt in password-based secret key exposes the key. & CG\\
     I1 & Dynamic registration of broadcast receiver exposes it. & RC\\
     I2 & Sharing empty pending intent leads to intent hijacking. & IF\\
     I3 & Dynamic Fragment loading enables fragment injection. & IF\\
     I4 & Low priority activity can be hijacked by high priority activity. & IF \\
     I5 & Sharing implicit pending intent leads to intent hijacking. & IF \\
     I7 & Not validating implicit intents grants unauthorized access. & IF\\
     I8 & Not verifying a broadcast message exposes the receiver. & RC\\
     I9 & Accepting input from a receiver enables data injection. & IF\\
     I10 & Not protecting broadcast receiver enables privilege escalation. & RC\\
     I11 & Starting activity in a new task enables activity hijack. & CG\\
     I12 & Non-empty task affinity of launcher activity enables phishing. & CG\\
     I13 & Starting activity in a new task enables phishing. & CG\\
     I14 & Non-empty task affinity of an activity enables phishing and DoS. & CG\\
     N1 & Incorrect checking of server certificates enables MITM. & CG\\
     N2 & Incorrect verification of server hostname enables MITM. & CG \\
     N3 & Not verifying the hostname of an SSL socket enables MITM. & CG\\
     N4 & \textit{SSLCertificateSocketFactory.getInsecure()} API enables MITM. & CG \\
     N5 & Incorrect checking of certificate authority enables MITM. & CG \\
     N6 & Sending data via open socket enables data theft. & IF\\
     N7 & Reading data from an open socket enables data injection. & IF\\
     N8 & Absence of pinned certificates enables MITM. & RC\\
     P1 & Using unnecessary permissions enables privilege escalation. & CG\\
     P2 & Weak permissions do not provide adequate protection. & RC\\
     S1 & Reading from files in external storage enables data injection. & IF\\
     S2 & Writing to files in external storage enables data theft. & IF\\
     S3 & Unsanitized paths to internal storage exposes internal files. & IF\\
     S4 & Writing to files in external storage from files in internal storage enables data theft. & IF\\
     Y1 & API \textit{CheckCallingOrSelfPermission} does not provide protection. & CG\\
     Y2 & API \textit{CheckPermission} does not provide protection. & CG\\
     Y3 & API \textit{EnforceCallingOrSelfPermission} does not provide protection. & CG\\
     Y4 & API \textit{EnforceCallingOrSelfPermission} does not provide protection. & CG\\
     Y5 & Writing data to the clip board enables data theft. & IF\\
     Y6 & Executing code from unverified source enables code onjection. & IF\\
     Y7 & Sharing unique system ID enable ID theft. & IF\\
     W1 & Malicious URLs in a WebView overwrite cookies. & IF \& RC\\
     W2 & Connecting to HTTP remote servers enables MITM. & RC\\
     W3 & Malicious URL in a WebView can inject \& execute JS. & IF \& RC\\
     W4 & Malicious URL in a WebView can embed intents to steal data. & IF \& RC \\
     W5 & Malicious URL in a WebView can access internal content providers. & IF \& RC\\
     W6 & Malicious URL in a WebView can access internal files. & IF \& RC\\
     W7 & Ignoring SSL errors enables MITM. & CG\\
     W8 & A WebView not validating resource requests enables MITM. & RC\\
     W9 & A WebView with no base URL enables file access to a malicious URL. & IF \& RC\\
     W10 & A WebView not validating page requests enables MITM. & RC\\
     \label{tab:results}
\end{longtable}

Two of the vulnerabilities detected via rule analysis could have been detected by code generation. These two vulnerabilities relate to connecting to HTTP remote servers and connecting to HTTPS remote servers without certificate pinning. While \textit{such vulnerabilities can be prevented by code generation, our realization of the methodology relies on rule checking to offer flexibility in using HTTP vs. HTTPS and certificate pinning in storyboards when connecting to remote servers}.

The current realization of SeMA was not applicable to 11 benchmarks in Ghera. Of these, three benchmarks capture vulnerabilities that cannot be prevented by the methodology, \eg unhandled exceptions. The remaining eight apps use features that are not yet supported by the methodology, \eg Content Providers. 

\subsubsection{Comparison with Existing Efforts} 

Of the 49 vulnerabilities prevented by the methodology, 28 can be detected curatively by source code analysis after implementing the apps. \textit{Detecting the remaining 21 vulnerabilities by source code analysis is harder} due to combinations of factors such as the semantics of general-purpose programming languages (\eg Java), security-related specifications provided by the developer (\eg source/sink APIs), and the behavior of the underlying system (\eg Android libraries and runtime). This observation is supported by Ranganth and Mitra ~\cite{Ranganath:EMSE19}, who found that existing source code analysis tools are not effective in detecting vulnerabilities in an earlier version of Ghera which included 15 of these 21 vulnerabilities. Further, Pauck \etal ~\cite{Pauck:2018} evaluated six prominent static taint analysis tools aimed to detect data leak vulnerabilities in Android apps and discovered that most tools detect approximately 60\% of the vulnerabilities captured in the DroidBench 3.0 benchmark suite. Finally, Luo \etal ~\cite{Luo:2019} qualitatively analyzed Android app static taint analysis tools and observed that these tools need to be carefully configured (\eg relevant source/sink APIs) and should consider application context to accurately detect vulnerabilities in Android apps.


Ranganath and Mitra ~\cite{Ranganath:EMSE19} found that 14 security analysis tools in isolation could detect at most 15 vulnerabilities, and the full set of tools collectively detected 30 vulnerabilities.  This result suggests combining different analysis (\eg, deep analyses like pointer analysis, shallow analyses like pattern matching) will likely be more effective in detecting vulnerabilities. Our experience with SeMA suggests \textit{the same is likely true in the context of preventing vulnerabilities: a combination of information flow analysis (deep analysis), rule checking (shallow analysis), and code generation (shallow analysis) helped detect and prevent 49 vulnerabilities}.

Gadient \etal ~\cite{Gadient:2020} found that real-world apps often expose credentials (\eg crypto keys) in the source code, use insecure communication channels (\eg HTTP), and use malicious input to load URLs in a WebView. These vulnerabilities are also captured in Ghera and were prevented by SeMA in this evaluation. This finding suggests \textit{there are non-trivial opportunities for techniques like SeMA to help improve security of real-world apps}.

\subsection{Usability}
\label{sec:evalUsability}
While the evaluation with the Ghera benchmarks shows that SeMA can be used to prevent known vulnerabilities in small apps, it does not tell us if SeMA can be used by developers to uncover vulnerabilities in apps with real-world capabilities and features. Further, SeMA extends storyboards, an existing design artifact, to enable formal reasoning of security properties at design time. While formal reasoning approaches have been proven to be effective in terms of uncovering defects in software (\eg bugs and vulnerabilities), they can be a burden on developers in terms of time to learn and use ~\cite{Schaffer:2016}. Consequently, the adoption of such approaches in domains like mobile app development may be limited. 

To address the concerns identified above, we conducted a usability study of SeMA with ten developers and 13 real-world apps. In this study, we will answer the following research questions:

\begin{itemize}
    \item \textbf{RQ1}: Does SeMA affect app development time?
    \item \textbf{RQ2}: Does software development experience affect development time while using SeMA?
    \item \textbf{RQ3}: Does security-related feature development experience affect development time while using SeMA?
    \item \textbf{RQ4}: Do features used in an app affect development time while using SeMA?
    \item \textbf{RQ5}: Does SeMA detect specific (expected) vulnerabilities introduced in an app's storyboard? 
    \item \textbf{RQ6}: Does the use of SeMA (instead of the usual Android app development process) increase the likelihood of introducing expected vulnerabilities?
    \item \textbf{RQ7}: Does software development experience affect the vulnerabilities introduced while using SeMA?
    \item \textbf{RQ8}: Does security-related feature development experience affect the vulnerabilities introduced while using SeMA?
\end{itemize}

     
    

In the following sections, we will describe the different aspects of the experiment, explain the various design decisions, and present the results of our analysis on the observed data.

\subsubsection{Study Design}
We designed an experiment to gain insights into the effect of introducing SeMA to the Android app development process. Specifically, we wanted to determine if SeMA helps a developer prevent vulnerabilities when developing an app. Additionally, we wanted to find out the cost of using SeMA in terms of time taken to build an app with SeMA. Hence, we carried out the following tasks to accomplish the study:

\begin{enumerate}
    \item Identify real-world Android apps with expected vulnerabilities.
    \item Hire developers to participate in the study.
    \item Conduct interventions for the developers participating in the study.
    \item Create development exercises for the developers participating in the study.
    \item Assign exercises to developers.
    \item Collect observational data during exercises.
    \item Collect information about developer's prior experience and developer's opinion about SeMA.
    \item Analyze collected data.
\end{enumerate}

In the following paragraphs, we will explain the details and the rationale for performing the above tasks.

\paragraph{Identifying Real-world Apps:} 
For this experiment, we selected 30 Ghera benchmarks.  The selection was guided by the possibility of detecting the captured vulnerabilities via information flow analysis or rule checking.  These vulnerabilities are representative of vulnerabilities found in real-world apps based on API usage information, as shown by Ranganath and Mitra ~\cite{Ranganath:EMSE19}. Hence, they are appropriate for this exercise.

Next, we randomly collected 50 apps from Google Play, Android's official app store. We manually analyzed the source code of the apps to look for features used in the 30 Ghera benchmarks. The first 13 apps we analyzed accounted for the features used by 26 of the 30 selected Ghera benchmarks. None of the remaining 37 apps used features used by the remaining four of the selected Ghera benchmarks. Of the 13 apps, seven apps had been publicly reported to have at least one of the 30 Ghera vulnerabilities ~\cite{HackerOne:URL}. Based on the used features, each of the 13 apps were associated with at least two, on average six, and at most 12 vulnerability benchmarks. The associated number of benchmarks hints at the number of vulnerabilities to expect if these apps were recreated. We refer to these vulnerabilities as expected vulnerabilities from hereon. \Fref{tab:bench-vul-dist} provides a brief description of each app and lists the expected vulnerabilities in each app.

\begin{table*}
\centering
  \ifdef{\TopCaption}{
  \caption{A description of the 13 apps along with the expected Ghera vulnerabilities in each app.}
  }{}
  \begin{tabular}{@{}rlllr@{}} 
    \toprule
    ID & App Name & App Description & Expected Ghera Vuln. & Total\\
    \midrule
    1 & IRS2Go & Registered users can check tax & I4,N8,S1,S2,W2 & 5\\
    & & refund status and make tax-related payments & \\
    2 & Geico & Registered users can view and & N8,S1,S2,Y7,W2 & 5\\
    & & download their auto insurance policies & \\
    3 & Slack & Registered users can chat and call other & I3,I4,I7,N6,N7,N8,W2 & 7\\
    & & registered users & \\
    4 & Ancestry & Registered users can track and save their family & N8,S2,W2 & 3\\
    & & history details & \\
    5 & MyBlockHR & Registered users can upload tax-related & I3,I4,I7,N8,S1,S2,S3,W2 & 8\\
    & & documents and estimate their annual tax & \\
    6 & AESCrypto & Users can encrypt and decrypt & C3,S1 & 2\\
    & & messages with a password & \\
    7 & Grab & Users can book transportation & W1,W3,W4,W6,W8,W9,W10 & 7\\
    & & and accommodation at a location &\\
    8 & Zomato & Users can order food from restaurants & I7,W1,W3,W4,W6,W8, & 8\\
    & & near their location & W9,W10\\
    9 & Clipboard & Users can take notes and share them & N8,Y5,W2 & 3\\
    & Manager & via the clipboard or save to a remote server & \\
    10 & IRCCloud & Users can upload files in their device to the app & I7,S1,S2,S3,S4 & 5\\
    11 & Harvest & Users can upload bills and receipts of their & I3,I4,S1,S2,S3,S4 & 6\\
    & & monthly expenditure &\\
    12 & Firefox & Users can view web or file content in a & S2,S4,W1,W3,W4,W6,W8, & 9\\
    & & custom browser & W9,W10\\
    13 & Yandex & Users can view web content in a custom & I1,I8,I9,I10,P2,W1,W3,W4, & 12\\ 
    & & browser and report a browser crash & W6,W8,W9,W10\\
    \bottomrule  
  \end{tabular}
  \ifundef{\TopCaption}{
    \caption{A description of the 13 apps along with the expected Ghera vulnerabilities in each app.}
  }{}
  \label{tab:bench-vul-dist}
\end{table*}

\paragraph{Hiring Developers:} Since the purpose of this evaluation is to measure the effect of SeMA on Android app development, the obvious candidates for this study are Android app developers. Due to limited local population of developers familiar with Android app development, we reached out to local developers with some professional software development experience. We curated a list of 30 professional developers sourced from personal contacts and sent out personalized emails to each one of them, asking them about their interest in participating in the study. In these emails, we promised to provide a small financial incentive (\eg gift card) if they participated in the study. Of the 30 developers, 15 agreed to participate in the study. While we started the study with 15 participants, five participants withdrew from the study in the middle. So, we completed the study with ten participants.

The participants have an average of 20 months of software development experience with a minimum of three months and a maximum of 53 months, as shown in \Fref{tab:dev-profile}. Seven of them have experience in developing web applications professionally; that is, outside of learning environments such as classrooms or boot camps, and four of the seven participants experienced in web development have 12 months of experience or more. Further, four of ten participants have experience in developing security features (\eg authentication) for applications. Finally, in terms of programming languages, eight of ten participants use Java frequently, followed by JavaScript and Python, both of which are used by four participants. However, only three participants have experience developing mobile apps professionally, but none for Android. Hence, \textit{participants selected in this study are not expert Android app developers}.

In general, professional developers are likely to learn quickly since many of them have to learn new concepts, tools, and languages on the job.  Also, since commercial software often has to meet security standards, such developers have a better understanding of security-related features than beginners. Further, a non-trivial part of Android app development is done in Java. Hence, despite the lack of exposure to Android app development, \textit{the selected group of participants are representative proxies for Android app developers having experience developing security-related features and can serve as good subjects in this evaluation.}  

\begin{table}
    \centering
  \ifdef{\TopCaption}{
    \caption{Participant's development experience (in months) in different platforms and with security features.}
  }{}
  \begin{tabular}{rrrrrr} 
    \toprule
    Participant & Software& Web & Mobile & Android & Security\\
    \midrule
    404 & 53 & 17 & 0 & 0 & 17 \\
    704 & 3 & 3 & 0 & 0 & 0 \\
    1004 & 10 & 0 & 8 & 0 & 0 \\
    1503 & 14 & 14 & 0 & 0 & 14 \\
    1803 & 30 & 18 & 6 & 0 & 12 \\
    1904 & 6 & 6 & 0 & 0 & 0 \\
    2103 & 12 & 0 & 0 & 0 & 0 \\
    2403 & 3 & 0 & 2 & 0 & 0 \\
    2703 & 48 & 12 & 0 & 0 & 10 \\
    3103 & 6 & 3 & 0 & 0 & 0 \\
    \bottomrule  
  \end{tabular}
  \ifundef{\TopCaption}{
    \caption{Participant's development experience (in months) in different platforms and with security features.}
  }{}
  \label{tab:dev-profile}
\end{table}

\paragraph{Conducting Interventions:} Since the participants in the study did not know how to make an Android app, we had to train them in Android app development. Hence, we invited all participants to a single 8-hour session. In this session, we introduced them to the fundamental aspects of an Android app (\eg activity, intents). This introduction included a presentation of the necessary concepts, secure coding guidelines, and a live demonstration of how to develop an Android app using Android Studio. After this session, the participants were given Android-related resources such as the Android documentation and free video tutorials (\eg Marakana Android tutorials ~\cite{Marakana:URL}) to learn more about the APIs needed to build Android apps. They were given five days to examine the additional materials independently. After this brief study period, we conducted a group Q\&A session with all the participants. This session comprised two parts. The first part of the session was aimed at helping the participants address any confusion they had about developing an app in Android. The second part of the session was meant to provide them with hands-on experience of Android Studio and Android app development in general.

We designed a similar schedule to train the participants in developing Android apps with SeMA. All the participants were invited to a single 8-hour group session. In this session, we introduced the features in SeMA and demonstrated how to build an app with those features. We provided the participants with the documentation to SeMA and five days to peruse the materials. Finally, we conducted a group Q\&A session with all participants to help them learn about developing an app with SeMA. In addition to the Q\&A, in this session, they made an app with SeMA with our assistance.  The hands-on experience with SeMA helped them get acclimatized to Android app development with SeMA.

This entire process of teaching ten participants Android and SeMA, took approximately two weeks to complete.

In any usability study that requires interventions, the application of the interventions to the users should be designed carefully. In this study, since the training received by a participant could affect the results, \textit{it was necessary to ensure that they received homogeneous instruction}. Hence, we conducted all the training and Q\&A sessions for the participants together instead of individually. However, since each participant was allowed to independently explore Android and SeMA, the effect of individual learning will affect the results of this experiment.

\paragraph{Creating Development Exercises:} We analyzed the selected 13 real-world apps in two ways to design the development exercises for the participants. First, we decompiled the app and manually analyzed its metadata and source code. This analysis revealed the app's screens, any external apps (\eg remote server) the app communicates with, the data flow of the app, and the APIs used in the apps. Second, we installed the app on an Android device and interacted with it to understand its navigation. Based on the analysis of an app's innards and navigation, we constructed a specification of the app. These specifications were used as exercises for the participants in the study. 

The motivation behind using these specifications as exercises for participants was that we wanted to mimic a real-world scenario where a developer would have to start from a description of an app's expected behavior, understand it, and use the relevant features in Android to realize the app. Further, since one of our goals was to measure the ability of SeMA to help developers uncover vulnerabilities, the specifications provided in the exercises were constructed to force participants to make decisions that would affect the app's security. 

As an example, consider the snippet from an exercise assigned to a participant in this experiment -- \textit{"When the app is started, the user is shown a FileUploader screen. The FileUploader screen has a button to allow users to upload a file. Upon clicking this button, the user is shown a list of files. Upon selecting a file, the selected file is saved, and the user is shown a message to indicate if the file was saved successfully."} This specification asks the participant to design an app that allows users to upload a file, a feature that requires the participant to consider security implications. For example, if the participant decided to use an external client to help the user select a file, then she would register the app to receive the chosen file path from the external client via a message. In doing so, an app might open itself to accepting malicious messages. If the app used the file path embedded in a malicious message to create a destination where the selected file will be uploaded, then this decision would enable data injection or directory traversal attack. Hence, the participant's approach to design the file upload feature will impact the app's security. These decisions and SeMA's influence on them was the focus of this study.


\paragraph{Assigning Exercises to Participants:} We assigned three distinct apps to every participant. Of the three apps assigned to each participant, one baseline app was common to all participants (App ID 1 in \Fref{tab:bench-vul-dist}). All participants developed the baseline app first with SeMA and then without SeMA; that is, using the existing Android development process. We ensured that the baseline app with SeMA and without SeMA was developed on different days to reduce the effect of memorization. The rationale for developing the baseline app with SeMA before developing the same app without SeMA was to avoid the inflation of SeMA's effect due to a participant's prior knowledge about the app/exercise.

In addition to the baseline app, each participant built two apps using SeMA. This assignment strategy resulted in 40 sample implementations of 13 apps. Of the 40 samples, 20 were realizations of the baseline app -- 10 using SeMA and 10 without SeMA. Keeping the baseline app constant across all participants was necessary to measure the effect of SeMA on one app for all participants. The remaining 20 samples were uniformly distributed across the remaining 12 apps (see \Fref{tab:dev-app-map}). This distribution ensured all expected vulnerabilities could be introduced by a participant.

\begin{customTable}
    \centering
  \ifdef{\TopCaption}{
    \caption{Time taken (in minutes) by a participant to make non-baseline apps. The number in brackets indicates if an app was a participant's second or third app.}
  }{}
  \begin{tabular}{@{}rcccccccccccc@{}} 
    \toprule
    & \multicolumn{12}{c}{AppID}\\
    \midrule
     & 2 & 3 & 4 & 5 & 6 & 7 & 8 & 9 & 10 & 11 & 12 & 13\\
    Participant & & & & & & & & & & & &\\
    \midrule
    404 & & & 102 (3)& & & 85 (2)& & & & & &\\
    704 & & & & & & & & 105 (2)& & 135(3)& &\\
    1004 & & & & & & & & & & & 55 (2)&\\
    1503 & 150 (2)& & & & 180 (3)& & & & & & & \\
    1803 & & & & 240  (3)& & & & 90 (2)& & & & \\
    1904 & & 150 (2)& & & & & & & & & &110 (3)\\
    2103 & & & & & 105  (3)& & & & 150 (2)& & & \\
    2403 & & 120 (2)& & & & & & & & & 60 (3)& \\
    2703 & & & & & & & 105 (2)& & & 132 (3)& & \\
    3103 & & & & & & & & & 100 (3)& & & 80 (2)\\
    \bottomrule  
  \end{tabular}
  \ifundef{\TopCaption}{
    \caption{Time taken (in minutes) by a participant to make non-baseline apps. The number in brackets indicates if an app was a participant's second or third app.}
  }{}
  \label{tab:dev-app-map}
\end{customTable}

\paragraph{Observing Development and Collecting Data:}
Each participant worked on the assigned development exercise via remote sessions. Participants decided on the length of each session.  There was no limit on the number of sessions to work on a development exercise.  

In every session, participants shared their screen with us.  Via screen sharing, we observed the development of apps and recorded when participants introduced vulnerabilities that were expected in the developed apps.  Also, we recorded the error messages and causes of error reported by SeMA.

\paragraph{Designing and Administering Development Experience Surveys:} We administered two surveys after the participants finished their exercises.  The purpose of the first survey was to build a participant profile. It asked about the time spent developing software outside formal learning environments (e.g., classrooms), experience developing web and mobile apps, and familiarity with different programming languages and technologies.  The purpose of the second survey was to assess their experience with SeMA.  It asked about various features of SeMA and if they aided/impeded the development workflow. 

\paragraph{Analyzing Observational Data:}
To answer RQ1, we analyzed the development time of developing the baseline app with and without SeMA. The development of baseline apps without SeMA served as the control group, while the development of baseline apps with SeMA served as the treatment group. We computed the average difference in app development time between the control and treatment groups and used a paired two-tailed t-test to measure the effect of SeMA on app development time. 

To answer RQ2, we partitioned the participants into two groups. The +2DX group comprised participants with two or more years of software development experience and the -2DX group comprised participants with less than two years of development experience. We used a two sample two-tailed t-test to compare the average development time using SeMA between these groups to quantify the effect of software development experience on developing software using SeMA.

To answer RQ3, similar to RQ2, we partitioned the participants into two groups. +1SFDX group comprised of participants with one or more years of security feature development experience and -1SFDX group comprised of participants with less than one year of security feature development experience. 

To answer RQ4, we compared the average time taken by participants using SeMA to make apps with expected vulnerabilities stemming from the features of a category and apps without the same features. We used a two sample two-tailed t-test for this comparison.

To answer RQ5, we measured the proportion of total number of expected vulnerabilities introduced by all participants while using SeMA that were then successfully detected by SeMA.  We also considered the proportion of expected vulnerabilities that were prevented by SeMA (due to good defaults). 

While SeMA is effective in detecting expected vulnerabilities, we need to assess the effect of changing the development methodology on the introduction of expected vulnerabilities.  We tackle this concern in RQ6 --- \textit{does the use of SeMA (instead of the usual Android app development process) increase the likelihood of introducing expected vulnerabilities?}

To answer RQ6, we measured the average proportion of expected vulnerabilities introduced by participants in the baseline app with and without SeMA. We compared the average proportions with a paired two-tailed t-test to determine if there was any significant difference in the average proportion of vulnerabilities introduced in the baseline app with and without SeMA. 

To answer RQ7, we computed the proportion of expected vulnerabilities introduced while using SeMA for both +2DX and -2DX groups. We compared these two proportions using a two sample two-tailed z-test to measure the effect of the participants' software development experience on introducing vulnerabilities with SeMA.

To answer RQ8, we compared the proportion of expected vulnerabilities introduced while using SeMA by participants in +1SFDX and -1SFDX group. We used a two sample two-tailed z-test for the comparison.

\subsubsection{Results from Observational Data} 

\hfill

\textit{RQ1: Does SeMA affect app development time?} Every participant took less time to make the baseline app with SeMA compared to without SeMA, as shown in columns 2 and 3 of \Fref{tab:dev-time-dist}. The median development time of an app with SeMA is 188 minutes and without SeMA is 283 minutes, which shows that, for the participants considered in this study, \textit{developing the baseline app with SeMA took 40\% less than developing the same app without SeMA}. 

Based on paired two-tailed t-test, SeMA has a significant effect (p-value = 0.0002 with 95\% confidence interval [79 mins., 177 mins.]]) on the mean development time in developing the baseline app. The 95\% confidence interval of SeMA's effect on decreasing the development time suggests that SeMA has the potential to reduce app development time.


\smallskip \textit{RQ2: Does software development experience affect development time while using SeMA?} The three participants in the +2DX group took 142 minutes on average, to develop an app with SeMA. This time was 150 minutes for the seven participants in the -2DX group. Based on two sample two-tailed t-test comparing the means of these two groups, there is no significant difference in the average time taken by participants from these two groups while using SeMA (p-value = 0.75 with 95\% confidence interval [-55 mins., 39 mins.]). Hence, \textit{software development experience (not related to Android app development) does not affect the time taken by a developer to make an app with SeMA}.  

\smallskip \textit{RQ3: Does security-related feature development experience affect development time while using SeMA?} The four participants in the +1SFDX group took 160 minutes on average, to make an app with SeMA. This time was 139 minutes for the six participants in the -1SFDX group. While participants in the +1SFDX group took more time than participants in the -1SFDX group in this sample, the difference is not significant based on two sample two-tailed t-test (p-value = 0.37 with 95\% confidence interval [-28 mins., 70 mins.]). Hence, \textit{security-related feature development experience does not affect the time taken by a developer to make an app with SeMA.}

\begin{table}
    \centering
    \begin{tabular}{rrrrrrr}
    \toprule
    Participant & \multicolumn{2}{c}{Time to} & \multicolumn{2}{c}{Time to} & Expected & Added\\
     & \multicolumn{2}{c}{Baseline App} & 2nd App & 3rd App & Vulns. & Vulns.\\
     & no SeMA & SeMA & \multicolumn{2}{c}{SeMA} & &\\
    \midrule
    404 & 276 & 165 & 85 & 102 & 15 & 10 \\
    704	& 440 & 255 & 105 & 135 & 14 & 4 \\
    1004 & 180 & 140 & 55 & 135 & 22 & 10 \\
    1503 & 588 & 315 & 180 & 150 & 12 & 5 \\
    1803 & 360 & 210 & 90 & 240 & 16 & 7 \\
    1904 & 290 & 210 & 150 & 110 & 24 & 16 \\
    2103 & 390 & 210 & 150 & 105 & 12 & 5 \\
    2403 & 210 & 138 & 120 & 60 & 21 & 12 \\
    2703 & 240 & 150 & 105 & 132 & 19 & 11 \\
    3103 & 240 & 135 & 80 & 100 & 22 & 11 \\
    \bottomrule
    Average & 321 & 193 & 112 & 138 & & \\
    \bottomrule
    95\% C.I & & \multicolumn{3}{c}{\hspace{3mm}------------[124,170]------------} & & \\
    \bottomrule
    \end{tabular}
    \caption{Observational Data of each participant. The unit of time is minute. Only the baseline app was made with and without SeMA. The non-baseline apps were made only with SeMA. \textit{Expected Vulns.} indicates no. of vulnerabilities that could have been introduced by a participant while developing the three apps assigned to her with SeMA. \textit{Added Vulns.} indicates the no. of vulnerabilities introduced by a participant while developing the three apps assigned to her with SeMA.}
    \label{tab:dev-time-dist}
\end{table}

\smallskip \textit{RQ4: Do the features used in an app affect development time while using SeMA?} With SeMA, for the seven apps with ICC features based expected vulnerabilities (ICC apps), the average development time was 58 minutes more than the average development time of the six apps without ICC-based expected vulnerabilities (non-ICC apps). Based on two sample two-tailed t-test, the 95\% confidence interval of the difference between the average development time to make ICC apps and non-ICC apps ranges from 21 to 95 minutes. A likely reason for longer development time for ICC-apps is that ICC-apps tend to have more screens and navigation between the screens. Considering the significant difference in development time for at least one set of features (\eg ICC), \textit{the features used in an app are likely to affect the time taken to make the app with SeMA}. 

Based on the average development time using SeMA, the development of baseline apps took more time than the development of non-baseline apps (see the \textit{average} row in \Fref{tab:dev-time-dist}). The longer development time for the baseline app is likely due to the methodology's novelty, considering the baseline app was the first app participants developed using SeMA. This observation suggests that development time reduced as participants became familiar with SeMA. However, development time did not necessarily decrease consistently across all apps made with SeMA. For example, participant 1803 took more time to develop the third app than the second app. This increase was likely because the third app had more features than the second app, making the third app more complicated. Specifically, the third app was more complicated since it had features related to user registration and authentication, which were absent in the second app. Hence, \textit{the reduction in development time with SeMA seems to be dependent on the familiarity with SeMA and the features used in the app}. 

\begin{table}
    \centering
    \begin{tabular}{rrrrrrr}
    \toprule
    Participant & N & S & T & W & K & D\\
    \midrule
    404 & 3 & 2 & 5 & 0.6 & 0.4 & 0.2\\
    704 & 2 & 0 & 5 & 0.4 & 0 & 0.4\\
    1004 & 2 & 2 & 5 & 0.4 & 0.4 & 0\\
    1503 & 2 & 3 & 5 & 0.4 & 0.6 & -0.2\\
    1803 & 2 & 2 & 5 & 0.4 & 0.4 & 0\\
    1904 & 2 & 3 & 5 & 0.4 & 0.4 & 0\\
    2103 & 1 & 1 & 5 & 0.2 & 0.2 & 0\\
    2403 & 2 & 0 & 5 & 0.4 & 0 & 0.4\\
    2703 & 2 & 1 & 5 & 0.4 & 0.2 & 0.2\\
    3103 & 2 & 1 & 5 & 0.4 & 0.2 & 0.2\\
    \bottomrule
    Average & & & & 0.4 & 0.3 & 0.1\\
    \bottomrule
    \end{tabular}
    \caption{Proportion of expected vulnerabilities introduced in the baseline app by a participant with and without SeMA. N indicates the no. of expected vulnerabilities introduced by a participant in the baseline app without using SeMA. S indicates the no. of expected vulnerabilities introduced by a participant in the baseline app using SeMA. T indicates the no. of expected vulnerabilities in the baseline app. W indicates the proportion of expected vulnerabilities introduced by a participant in the baseline app without SeMA. K indicates the proportion of expected vulnerabilities introduced by a participant in the baseline app with SeMA. D is the difference between W and K.}
    \label{tab:baseline-vul-prop}
\end{table}

\smallskip \textit{RQ5: Does SeMA detect specific (expected) vulnerabilities introduced in an app's storyboard?}
The participants in this study introduced 91 of 177 instances of expected vulnerabilities in 30 app storyboards while using SeMA (see \textit{Added Vulns.} column in \Fref{tab:dev-time-dist}). For example, participants introduced all instances of expected vulnerabilities in the Web category (see \Fref{tab:vul-freq-dist}). SeMA detected and reported a violation for every one of the 91 instances. Hence, \textit{SeMA is highly likely to detect expected vulnerabilities introduced in a storyboard.}

We anticipated SeMA would prevent 38 instances of expected vulnerabilities -- two instances of I1, 16 instances of I4, two instances of I10, and 18 instances of N8 -- due to good defaults provided by SeMA, e.g., certificate pinning is enabled by default. Since participants overrode the defaults in three instances -- two instances of I4 and one instance of I10, 35 of the 38 instances of expected vulnerabilities were prevented by the good defaults in SeMA. Hence, \textit{the defaults built into SeMA are likely to prevent expected vulnerabilities.}

Overall, since SeMA detected and prevented 126 of 177 (71\%) instances of expected vulnerabilities in the considered sample, we can conclude \textit{SeMA is likely to detect and prevent expected vulnerabilities in an app's storyboard.}

Participants did not introduce 51 instances of expected vulnerabilities -- one in Crypto, five in ICC, 17 in Networking, 27 in Storage, and one in System -- while using SeMA. For example, in the Networking category, 15 of 18 instances of W2 were not introduced since participants decided to use HTTPS over HTTP when communicating with a remote server. In Storage category, when participants introduced S2 (writing to external storage), SeMA flagged the vulnerability. Since this informed the participants about the pitfalls of using external storage, they were more cautious with the further use of external storage.  Consequently, only 5 of the 19 expected instances of S1 (reading from external storage) vulnerability were introduced by the participants. Hence, \textit{prior knowledge of vulnerabilities is likely to affect the introduction of vulnerabilities and, consequently, the lower the observed effectiveness of SeMA.}  On the positive side,  unlike passive interventions like lists of vulnerabilities, \textit{SeMA is likely to serve as an active intervention and improve developer awareness about vulnerabilities}. 


\smallskip\textit{RQ6: Does the use of SeMA (instead of the usual Android app development process) increase the likelihood of introducing expected vulnerabilities}? The average percentage of introducing an expected vulnerability in the baseline app without SeMA was 40\% across all participants (column W in \Fref{tab:baseline-vul-prop}). With SeMA, the average percentage of introducing an expected vulnerability in the baseline app was 30\% (column K in \Fref{tab:baseline-vul-prop}). The median and mean difference between the percentage of expected vulnerabilities introduced with and without SeMA in the baseline app for each participant is 0\% and 10\%, respectively (see column D of \Fref{tab:baseline-vul-prop}). Further, based on paired two-tailed t-test, there is no significant difference between the mean percentage of introducing an expected vulnerability in the baseline app with and without SeMA (p-value = 0.17 with 95\% confidence interval [-6\%, 26\%]). Consequently, \textit{SeMA is not likely to introduce any more or less vulnerabilities compared to the prevalent app development process.}


\begin{table}
    \centering
    \begin{tabular}{lcrrr}
    \toprule
    Category & Benchmark ID & \# Apps (A) & \# Expected (E) & \# Introduced (I) \\
    \midrule
    Crypto & C3 & 1 & 2 & 1\\
    \midrule
    \multirow[t]{7}{*}{ICC} & I1 & 1 & 2 & 0\\
    & I3 & 2 & 4 & 3\\
    & I4 & 4 & 16 & 2\\
    & I7 & 5 & 9 & 9\\
    & I8 & 1 & 2 & 0\\
    & I9  & 1 & 2 & 0\\
    & I10  & 1 & 2 & 1\\
    \hline
    ICC Ave. & & 2.1 & 5.3 & 2.1\\
    \midrule
    \multirow[t]{4}{*}{Networking} & N6 & 1 & 2 & 1\\
    & N7 & 1 & 2 & 1\\
    & N8 & 6 & 18 & 0\\
    & W2 & 7 & 18 & 3\\
    \hline
    Networking Ave. & & 3.8 & 10 & 1.25\\
    \midrule
    P2 & Permission & 1 & 2 & 2\\
    \midrule
    \multirow[t]{4}{*}{Storage} & S1 & 6 & 19 & 5\\
    & S2 & 7 & 20 & 10\\
    & S3 & 3 & 6 & 5\\
    & S4 & 3 & 6 & 4\\
    \hline
    Storage Ave. & & 4.8 & 12.8 & 6\\
    \midrule
    \multirow[t]{2}{*}{System} & Y5 & 1 & 2	& 2\\
    & Y7 & 1 & 1	& 0\\
    \hline
    System Ave. & & 1 & 1.5 & 1\\
    \midrule
    \multirow[t]{7}{*}{Web} & W1 & 4 & 6 & 6\\
    & W3 & 4 & 6 & 6\\
    & W4 & 4 & 6 & 6\\
    & W6 & 4 & 6 & 6\\
    & W8 & 4 & 6 & 6\\
    & W9 & 4 & 6 & 6\\
    & W10 & 4 & 6 & 6\\
    \hline
    Web Ave. & & 4 & 6 & 6\\
    \bottomrule
    Total Ave. & & 3.1 & 6.8 & 3.5\\
    \bottomrule
    \end{tabular}
    \caption{Frequency distribution of each Ghera vulnerability. A indicates no. of apps with an expected vulnerability. E indicates no. of times a vulnerability was expected to be introduced. I indicates no. of times an expected vulnerability was introduced and prevented.}
    \label{tab:vul-freq-dist}
\end{table}

\smallskip \textit{RQ7: Does software development experience affect the vulnerabilities introduced while using SeMA?} If we consider the participants' software development experience, then the three participants in the +2DX group introduced 53\% of 50 expected vulnerabilities while using SeMA. The seven participants in the -2DX group introduced 50\% of 127 expected vulnerabilities while using SeMA. The difference in the percentage of introducing an expected vulnerability between the two groups of participants is not statistically significant (p-value = 0.44 with 95\% confidence interval [-10\%, 22\%]). Hence, \textit{software development experience does not affect the proportion of expected vulnerabilities introduced by a developer in an app while using SeMA}.

\smallskip \textit{RQ8: Does security-related feature development experience affect the vulnerabilities introduced while using SeMA?} The four participants with one year or more experience in security-related feature development introduced 54\% of 62 expected vulnerabilities while using SeMA. The six participants with less than a year's experience introduced 49\% of 115 expected vulnerabilities while using SeMA. Further, there was no significant difference between the proportion of expected vulnerabilities introduced by the two groups of participants (p-value = 0.72 with 95\% confidence interval [-13\%, 17\%]). A possible reason for the absence of a statistically significant difference is, while the participants in the +1SFDX group have prior knowledge of security features, they do not have experience in security feature development in the context of Android apps. Hence, \textit{security-related feature development experience not specific to Android does not affect the proportion of expected vulnerabilities introduced by a developer in an app while using SeMA}.

\subsubsection{Observations from Developement Experience Survey Data} As mentioned previously, we conducted a survey to collect responses from participants about their experience using SeMA. In the following paragraphs, we present our observations based on a qualitative analysis of the participants' responses. The participant responses are summarized in \Fref{tab:dev-survey}

\paragraph{Did the SeMA extension to Android Studio help/impede in Android app development?} All participants found the extensions to Android Studio helpful. These extensions included the extensions added to a storyboard (described in \Fref{sec:method}) and capabilities related to generating code from a storyboard. Seven of the ten participants said that the extensions helped them \textit{a lot}. The remaining three said that the extensions aided their workflow \textit{quite a bit}. On the flip side, all participants, except one, said that the extensions did not impede their development. Even the one exception said that the extensions only impeded \textit{a little}. Hence, \textit{the participants in this study found the SeMA extension to Android Studio to be largely useful}.

\paragraph{Did the property annotation feature of SeMA help/impede in Android app development?} Eight of the ten participants found that the property annotations in a storyboard (\eg constraints on transitions) helped them in Android app development \textit{a lot}. The other two felt that the annotations helped them \textit{quite a bit}. Only one of the ten participants said that the annotations impeded her development \textit{a little}. \textit{The participant responses show that the property annotation feature largely aided in app development and did not impede their workflow}.

\paragraph{Did the property checking feature of SeMA help/impede in Android app development?} All the participants, except one, said that the property checking feature in SeMA helped them \textit{a lot}. The one exception said that it helped her \textit{quite a bit}. However, five participants said that it impeded their development \textit{a little} and \textit{quite a bit} while the remaining five said that the property checking feature did not impede at all. Since the property checking feature in SeMA forces developers to fix property violations before moving to the next step, a few participants found it impeding their workflow. Hence, \textit{while property checking was helpful for most participants; there is room for improving the process of reporting and fixing violations}.

\paragraph{Did the pre-defined properties provided by SeMA help/impede in Android app development?} Six of the ten participants felt that the pre-defined properties (\eg resources) helped their development \textit{a lot}, and four of them felt that they helped \textit{quite a bit}. In a similar vein, six of them found that the pre-defined properties did not impede their workflow at all, and four of them said that they impeded \textit{a little}. Hence, \textit{participants felt that the pre-defined properties largely aided their workflow}. The four participants who found the pre-defined properties to be \textit{a little} invasive said that it was because they had to keep referring to the documentation for understanding how to use them. This problem can be addressed by adding more auto-completion support for SeMA. 

\paragraph{Feedback: What would you change in SeMA?} All the participants agreed that storyboard-driven development helped them in their development. They particularly appreciated the visualization of the storyboard as it helped them conceptualize the app's behavior. Further, most of them said that SeMA helped them uncover vulnerabilities in the app's design that they would not have otherwise discovered. Typical feedback from participants was to improve the code-completion aid for SeMA and provide access to documentation in the IDE.

\begin{table}
    \centering
    \begin{tabular}{lrrrr}
    \toprule
    & \multicolumn{4}{c}{Scale}\\
    Question & None & A little & Quite a bit & A lot\\
    \midrule
    How much did the SeMA extension to Android Studio & 0 & 0 & 3 & 7\\
    \textit{help} in Android app development? & & & & \\
    How much did the SeMA extension to Android Studio & 9 & 1 & 0 & 0\\
    \textit{impede} in Android app development? & & & & \\
    How much did the property annotation feature of SeMA & 0 & 0 & 2 & 8\\
    \textit{help} in Android app development? & & & &\\
    How much did the property annotation feature of SeMA & 9 & 1 & 0 & 0\\
    \textit{impede} in Android app development? & & & &\\
    How much did the pre-defined properties provided by SeMA & 0 & 0 & 4 & 6\\
    \textit{help} in Android app development? & & & &\\
    How much did the pre-defined properties provided by SeMA & 6 & 4 & 0 & 0\\
    \textit{impede} in Android app development? & & & &\\
    How much did the property checking feature of SeMA & 0 & 0 & 1 & 9\\
    \textit{help} in Android app development? & & & &\\
    How much did the property checking feature of SeMA & 5 & 5 & 0 & 0\\
    \textit{impede} in Android app development? & & & &\\
    \bottomrule
    \end{tabular}
    \caption{Questions asked to participants in a survey. The number in each cell indicate the number of participants who chose a particular scale for a question.}
    \label{tab:dev-survey}
\end{table}

\subsubsection{Threats to Validity}
While the results from this study provide useful insights about the usability of the SeMA methodology, the small number of participants and the small number of apps made by each participant used in this study might affect the generalize-ability of the results. This limitation can be addressed by repeating the experiment with a large number of participants or increasing the number of apps made by a participant. 

Despite the participants in this study having varied experience, the selected participants might not be representative of all kinds of real-world developers. This limitation can be addressed by repeating the study with a more varied sample of developers.

While the participants in this study are reasonable proxies for average Android app developers, they did not have Android app development experience outside of learning environments. Hence, the results and observations from this study might change if repeated with participants with experience in Android app development.

The representation of the Ghera vulnerabilities in the sample of real-world apps selected in this study was not uniform; that is, some vulnerabilities appeared more than others. This lack of uniformity could have affected the results. This concern can be addressed by repeating the experiment with a sample of real-world apps containing a more balanced distribution of vulnerabilities. 

While the exercises assigned to the participants were based on real-world apps, it is possible that they were influenced by our knowledge of vulnerabilities that occur in Android apps. This influence could have introduced bias for certain vulnerabilities in the exercises. This limitation can be addressed by repeating the experiment with a different set of exercises. 

While we ensured that each participant received the same intervention regarding Android app development and SeMA, their personal capacities in grokking new material might have affected the way they made an app. This difference could have influenced the final results.

Finally, we silently observed each participant when they were developing an app. This environment might have caused some participants to behave differently, which might have impacted the way they made an app.  The influence of such factors can be verified by conducting studies that consider environment-related aspects.

\section{Open Challenges}

\paragraph{Performance} The current realization and evaluation of SeMA has focused on ensuring the correctness of generated code. So, while SeMA adheres to performance guidelines outlined in Android’s documentation in the generated code, the generated code may not be performant. This concern can be verified by evaluating the runtime performance of generated code.

\paragraph{Property preservation} One challenge that remains to be addressed in the current realization of SeMA is ensuring the generated implementation satisfies the security properties satisfied by the storyboard. This hinges on ensuring (1) the integrity of generated code and (2) developer-added business logic does not violate the security properties verified in the app storyboard. 
The current realization of SeMA deters developers from modifying the generated code. The generated code is kept separate from the developer-added code in Android Studio. If a developer modifies generated code, a warning message is shown to the user as a pop-up. If the developer modifies the generated code in spite of the warning, then the code is re-generated when the developer compiles the app. While these methods discourage the developer from modifying the generated code, they do not prevent it. One way to prevent modifications to generated code is to use techniques (\eg fingerprinting) to check and enforce the integrity of the generated code.

While SeMA deters the modification of generated code, developers can add business logic code in a way that may not guarantee property preservation. This challenge can be tackled by inhibiting the execution of an app upon detecting the violation of security properties (using techniques such as runtime checks and app sandboxing). However, the current realization of SeMA does not address this concern.

\paragraph{Model modification} Updating code after modifying the model, from which the code was derived, is a challenge in MDD ~\cite{Bran:2012}. Since SeMA is based on MDD, this challenge applies to SeMA as well. In the current realization of SeMA, when the storyboard is updated, code (including developer-added code) is re-generated. While this approach makes it difficult to update the storyboard, it can be addressed by extending SeMA with a mechanism to identify and retain developer-added code associated with parts of the storyboard that haven't been modified. 

\section{Artifacts}
\label{sec:art}
The current realization of SeMA along with the instructions to build and use it is available in the public repository \url{https://bitbucket.org/secure-it-i/sema/src/master/}.

The Ghera benchmarks re-built with SeMA as listed in \Fref{tab:results} are available in the public repository \url{https://bitbucket.org/secure-it-i/sema/src/master/ghera-apps/}.

The raw data collected during the usability study, along with the statistical tests used to analyze the collected data are available in the public repository \url{https://bitbucket.org/secure-it-i/sema/src/master/usability-test/}.  

\section{Related Work}

The current prevalent approach to detect vulnerabilities in an Android app is based on analyzing an app's source code ~\cite{Sufatrio:2015}. The proposed methodology is different from this approach since it enables the detection of known vulnerabilities at the design stage of the app development process, thereby preventing the creation of vulnerabilities.

Prior research has explored solutions based on MDD to simplify mobile app development ~\cite{Vaupel:2016}, reduce technical complexity and development cost ~\cite{Hemel:2011}, and enable the development of cross-platform mobile apps ~\cite{Heitkotter:2015, Brambilla:2014, Fatima:2019}.  However, these solutions are not focused on securing mobile apps. Unlike such existing MDD-based approaches to app development, the primary purpose of SeMA is to enable the reasoning of an app's security properties while designing an app.

Most academic efforts related to model-driven development of mobile apps have not explicitly considered security requirements.  Few commercial efforts related to model-driven development of mobile apps have considered security requirements limited to specific security aspects, \eg access control. For instance, the Mendix App Platform allows developers to specify models for app components, define user roles and accesses w.r.t to the app components, and execute the models in a runtime environment to check for access inconsistencies ~\cite{Martin:2010}. In comparison, SeMA enables the verification of properties about different security aspects albeit limited to known vulnerabilities.

Most prior MDD efforts in software and mobile app development are based on existing software artifacts, \eg UML, finite state machines. Basin \etal ~\cite{Basin:2002} extended UML to SecureUML to enable the formal specification of systems' access control requirements. Usman \etal ~\cite{Usman:2014} relied on use case diagrams, UML class diagrams, and UML state machines in Mobile Apps Generator (MAGS) methodology to capture the specification of an app's requirements, structure, and behavior.  Francese \etal ~\cite{Francese:2015} used finite state machines as a modeling substrate to capture the data-flow, control-flow, and user interactions in an app. In AXIOM methodology, Xiaoping \etal ~\cite{Xiaoping:2013} used the Abstract Model Tree (AMT) representation to model platform-independent app behavior and requirements.
Unlike these solutions, instead of relying on artifacts (\eg UML models, finite state machines) that are alien to current mobile app development workflows, SeMA is based on storyboards, an essential artifact in the design and development of mobile apps.

Prior research efforts have highlighted the need for considering security at the early stages of the software development process ~\cite{Geer:2010, IeeeCS:URL, Assal:2018}. This need has manifested in the development of \emph{security requirements engineering}: identifying a system's security-related requirements and checking if the system design satisfies the identified requirements.  In this spirit, Hayley \etal ~\cite{Haley:2008} proposed a framework to define security requirements, context, and assumptions of a system and validate these requirements via formal and informal structured arguments. Haralmbos \etal ~\cite{Haralambos:2007} proposed Secure Tropos, a software development methodology that combines requirements engineering concepts with security-related concepts to aid the design and development of secure software systems.  Riaz \etal ~\cite{Riaz:2016} proposed DIGS framework to help requirements analysts define security goals and verify the completeness and correctness of the requirements w.r.t the defined goals.  SeMA is similar to such solutions as it enables specification and verification of security properties of an app's design.  However, unlike these solutions, the objective of SeMA is to verify security-related requirements stemming from known vulnerabilities.  While this objective limits the security-related requirements that can be tackled with SeMA, it reduces the burden of identifying security-related requirements when using SeMA.  


\section{Conclusion}
In this paper, we explored an alternative (and complementary) approach to develop secure Android apps. This approach focuses on preventing vulnerabilities instead of the traditional approach of detecting vulnerabilities after they have occurred. To this end, SeMA is a design-based methodology based on Model-Driven development and existing design techniques to help build secure Android apps. 

SeMA extends storyboards with features that enable developers and designers to collaborate and specify an app's behavior iteratively, while also reasoning about and verifying security properties related to confidentiality and integrity in an app's design. Furthermore, SeMA has code generation support that helps translate annotated/extended storyboards, specified in SeMA, to an implementation. Developers can enrich the generated implementation with business logic code in Java or Kotlin. 

A proof-of-concept realization of SeMA is available for Android Studio. An empirical evaluation of SeMA shows that SeMA can prevent 49 of 60 vulnerabilities captured in the Ghera benchmark suite through a combination of information flow analysis, rule checking, and code generation techniques.

A usability study of SeMA with ten professional software developers shows that SeMA is highly likely to reduce the cost of development and prevent known Android app vulnerabilities. The same study showed that the vulnerabilities prevented by SeMA are highly likely to be introduced by developers. Hence, SeMA is not only useful in preventing vulnerabilities, but it is also relevant.  
\section{Acknowledgement}

We would like to acknowledge the support from the Android Security Rewards program, which partially funded the research conducted for this paper. The Android Security Rewards program plays a pivotal role in advancing our understanding of mobile security, and we appreciate the opportunity to contribute to this important field.

\bibliographystyle{acm}
\bibliography{refs}

\begin{thebibliography}{10}

\bibitem{TaskAffinityCVE:URL}
{\sc Aditya~Narkar, Joydeep~Mitra, V.-P.~R.}
\newblock Cve-2019-9463, 2019.
\newblock Available at
  \url{https://cve.mitre.org/cgi-bin/cvename.cgi?name=CVE-2019-9463/}.

\bibitem{Xcode:URL}
{\sc Apple}.
\newblock {Storyboards in Xcode}.
\newblock
  \url{https://developer.apple.com/library/archive/documentation/General/Conceptual/Devpedia-CocoaApp/Storyboard.html},
  2019.
\newblock {Accessed: 05-Feb-2019}.

\bibitem{Assal:2018}
{\sc Assal, H., and Chiasson, S.}
\newblock Security in the software development lifecycle.
\newblock In {\em Fourteenth Symposium on Usable Privacy and Security ({SOUPS}
  2018)\/} (2018), {USENIX} Association, pp.~281--296.

\bibitem{Brambilla:2017}
{\sc Brambilla, M., Cabot, J., and Wimmer, M.}
\newblock {\em Model-Driven Software Engineering in Practice: Second Edition},
  2nd~ed.
\newblock Morgan \& Claypool Publishers, USA, 2017.

\bibitem{Brambilla:2014}
{\sc Brambilla, M., Mauri, A., and Umuhoza, E.}
\newblock Extending the interaction flow modeling language (ifml) for model
  driven development of mobile applications front end.
\newblock In {\em Mobile Web Information Systems\/} (Cham, 2014), I.~Awan,
  M.~Younas, X.~Franch, and C.~Quer, Eds., Springer International Publishing,
  pp.~176--191.

\bibitem{checkmarx:URL}
{\sc Checkmarx}.
\newblock How attackers could hijack your android camera to spy on you, 2019.
\newblock Available at
  \url{https://www.checkmarx.com/blog/how-attackers-could-hijack-your-android-camera}.

\bibitem{IeeeCS:URL}
{\sc Design, I. C. F.~S.}
\newblock {Avoiding The Top 10 Software Security Design Flaws}.
\newblock
  \url{https://ieeecs-media.computer.org/media/technical-activities/CYBSI/docs/Top-10-Flaws.pdf},
  2014.
\newblock {Accessed: 01-Jun-2020}.

\bibitem{Mozilla:URL}
{\sc Dongsong, Y.}
\newblock Download arbitrary files to sd card via additional slashes in file:
  Uri, 2014.
\newblock Available at
  \url{https://bugzilla.mozilla.org/show_bug.cgi?id=1050690}.

\bibitem{Fatima:2019}
{\sc Fatima, I., Anwar, M.~W., Azam, F., Maqbool, B., and Tufail, H.}
\newblock Extending interaction flow modeling language (ifml) for android user
  interface components.
\newblock In {\em Information and Software Technologies\/} (Cham, 2019),
  R.~Dama{\v{s}}evi{\v{c}}ius and G.~Vasiljevien{\.{e}}, Eds., Springer
  International Publishing, pp.~76--89.

\bibitem{Francese:2015}
{\sc Francese, R., Risi, M., Scanniello, G., and Tortora, G.}
\newblock Model-driven development for multi-platform mobile applications.
\newblock In {\em Proceedings of the 16th International Conference on
  Product-Focused Software Process Improvement - Volume 9459\/} (Berlin,
  Heidelberg, 2015), Springer-Verlag, p.~61–67.

\bibitem{Gadient:2020}
{\sc Gadient, P., Ghafari, M., Tarnutzer, M.-A., and Nierstrasz, O.}
\newblock Web apis in android through the lens of security.
\newblock In {\em Proc.\ SANER\/} (2020), IEEE, pp.~13--22.

\bibitem{Geer:2010}
{\sc Geer, D.}
\newblock Are companies actually using secure development life cycles?
\newblock {\em Computer 43}, 6 (2010), 12–16.

\bibitem{JetPackNav:URL}
{\sc Google}.
\newblock Navigation, 2019.
\newblock Available at \url{https://developer.android.com/guide/navigation}.

\bibitem{Green:2016}
{\sc Green, M., and Smith, M.}
\newblock Developers are not the enemy! the need for usable security apis.
\newblock {\em IEEE Security and Privacy 14}, 5 (2016), 40–46.

\bibitem{Zomato:URL}
{\sc HackerOne}.
\newblock Theft of user session, 2019.
\newblock Available at \url{https://hackerone.com/reports/328486}.

\bibitem{HackerOne:URL}
{\sc HackerOne}.
\newblock Hackerone -- hack for good, 2020.
\newblock Available at \url{https://www.hackerone.com/}.

\bibitem{Haley:2008}
{\sc {Haley}, C., {Laney}, R., {Moffett}, J., and {Nuseibeh}, B.}
\newblock Security requirements engineering: A framework for representation and
  analysis.
\newblock {\em IEEE Transactions on Software Engineering 34}, 1 (2008),
  133--153.

\bibitem{Heitkotter:2015}
{\sc Heitkotter, H., Kuchen, H., and Majchrzak, T.~A.}
\newblock Extending a model driven cross-platform development approach for
  business apps.
\newblock {\em Science of Computer Programming 97\/} (2015), 31 -- 36.
\newblock Special Issue on New Ideas and Emerging Results in Understanding
  Software.

\bibitem{Hemel:2011}
{\sc Hemel, Z., and Visser, E.}
\newblock Declaratively programming the mobile web with mobl.
\newblock In {\em Proceedings of the 2011 ACM International Conference on
  Object Oriented Programming Systems Languages and Applications\/} (New York,
  NY, USA, 2011), OOPSLA ’11, Association for Computing Machinery,
  p.~695–712.

\bibitem{Martin:2010}
{\sc Henkel, M., and Stirna, J.}
\newblock Pondering on the key functionality of model driven development tools:
  The case of mendix.
\newblock In {\em Perspectives in Business Informatics Research\/} (2010),
  Springer Berlin Heidelberg, pp.~146--160.

\bibitem{IBMSec:URL}
{\sc IBM}.
\newblock Android collapses into fragments, 2014.
\newblock Available at
  \url{https://securityintelligence.com/wp-content/uploads/2013/12/android-collapses-into-fragments.pdf}.

\bibitem{Lint:URL}
{\sc Inc., G.}
\newblock Android lint overview, 2020.
\newblock Available at \url{http://tools.android.com/lint/overview}.

\bibitem{Standards:URL}
{\sc Institute, C. M. S.~E.}
\newblock Android secure coding standards, 2014.
\newblock Available at
  \url{https://wiki.sei.cmu.edu/confluence/display/android/Android+Secure+Coding+Standard}.

\bibitem{Jackson:2002}
{\sc Jackson, D.}
\newblock Alloy: A lightweight object modelling notation.
\newblock {\em ACM Trans. Softw. Eng. Methodol. 11\/} (2002), 256–290.

\bibitem{Xiaoping:2013}
{\sc Jia, X., and Jones, C.}
\newblock Cross-platform application development using axiom as an agile
  model-driven approach.
\newblock In {\em Software and Data Technologies\/} (Berlin, Heidelberg, 2013),
  J.~Cordeiro, S.~Hammoudi, and M.~van Sinderen, Eds., Springer Berlin
  Heidelberg, pp.~36--51.

\bibitem{jan:2002}
{\sc J{\"u}rjens, J.}
\newblock Umlsec: Extending uml for secure systems development.
\newblock In {\em UML 2002 --- The Unified Modeling Language\/} (2002), J.-M.
  J{\'e}z{\'e}quel, H.~Hussmann, and S.~Cook, Eds., Springer Berlin Heidelberg,
  pp.~412--425.

\bibitem{Lamport:2015}
{\sc Lamport, L.}
\newblock Who builds a house without drawing blueprints?
\newblock {\em Commun. ACM 58}, 4 (Mar. 2015), 38–41.

\bibitem{Li:2017}
{\sc Li, L., Bissyandé, T.~F., Papadakis, M., Rasthofer, S., Bartel, A.,
  Octeau, D., Klein, J., and Traon, L.}
\newblock Static analysis of android apps: A systematic literature review.
\newblock {\em Information and Software Technology 88\/} (2017), 67 -- 95.

\bibitem{UXMag:URL}
{\sc Little, A.}
\newblock {Storyboarding in the Software Design Process}.
\newblock
  \url{https://uxmag.com/articles/storyboarding-in-the-software-design-process},
  2013.
\newblock {Accessed: 05-Feb-2019}.

\bibitem{Basin:2002}
{\sc Lodderstedt, T., Basin, D., and Doser, J.}
\newblock Secureuml: A uml-based modeling language for model-driven security.
\newblock In {\em UML 2002 --- The Unified Modeling Language\/} (Berlin,
  Heidelberg, 2013), Springer Berlin Heidelberg, pp.~36--51.

\bibitem{Luo:2019}
{\sc {Luo}, L., {Bodden}, E., and {Spath}, J.}
\newblock A qualitative analysis of android taint-analysis results.
\newblock In {\em 2019 34th IEEE/ACM International Conference on Automated
  Software Engineering (ASE)\/} (2019), IEEE, pp.~102--114.

\bibitem{Marakana:URL}
{\sc Marakana}.
\newblock {Android App Development Tutorials}.
\newblock \url{https://www.youtube.com/playlist?list=PLC6F613CB266444FF}, 2020.
\newblock {Accessed: 01-May-2020}.

\bibitem{Mitra:2017}
{\sc Mitra, J., and Ranganath, V.-P.}
\newblock Ghera: A repository of android app vulnerability benchmarks.
\newblock In {\em Proceedings of the 13th International Conference on
  Predictive Models and Data Analytics in Software Engineering\/} (2017), ACM,
  pp.~43--52.

\bibitem{Haralambos:2007}
{\sc "Mouratidis, H., and Giorgini, P.}
\newblock "secure tropos: A security-oriented extension of the tropos
  methodology".
\newblock In {\em "International Journal of Software Engineering and Knowledge
  Engineering"\/} ("2007"), "World Scientific", pp.~"285--309".

\bibitem{Owasp:URL}
{\sc OWASP}.
\newblock Mobile security project, 2017.
\newblock Available at
  \url{https://wiki.owasp.org/index.php/Projects/OWASP_Mobile_Security_Project_-_Top_Ten_Mobile_Controls}.

\bibitem{Pauck:2018}
{\sc Pauck, F., Bodden, E., and Wehrheim, H.}
\newblock Do android taint analysis tools keep their promises?
\newblock In {\em Proceedings of the 2018 26th ACM Joint Meeting on European
  Software Engineering Conference and Symposium on the Foundations of Software
  Engineering\/} (2018), ACM, pp.~331--341.
\newblock \url{https://foellix.github.io/ReproDroid/}.

\bibitem{DesTech:URL}
{\sc Prudnikov, D.}
\newblock {UX Design Techniques for Mobile Apps}.
\newblock
  \url{https://yalantis.com/blog/ux-design-techniques-mobile-app-design/},
  2019.
\newblock {Accessed: 05-Feb-2019}.

\bibitem{Ranganath:EMSE19}
{\sc Ranganath, V.-P., and Mitra, J.}
\newblock Are free android app security analysis tools effective in detecting
  known vulnerabilities?
\newblock {\em Empirical Software Engineering\/} (2019).

\bibitem{Riaz:2016}
{\sc Riaz, M., Stallings, J., Singh, M.~P., Slankas, J., and Williams, L.}
\newblock Digs: A framework for discovering goals for security requirements
  engineering.
\newblock In {\em Proceedings of the 10th ACM/IEEE International Symposium on
  Empirical Software Engineering and Measurement\/} (New York, NY, USA, 2016),
  Association for Computing Machinery.

\bibitem{Richters:2002}
{\sc Richters, M., and Gogolla, M.}
\newblock Ocl: Syntax, semantics, and tools.
\newblock In {\em Object Modeling with the OCL, The Rationale behind the Object
  Constraint Language\/} (Berlin, Heidelberg, 2002), Springer-Verlag,
  p.~42–68.

\bibitem{sadeghi:2017}
{\sc {Sadeghi}, A., {Bagheri}, H., {Garcia}, J., and {Malek}, S.}
\newblock A taxonomy and qualitative comparison of program analysis techniques
  for security assessment of android software.
\newblock {\em IEEE Transactions on Software Engineering 43}, 6 (June 2017),
  492--530.

\bibitem{Schaffer:2016}
{\sc {Schaffer}, K., and {Voas}, J.}
\newblock What happened to formal methods for security?
\newblock {\em IEEE International Conference on Software Maintenance (ICSM)
  49}, 8 (2016), 70--79.

\bibitem{Bran:2012}
{\sc Selic, B.}
\newblock What will it take? a view on adoption of model-based methods in
  practice.
\newblock {\em Softw. Syst. Model. 11}, 4 (2012), 513–526.

\bibitem{Sketch:URL}
{\sc Sketch}.
\newblock {Sketch Design ToolKit}.
\newblock \url{https://www.sketchapp.com/}, 2019.
\newblock {Accessed: 05-Feb-2019}.

\bibitem{Sufatrio:2015}
{\sc Sufatrio, Tan, D. J.~J., Chua, T.-W., and Thing, V. L.~L.}
\newblock Securing android: A survey, taxonomy, and challenges.
\newblock {\em ACM Comput. Surv.\/} (2015), 58:1--58:45.

\bibitem{ptSec:URL}
{\sc Technologies, P.}
\newblock Vulnerabilities and threats in mobile applications, 2019, 2019.
\newblock Available at
  \url{https://www.ptsecurity.com/ww-en/analytics/mobile-application-security-threats-and-vulnerabilities-2019/}.

\bibitem{Trapp:2013}
{\sc Trapp, M., and Yasmin, R.}
\newblock Addressing animated transitions already in mobile app storyboards.
\newblock In {\em Design, User Experience, and Usability. Web, Mobile, and
  Product Design\/} (2013), A.~Marcus, Ed., Springer Berlin Heidelberg,
  pp.~723--732.

\bibitem{Usman:2014}
{\sc {Usman}, M., {Iqbal}, M.~Z., and {Khan}, M.~U.}
\newblock A model-driven approach to generate mobile applications for multiple
  platforms.
\newblock In {\em 2014 21st Asia-Pacific Software Engineering Conference\/}
  (2014), vol.~1, pp.~111--118.

\bibitem{Vaupel:2016}
{\sc Vaupel, S., Taentzer, G., Gerlach, R., and Guckert, M.}
\newblock Model driven development of platform-independent mobile applications
  supporting role-based app variability.
\newblock In {\em Software Engineering 2016\/} (2016), Gesellschaft für
  Informatik e.V., pp.~99--100.

\bibitem{Yingjie:ACM19}
{\sc Wang, Y., Liu, X., Mao, W., and Wang, W.}
\newblock Dcdroid: Automated detection of ssl/tls certificate verification
  vulnerabilities in android apps.
\newblock In {\em Proceedings of the ACM Turing Celebration Conference -
  China\/} (2019), ACM TURC ’19, Association for Computing Machinery.

\bibitem{Wayne:2018}
{\sc Wayne, H.}
\newblock {\em Practical TLA+: Planning Driven Development}, 1st~ed.
\newblock Apress, USA, 2018.

\end{thebibliography}

\appendix
\section{Formal Syntax of SeMA}
\label{sec:syntax}

Before specifying the syntax of the storyboard language, we first list the sets needed to specify the syntax. Boolean Values BV = \{\textit{true}, \textit{false}\}, non-boolean values V, Apps APP, Screens S, Proxy screens PS, App Identifiers APPID, Screen Identifiers SID, URIs U, URI strings US, Widgets W, Widget Identifiers WID, Widget Types WT, Gestures on widgets G, User actions UA, Transitions TR, Transition Identifiers TID, Boolean expressions B, Resources R, Resource Names RN, Capabilities provided by resources C, Capability Names CN, argument to operations ARG, parameter names of screens VID, Operations F, Operation Names FN, and Input Parameters of screens P.

Assume that the syntactic structure of values, identifiers, names, and URI strings is given. For example, identifiers consist of non-empty string of letters. The remaining syntactic sets are defined inductively via formation rules shown below. 
\begin{figure}
    \centering
    \begin{grammar}
<\textit{app}> ::= \textit{appId} s r?

<\textit{s}> ::= \keyword{screen} \textit{sid} u? w tr? | \keyword{proxy} \keyword{safe}? \textit{sid} \textit{appId}? \textit{us}

<\textit{w}> ::= \keyword{safe}? wt wid (\textit{v} | f | \textit{vid}) | $w_0w_1$

<\textit{wt}> ::= \keyword{TextView} | \keyword{EditText} | \keyword{Button} | \keyword{WebView}

<\textit{f}> ::= \keyword{fun} \textit{fn} (\textit{rn.cn})? a?

<\textit{r}> ::= (\keyword{access} \keyword{all} | \keyword{user} | \keyword{own}) resource rn c | $r_0r_1$

<\textit{c}> ::= \keyword{priv}? \textit{cn} f | $c_0c_1$

<\textit{a}> ::= (\keyword{safe}? \textit{v} | \textit{vid} | \textit{wid} | f) | $a_0a_1$

<\textit{tr}> ::= \keyword{transition} \textit{tid} \keyword{dest} \textit{sid} \keyword{cond} ua (\keyword{and} b)? p? | $tr_0tr_1$

<\textit{ua}> ::= \textit{wid}.g

<\textit{g}> ::= \keyword{click} | \keyword{swipe} | \keyword{drag}

<\textit{p}> ::= \keyword{param} \textit{vid} (\textit{wid} | $vid_0$ | \textit{v} | f) | $p_0p_1$

<\textit{u}> ::= \textit{us} | \textit{us/k} | $u_0u_1$

<\textit{k}> ::= $vid$ | $k_0k_1$

<\textit{b}> ::= \textit{bv} | f | $b_0$ \keyword{and} $b_1$ | $b_0$ \keyword{or} $b_1$ | \keyword{not} b
\end{grammar}
    \caption{Formation Rules}
    \label{fig:gram}
\end{figure}

The formation rules use the following meta-variables to range over the syntactic sets:

\textit{bv} ranges over BV, \textit{v} ranges over V, \textit{app} ranges over APP, \textit{appid} ranges over APPID, \textit{vid} ranges over VID, \textit{wid} ranges over WID, \textit{sid} ranges over SID, \textit{a} ranges over ARG, \textit{s} ranges over S, \textit{ps} ranges over PS, \textit{w} ranges over W, \textit{wt} ranges over WT, \textit{tr} ranges over TR, \textit{b} ranges over B, \textit{r} ranges over R, \textit{rn} ranges over RN, \textit{f} ranges over F, \textit{p} ranges over P, \textit{fn} ranges over FN, \textit{tid} ranges over TID, \textit{u} ranges over U, \textit{us} ranges over US, \textit{c} ranges over C, \textit{cn} ranges over CN, and \textit{g} ranges over G.

The meta-variables can be sub-scripted or primed e.g. $s^\prime$, $s_0$ stands for an element in the set S. The symbol $?$ is used to denote an \textit{optional} syntactic construct. The formation rules are presented in a variant of BNF (Backus-Naur form) in \Fref{fig:gram}.

  
  

\section{Formal Semantics of SeMA}
\label{sec:semantics}
Before describing the small-step operational semantics of a storyboard in SeMA, we introduce the function and operators in the meta-language that will be needed to understand the semantics.

\begin{enumerate}

    \item $\phi : ID \rightarrow SID$, where $ID = VID \cup WID$ is a function that is used to keep track of the screen associated with a given widget or variable identifier. For a widget identifier $wid \in WID$, it returns a screen identifier $sid$ where $wid$ is a widget in $sid$. For a variable identifier $vid \in VID$, it returns a screen identifier $sid$ such that $vid$ as an input parameter of $sid$.
    
    \item $gen(id,sid)$ is an operator that takes a pair $(id,sid) : id \in ID, sid \in SID$ and returns a new $id\prime \in ID$.

    \item $\sigma : ID \rightharpoonup V$ is a partial function that maps IDs to Values and is used to denote the state of the app. This function keeps track of the values assigned to widget and variable identifiers in a screen.
    
    \item $init : APPID \rightarrow SID$ is a function that maps an \textit{appId} to the start screen identifier \textit{sid} of an app. The start screen of an app is the screen that the user sees when the app is started for the first time. This function returns the screen ID in an app that needs to be displayed when the app is started. 
    
    \item $eval : F \rightarrow X$, where $X = V \cup BV$ is a function that returns a \textit{non-boolean value} or a \textit{boolean value} for an operation $f \in F$. This function interprets an operation used in the storyboard.
    
    \item $evalUriVar : ID \rightharpoonup V$ is a partial function that returns the value associated with a variable identifier associated with the URI of a screen. The value is provided by an external app that uses the URI to trigger the corresponding screen. If no external app provides a value for a variable identifier $id$, then $evalUriVar(id) = \perp$.
    
    \item $outTr : SID \rightarrow TR$ is a function that returns all outgoing transitions $tr \in TR$ from a screen $sid \in SID$. 
    
    \item $occur : UA \rightarrow BV$ is a predicate that is \textit{true} if the user performed a gesture (\eg button click) and is \textit{false} if the gesture was not performed. This predicate is used to capture any gestures a user may perform on the widgets in a screen visible to the user.
    
    \item $validScr : (SID) \rightarrow BV$ is a predicate that is \textit{true} if a given $sid \in SID$ is the ID of a screen in the app; and \textit{false} otherwise.
    
    \item $compr : (h,h^\prime) \rightarrow BV$ is a predicate that is \textit{true} if $\forall x \in dom(h) : (h(x) = v \land h^\prime(x) = \perp) \lor (h(x) = \perp \land h^\prime(x) = v)$; \textit{false} otherwise.
    
    \item $stop : (APPID) \rightarrow BV$ is a predicate that is \textit{true} if an app with $appId \in APPID$ is moved to the background or is killed explicitly by the user or Android; and \textit{false} otherwise.
    
    \item $\rho : (RN,CN) \rightarrow BV$ is a predicate that is \textit{true} if a capability identifier $cn \in CN$ offered by a resource identifier $rn \in RN$ is defined; and \textit{false} otherwise.
    
    \item $erase(h,h^\prime)$ is a binary operator that takes two partial functions h,$h^\prime$ and returns another partial function g such that $\forall x \in dom(h^\prime): g(x) = h^\prime(x)$, but $g(x) = \perp$ when $h(x) \neq \perp \land h^\prime(x) \neq \perp$. The operator is required to modify the state of the app, $\sigma$, when the transition to a screen is taken.
    
    \item $erase_K(h,h\prime)$ is a binary operator that takes two partial functions h,$h^\prime$ and returns another partial function g such that $\forall x \in K, K \subseteq dom(h^\prime) : g(x) = h^\prime(x)$, but $g(x) = \perp$ when $h(x) \neq \perp \land h^\prime(x) \neq \perp$. The operator is used to change the state of the app, $\sigma$, when a transition from a screen to itself is taken.

\end{enumerate}

In addition to the meta-functions and meta-operators, we assume that the custom resources defined in a storyboard are parsed beforehand and stored as the app's custom resources. Hence, resource definitions are not explicitly described in the semantics presented below.

Every syntactic construct defined in \Fref{sec:syntax} is evaluated in a configuration of the form $\st{sid}{\sigma}$, where $sid$ denotes the \textit{current screen} visible to the user of the app and $\sigma$ denotes the state of the app when the user is at screen $sid$. The \textit{initial configuration} of the app $\st{\perp}{\sigma}$.

\begin{table}
    \centering
    \begin{tabular}{c}
    \toprule
     \vbox{\begin{equation}
    \label{eq:app}
    \inferrule
        {init(appId) = sid}
        {\srule{\app{s}{}{r}}{\st{\perp}{\sigma}}{\langle \app{s}{}{r}, \st{sid}{\sigma}\rangle}}
    \end{equation}} \\
    
    \vbox{\begin{equation}
    \label{eq:appStop}
    \inferrule
        {stop(appId) = \textit{true}}
        {\srule{\app{s}{}{r}}{\st{sid}{\sigma}}{\st{\perp}{erase(\sigma,\sigma)}}}
\end{equation}} \\

\vbox{\begin{equation}
    \label{eq:appSid}
    \inferrule
        {stop(appId) = \textit{false} \\ \srule{s}{\st{sid}{\sigma}}{\st{sid^\prime}{\sigma^\prime}}}
        {\srule{\app{s}{}{r}}{\st{sid}{\sigma}}{\langle \app{s}{}{r}, \st{sid^\prime}{\sigma^\prime}\rangle}}
\end{equation}} \\
    \bottomrule
    \end{tabular}
    \caption{App-related Semantic Rules.}
    \label{tab:app-related-rules}
\end{table}

\paragraph{App-related rules} The rules listed in \Fref{tab:app-related-rules} describe an app's behavior when a user starts an app, interacts with the app, or closes the app. 

When an app is started, as per rule \ref{eq:app}, it moves from its initial configuration to a configuration where the current screen is set to $sid \in SID$. $sid$ is obtained from the meta-function $init$.

As per rule \ref{eq:appStop}, if an app is stopped by the user or Android, then the terminal configuration $\st{\perp}{\sigma}$, such that $\forall x \in dom(\sigma) : \sigma(x) = \perp$, is reached. This configuration indicates that no screen of the app is visible to the user and the state of the app is undefined.

As per rule \ref{eq:appSid}, an app in a configuration $\st{sid}{\sigma}$ moves to a new configuration if a screen $sid$ in the app changes the configuration.

\begin{table}
    \centering
    \begin{tabular}{c}
    \toprule
     \vbox{\begin{equation}
    \label{eq:screenSeq0}
    \inferrule
        {\srule{s_0}{\st{sid}{\sigma}}{\st{sid^\prime}{\sigma^\prime}}}
        {\srule{s_0s_1}{\st{sid}{\sigma}}{\st{sid^\prime}{\sigma^\prime}}}
\end{equation}} \\
    
    \vbox{\begin{equation}
    \label{eq:screenSeq1}
    \inferrule
        {\srule{s_1}{\st{sid}{\sigma}}{\st{sid^\prime}{\sigma^\prime}}}
        {\srule{s_0s_1}{\st{sid}{\sigma}}{\st{sid^\prime}{\sigma^\prime}}}
\end{equation}} \\

\vbox{\begin{equation}
    \label{eq:screen}
    \inferrule
        {\srule{w}{\st{sid}{\sigma}}{\st{sid}{\sigma^\prime}}\\
        \srule{tr}{\st{sid}{\sigma^\prime}}{\st{sid^\prime}{\sigma^{\prime\prime}}}}
        {\srule{\screen{sid}{u}{w}{tr}}{\st{sid}{\sigma}}{\st{sid^\prime}{\sigma^{\prime\prime}}}}
\end{equation}} \\

\vbox{\begin{equation}
    \label{eq:screenNoTran}
    \inferrule
        {\srule{w}{\st{sid}{\sigma}}{\st{sid}{\sigma^\prime}}}
        {\srule{\screen{sid}{u}{w}}{\st{sid}{\sigma}}{\st{sid}{\sigma^\prime}}}
\end{equation}} \\

\vbox{\begin{equation}
    \label{eq:ps}
    \inferrule
        {}
        {\srule{\ps{sid}{appId}{u}}{\st{sid}{\sigma}}{\st{sid}{\sigma}}}
\end{equation}} \\
    \bottomrule
    \end{tabular}
    \caption{Screen-related Semantic Rules.}
    \label{tab:screen-related-rules}
\end{table}

\paragraph{Screen-related rules.} The rules listed in \Fref{tab:screen-related-rules} apply to an app's screens. They are informally described in \Fref{sec:informal-sem} as screen extensions.

Rule \ref{eq:screenSeq0} is applicable when the app is in a configuration $(sid,\sigma)$ and $sid$ is the identifier of the first screen $s_0$ in the sequence $s_0s_1$.

Rule \ref{eq:screenSeq1} is applicable when the app is in a configuration $(sid,\sigma)$ and $sid$ is the identifier of some screen in the sequence $s_0s_1$.

As per rule \ref{eq:screen}, an app moves from current screen $sid$ with state $\sigma$ to a screen $sid\prime$ with state $\sigma\prime\prime$ when the widgets in the screen $sid$ extend the state $\sigma$ to $\sigma\prime$ and one of the outgoing transitions sets the current screen to $sid\prime$ with new state $\sigma\prime\prime$.

As per rule \ref{eq:screenNoTran}, a screen with no outgoing transitions causes the current state $\sigma$ to change to a new state $\sigma\prime$ when the widgets in the screen $sid$ extend the state $\sigma$ to $\sigma\prime$.

A proxy screen does not result in any configuration change as per rule \ref{eq:ps}.

\begin{table}
    \centering
    \begin{tabular}{c}
    \toprule
     \vbox{\begin{equation}
    \label{eq:wdgSeq}
    \inferrule
        {\srule{w_0}{\st{sid}{\sigma}}{\st{sid}{\sigma^\prime}} \\
        \srule{w_1}{\st{sid}{\sigma^\prime}}{\st{sid}{\sigma^{\prime\prime}}}}
        {\srule{w_0w_1}{\st{sid}{\sigma}}{\st{sid}{\sigma^{\prime\prime}}}}
\end{equation}} \\

\vbox{\begin{equation}
    \label{eq:wdgVar}
    \inferrule
        {\srule{x}{\st{sid}{\sigma}}{v}}
        {\srule{\widget{wt}{wid}{x}}{\st{sid}{\sigma}}{\st{sid}{\sigma[gen(wid,sid)\mapsto v]}}}, x ::= vid \hspace{1mm} or \hspace{1mm} f
\end{equation}}\\

\vbox{\begin{equation}
    \label{eq:wdgInvVar}
    \inferrule
        {\srule{vid}{\st{sid}{\sigma}}{\perp}}
        {\srule{\widget{wt}{wid}{vid}}{\st{sid}{\sigma}}{\st{sid}{\sigma}}}
\end{equation}} \\

\vbox{\begin{equation}
    \label{eq:wdgVal}
    \inferrule
        {}
        {\srule{\widget{wt}{wid}{v}}{\st{sid}{\sigma}}{\st{sid}{\sigma[gen(wid,sid)\mapsto v]}}}
\end{equation}} \\
    \bottomrule
    \end{tabular}
    \caption{Widget-related Semantic Rules.}
    \label{tab:widget-related-rules}
\end{table}

\paragraph{Widget-related rules} The rules listed in \Fref{tab:widget-related-rules} correspond to linking widgets to data sources in a screen. These rules are informally described as \textit{Widget Extensions} in \Fref{sec:informal-sem}.

As per rule \ref{eq:wdgSeq}, a sequence of widgets in a screen $sid$ extends the corresponding state $\sigma$ to $\sigma\prime\prime$ if each widget in the sequence extends $\sigma$.

As per rule \ref{eq:wdgVar}, a widget $wid$ extends state $\sigma$ with value $v$ if $wid$ is initialized with a variable $vid$ or an operation $f$, and $vid$ or $f$ evaluates to $v$ in state $\sigma$. 

As per rule \ref{eq:wdgInvVar}, a widget $wid$ does not change the current configuration if $wid$ is initialized with an invalid variable \ie the variable is undefined in $\sigma$.

As per rule \ref{eq:wdgVal}, a widget $wid$ extends state $\sigma$ with value $v$ if $wid$ is initialized with the value $v$.

\begin{table}
    \centering
    \begin{tabular}{c}
    \toprule
     \vbox{\begin{equation}
    \label{eq:trSeq1}
    \inferrule
        {\srule{tr_0}{\st{sid}{\sigma}}{\st{sid^\prime}{\sigma^\prime}} \\ compr(\sigma,\sigma^\prime)}
        {\srule{tr_0tr_1}{\st{sid}{\sigma}}{\st{sid^\prime}{\sigma^\prime}}}
\end{equation}} \\

\vbox{\begin{equation}
    \label{eq:trSeq2}
    \inferrule
        {\srule{tr_0}{\st{sid}{\sigma}}{\st{sid}{\sigma}} \\
        \srule{tr_1}{\st{sid}{\sigma}}{\st{sid^\prime}{\sigma^\prime}}}
        {\srule{tr_0tr_1}{\st{sid}{\sigma}}{\st{sid^\prime}{\sigma^\prime}}}
\end{equation}} \\

\vbox{\begin{equation}
    \label{eq:tr}
    \inferrule
        {\srule{ua}{\st{sid}{\sigma}}{\textit{true}} \\ \srule{b}{\st{sid}{\sigma}}{\textit{true}} \\
        validScr(sid^\prime) = \textit{true} \\
        \srule{p}{\st{sid}{\sigma}}{\st{sid}{\sigma^\prime}} \\ 
        sid \neq sid^\prime}
        {\srule{\trans{tid}{sid^\prime}{ua}{b}{p}}{\st{sid}{\sigma}}{\st{sid^\prime}{erase(\sigma,\sigma^\prime)}}}
\end{equation}} \\

\vbox{\begin{equation}
    \label{eq:trVanilla}
    \inferrule
        {\srule{ua}{\st{sid}{\sigma}}{\textit{true}} \\ \srule{b}{\st{sid}{\sigma}}{\textit{true}} \\
        validScr(sid^\prime) = \textit{true}}
        {\srule{\trans{tid}{sid^\prime}{ua}{b}{}}{\st{sid}{\sigma}}{\st{sid^\prime}{erase(\sigma,\sigma^\prime)}}}
\end{equation}} \\

\vbox{\begin{equation}
    \label{eq:trSelf}
    \inferrule
        {\srule{ua}{\st{sid}{\sigma}}{\textit{true}} \\ \srule{b}{\st{sid}{\sigma}}{\textit{true}} \\
        validScr(sid^\prime) = \textit{true} \\
        \srule{p}{\st{sid}{\sigma}}{\st{sid}{\sigma^\prime}}}
        {\srule{\trans{tid}{sid}{ua}{b}{p}}{\st{sid}{\sigma}}{\st{sid}{erase_{WID}(\sigma,\sigma^\prime)}}}
\end{equation}} \\

\vbox{\begin{equation}
    \label{eq:trFail1}
    \inferrule
        {\srule{ua}{\st{sid}{\sigma}}{\textit{false}}}
        {\srule{\trans{tid}{sid\prime}{ua}{b}{p}}{\st{sid}{\sigma}}{\st{sid}{\sigma}}}
\end{equation}} \\

\vbox{\begin{equation}
    \label{eq:trFail2}
    \inferrule
        {\srule{b}{\st{sid}{\sigma}}{\textit{false}}}
        {\srule{\trans{tid}{sid^\prime}{ua}{b}{p}}{\st{sid}{\sigma}}{\st{sid}{\sigma}}}
\end{equation}} \\
    \bottomrule
    \end{tabular}
    \caption{Transition-related Semantic Rules.}
    \label{tab:trans-related-rules}
\end{table}

\paragraph{Transition-related rules} The rules listed in \Fref{tab:trans-related-rules} specify an app's behavior in the context of outgoing transitions from a screen. They are informally described as \textit{Extensions to Transition} in \Fref{sec:informal-sem}.

As per rule \ref{eq:trSeq1}, if the first transition in an ordered set of transitions is taken, then the remaining transitions are not evaluated.

As per rule \ref{eq:trSeq2}, if the first transition in an ordered set of transitions is not taken, then the remaining transitions in the ordered set are evaluated.

As per rule \ref{eq:tr}, a transition from a screen to a different screen is taken if the user action associated with it evaluates to \textit{true}, the associated boolean condition evaluates to \textit{true}, and the arguments to the destination screen's parameters $p$ extend the current state $\sigma$. Rule \ref{eq:trVanilla} is a variant of this rule for a transition without associated input parameters. Rule \ref{eq:trSelf} is a variant of this rule for a transition from a screen to itself.

As per rule \ref{eq:trFail1}, a transition is not taken if the user action associated with it evaluates to \textit{false}.

As per rule \ref{eq:trFail2}, a transition is not taken if the boolean condition associated with it evaluates to \textit{false}.

\begin{table}
    \centering
    \begin{tabular}{c}
    \toprule
     \vbox{\begin{equation}
    \label{eq:funCall}
    \inferrule
        { \rho(rn,cn) = \textit{true} \\ eval(f) = v}
        {\srule{\fn{fn}{rn.cn}{}}{\st{sid}{\sigma}}{v}}
\end{equation}} \\

\vbox{\begin{equation}
    \label{eq:funCallArg}
    \inferrule
        {\rho(rn,cn) = \textit{true} \\ eval(f) = v}
        {\srule{\fn{fn}{rn.cn}{a}}{\st{sid}{\sigma}}{v}}
\end{equation}} \\

\vbox{\begin{equation}
    \label{eq:funCallArg1}
    \inferrule
        {eval(f) = v}
        {\srule{\fn{fn}{}{a}}{\st{sid}{\sigma}}{v}}
\end{equation}} \\

\vbox{\begin{equation}
    \label{eq:funCallVanilla}
    \inferrule
        {eval(f) = v}
        {\srule{\fn{fn}{}{}}{\st{sid}{\sigma}}{v}}
\end{equation}} \\

\vbox{\begin{equation}
    \label{eq:boolCall}
    \inferrule
        {\rho(rn,cn) = \textit{true} \\ eval(f) = bv}
        {\srule{\fn{fn}{rn.cn}{}}{\st{sid}{\sigma}}{bv}}
\end{equation}} \\

\vbox{\begin{equation}
    \label{eq:boolCallArgs}
    \inferrule
        {\rho(rn,cn) = \textit{true} \\ eval(f) = bv}
        {\srule{\fn{fn}{rn.cn}{a}}{\st{sid}{\sigma}}{bv}}
\end{equation}} \\

\vbox{\begin{equation}
    \label{eq:boolCallArgs1}
    \inferrule
        {eval(f) = bv}
        {\srule{\fn{fn}{}{a}}{\st{sid}{\sigma}}{bv}}
\end{equation}} \\

\vbox{\begin{equation}
    \label{eq:boolCallArgs2}
    \inferrule
        {eval(f) = bv}
        {\srule{\fn{fn}{}{}}{\st{sid}{\sigma}}{bv}}
\end{equation}} \\

    \bottomrule
    \end{tabular}
    \caption{Operation-related Semantic Rules.}
    \label{tab:op-related-rules}
\end{table}

\paragraph{Operation-related rules} The rules listed in \Fref{tab:op-related-rules} are used to evaluate an operation used in the storyboard. These rules correspond to the informal description of \textit{Operations} in \Fref{sec:informal-sem}.

As per rule \ref{eq:funCall}, an operation $f$ evaluates to a value $v$, if the meta-function $eval$ evaluates $f$ to a non-boolean value $v$ and the resource used in the operation is defined. Rule \ref{eq:funCallArg} is a variant of this rule for operations with arguments.

As per rule \ref{eq:funCallArg1}, an operation $f$ evaluates to a value $v$, if the meta-function $eval$ evaluates $f$ to a non-boolean value $v$. Rule \ref{eq:funCallVanilla} is a variant of this rule for operations without resources or arguments.

As per rule \ref{eq:boolCall}, an operation $f$ evaluates to a boolean value $bv$, if the meta-function $eval$ evaluates $f$ to a boolean $bv$ and the resource used in the operation is defined. Rule \ref{eq:boolCallArgs} is a variant of this rule for operation with arguments. 

As per rule \ref{eq:boolCallArgs1}, an operation $f$ evaluates to a boolean value $bv$, if the meta-function $eval$ evaluates $f$ to a boolean $bv$. Rule \ref{eq:boolCallArgs2} is a variant of this rule for operations without resources or arguments.

\begin{table}
    \centering
    \begin{tabular}{c}
    \toprule

    \vbox{\begin{equation}
    \label{eq:widgetRef}
    \inferrule
        {\sigma(gen(wid,sid)) = v}
        {\srule{wid}{\st{sid}{\sigma}}{v}}
\end{equation}} \\

    \vbox{\begin{equation}
    \label{eq:ua}
    \inferrule
        {\srule{occurs(wid,g)}{\st{sid}{\sigma}}{bv}}
        {\srule{wid.g}{\st{sid}{\sigma}}{bv}}
\end{equation}} \\
    
    \vbox{\begin{equation}
    \label{eq:varRef}
    \inferrule
        {\sigma(gen(vid,sid)) = v}
        {\srule{vid}{\st{sid}{\sigma}}{v}}
\end{equation}} \\

\vbox{\begin{equation}
    \label{eq:varRef1}
    \inferrule
        {\sigma(gen(vid,sid)) = \perp \\ evalUriVar(gen(vid,sid)) = v}
        {\srule{vid}{\st{sid}{\sigma}}{v}}
\end{equation}} \\
    
    \vbox{\begin{equation}
    \label{eq:varInvRef}
    \inferrule
        {\sigma(gen(vid,sid)) = \perp \\ evalUriVar(gen(vid,sid)) = \perp}
        {\srule{vid}{\st{sid}{\sigma}}{\perp}}
\end{equation}} \\
    
    \vbox{\begin{equation}
    \label{eq:widgetInvRef}
    \inferrule
        {\sigma(gen(wid,sid)) = \perp}
        {\srule{wid}{\st{sid}{\sigma}}{\perp}}
\end{equation}} \\
    
    \bottomrule
    \end{tabular}
    \caption{ID-related Semantic Rules.}
    \label{tab:id-related-rules}
\end{table}

\paragraph{ID-related rules} The rules listed in \Fref{tab:id-related-rules} are used to evaluate a widget or variable identifiers. 

As per rule \ref{eq:widgetRef}, a widget identifier $wid$ evaluates to a value $v$ under state $\sigma$ if the state $\sigma$ maps $wid$ to the value $v$.

As per rule \ref{eq:ua}, a gesture \textit{g} on a widget $wid$ evaluates to \textit{true}, if the gesture occurs on the widget; \textit{false} otherwise.

As per rule \ref{eq:varRef}, a variable identifier $vid$ evaluates to a value $v$ under state $\sigma$ if the state $\sigma$ maps $vid$ to the value $v$.

As per rule \ref{eq:varRef1}, a variable identifier $vid$ evaluates to a value $v$, if $\sigma(vid)$ is undefined, but an external app has provided the value $v$ for $vid$ via $evalUriVar$.

As per rule \ref{eq:varInvRef}, a variable evaluates to $\perp$ if the variable identifier is undefined in $\sigma$ and no external app has provided a value for the variable.

As per rule \ref{eq:widgetInvRef}, a widget evaluates to $\perp$ if the widget identifier is undefined in $\sigma$. 

\begin{table}
    \centering
    \begin{tabular}{c}
    \toprule

    \vbox{\begin{equation}
    \label{eq:pSeq}
    \inferrule
        {\srule{p_0}{\st{sid}{\sigma}}{\st{sid}{\sigma^\prime}} \\
        \srule{p_1}{\st{sid}{\sigma^\prime}}{\st{sid}{\sigma^{\prime\prime}}}}
        {\srule{p_0p_1}{\st{sid}{\sigma}}{\st{sid}{\sigma^{\prime\prime}}}}
\end{equation}} \\

    \vbox{\begin{equation}
    \label{eq:pVal}
    \inferrule
        {}
        {\srule{\param{vid}{v}}{\st{sid}{\sigma}}{\st{sid}{\sigma[gen(vid,\phi(vid))\mapsto v]}}}
\end{equation}} \\
    
    \vbox{\begin{equation}
    \label{eq:pVar}
    \inferrule
        {\srule{x}{\st{sid}{\sigma}}{v}}
        {\srule{\param{vid_0}{x}}{\st{sid}{\sigma}}{\st{sid}{\sigma[gen(vid_0,\phi(vid_0))\mapsto v]}}}, x ::= vid_1, wid, \hspace{1mm} or \hspace{1mm} f
\end{equation}} \\
    
    \vbox{\begin{equation}
    \label{eq:pInvVar}
    \inferrule
        {\srule{x}{\st{sid}{\sigma}}{\perp}}
        {\srule{\param{vid_0}{x}}{\st{sid}{\sigma}}{\st{sid}{\sigma}}}, x ::= vid_1 \hspace{1mm} or \hspace{1mm} wid
\end{equation}} \\
    
    \bottomrule
    \end{tabular}
    \caption{Screen's input parameter-related Semantic Rules.}
    \label{tab:screen-inp-related-rules}
\end{table}

\paragraph{Screen's input parameter-related rules} The rules listed \Fref{tab:screen-inp-related-rules} are used to evaluate the arguments provided to input parameters of a screen as part of transitions to that screen. An informal description of these rules is provided in \Fref{sec:informal-sem} as \textit{Extensions to Transitions}.

As per rule \ref{eq:pSeq}, a sequence of syntactic constructs that provide arguments to the parameters of a screen $sid$ extend the state $\sigma$, if each construct provides an argument to a parameter, \ie each construct in the sequence extends the state $\sigma$.

As per rule \ref{eq:pVal}, a syntactic construct that provides an argument to a screen's parameter extends the state $\sigma$ in the current configuration with the value $v$, if the value $v$ is provided as an argument to parameter ID $vid$. 

As per rule \ref{eq:pVar}, a syntactic construct that provides an argument to a screen's parameter extends the state $\sigma$ in the current configuration with the value of $x$, if $x$ has the value $v$ in $\sigma$ and $x$ is either a variable, widget or an operation. 

As per rule \ref{eq:pInvVar}, A syntactic construct that provides an argument to a screen's parameter does not change the current configuration if the variable or widget that provides the argument is undefined in $\sigma$.

\begin{table}
    \centering
    \begin{tabular}{c}
    \toprule

    \vbox{\begin{equation}
    \label{eq:andTrue}
    \inferrule
        {\srule{b_0}{\st{sid}{\sigma}}{true} \\
        \srule{b_1}{\st{sid}{\sigma}}{true}}
        {\srule{\bop{b_0}{and}{b_1}}{\st{sid}{\sigma}}{true}}
\end{equation}} \\

    \vbox{\begin{equation}
    \label{eq:andFalse1}
    \inferrule
        {\srule{b_0}{\st{sid}{\sigma}}{false}}
        {\srule{\bop{b_0}{and}{b_1}}{\st{sid}{\sigma}}{false}}
\end{equation}} \\

    \vbox{\begin{equation}
    \label{eq:andFalse2}
    \inferrule
        {\srule{b_1}{\st{sid}{\sigma}}{false}}
        {\srule{\bop{b_0}{and}{b_1}}{\st{sid}{\sigma}}{false}}
\end{equation}} \\

    \vbox{\begin{equation}
    \label{eq:OrTrue1}
    \inferrule
        {\srule{b_0}{\st{sid}{\sigma}}{true}}
        {\srule{\bop{b_0}{or}{b_1}}{\st{sid}{\sigma}}{true}}
\end{equation}} \\

    \vbox{\begin{equation}
    \label{eq:OrTrue2}
    \inferrule
        {\srule{b_1}{\st{sid}{\sigma}}{true}}
        {\srule{\bop{b_0}{or}{b_1}}{\st{sid}{\sigma}}{true}}
\end{equation}} \\

    \vbox{\begin{equation}
    \label{eq:OrFalse}
    \inferrule
        {\srule{b_0}{\st{sid}{\sigma}}{false} \\
        \srule{b_1}{\st{sid}{\sigma}}{false}}
        {\srule{\bop{b_0}{or}{b_1}}{\st{sid}{\sigma}}{false}}
\end{equation}} \\

    \vbox{\begin{equation}
    \label{eq:NotTrue}
    \inferrule
        {\srule{b}{\st{sid}{\sigma}}{false}}
        {\srule{\bop{}{not}{b}}{\st{sid}{\sigma}}{true}}
\end{equation}} \\

    \vbox{\begin{equation}
    \label{eq:NotFalse}
    \inferrule
        {\srule{b}{\st{sid}{\sigma}}{true}}
        {\srule{\bop{}{not}{b}}{\st{sid}{\sigma}}{false}}
\end{equation}} \\
    
    \bottomrule
    \end{tabular}
    \caption{Boolean-related Semantic Rules.}
    \label{tab:bool-related-rules}
\end{table}

\paragraph{Boolean-related rules} The rules listed in \Fref{tab:bool-related-rules} are related to boolean expressions and are similar to traditional notions of conjunction, disjunction, and negation in logical statements. 

\subsection{Example}
We will apply the appropriate semantic rules described above to interpret a portion of the app's storyboard shown in \Fref{fig:story_ex}. Let us assume that the transition from the screen \texttt{Messenger} to the screen \texttt{Contacts} is taken. Consequently, via rule \ref{eq:trVanilla} the existing configuration changes to $\st{sid}{\sigma}$, where $sid$ is the identifier of the \texttt{Contacts} screen.\footnote{Notice that the transition from \texttt{Messenger} to \texttt{Contacts} does not have a boolean expression associated with it. For brevity, we assume that the boolean expression $b$ in rule \ref{eq:trVanilla} evaluates to true.} Since $sid$ in the configuration $\st{sid}{\sigma}$ is the identifier of \texttt{Contacts}, rule \ref{eq:screen} is triggered. Consequently, $\sigma$ is extended with the widgets in the \texttt{Contacts} screen via the widget-related rules. Hence, the configuration is now $\st{sid}{\sigma^\prime}$. At this point, either one of the two outgoing transitions from screen \texttt{Contacts} can be taken via rule \ref{eq:trSeq1} or \ref{eq:trSeq2}. Let us assume that the transition from \texttt{Contacts} to \texttt{SaveStatus} is taken. In this scenario, rule \ref{eq:tr} will be triggered, which changes the configuration to $\st{sid^{\prime\prime}}{\sigma^{\prime\prime}}$. In this new configuration $sid^{\prime\prime}$ is the identifier of the screen \texttt{SaveStatus} and $\sigma^{\prime\prime}$ contains the value of screen \texttt{SaveStatus}'s input parameter $x$. Thereafter, rule \ref{eq:screenNoTran} is triggered and $\sigma^{\prime\prime}$ is extended with the widgets in \texttt{SaveStatus} via the widget-related rules. Since there are no outgoing transitions from \texttt{SaveStatus}, the app stays in \texttt{SaveStatus} via rule \ref{eq:appSid} or is terminated via rule \ref{eq:appStop}.     

\subsection{Safety and Progress}

We say that an app is safe if the app does not reach an \textit{invalid configuration} or a \textit{terminal configuration}. Since there are no invalid configurations, an app does not reach an \textit{invalid configuration}. As far as terminal configuration is concerned, we define it as the $\st{\perp}{\sigma}$, such that $\forall x \in dom(\sigma) : \sigma(x) = \perp$. The only time an app reaches the terminal configuration is when the app is stopped by the user or Android. In this configuration no other transition is possible and the app is said to be terminated. \textit{Since no rule defined above, except rule \ref{eq:appStop}, makes an app reach the terminal configuration, we say that the semantics ensure safety by construction}.

The semantic rules defined above ensure that an app is never "stuck", \ie if an app is in a valid configuration or a non-terminal configuration, then the app will either terminate or will be evaluated further in some configuration. We call this property of not getting stuck as progress. Formally, progress is specified as the following theorem:       

\begin{theorem}
\label{theo:appSafety}
If an app is in a configuration $\st{sid}{\sigma}$, then $\exists \st{sid^\prime}{\sigma^\prime} : \ruleExp{app}{\st{sid}{\sigma}}{\langle app,\st{sid^\prime}{\sigma^\prime} \rangle}$, or $\ruleExp{app}{\st{sid}{\sigma}}{\st{\perp}{erase(\sigma,\sigma)}}$ 
\end{theorem}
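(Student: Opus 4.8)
\noindent\emph{Proof plan.} The plan is to prove progress by a case analysis on the meta-predicate $stop(appId)$, reduce the claim for the whole app to a progress claim for the currently displayed screen, and then discharge that claim by structural induction over the syntactic categories (screens, widgets, transitions, and the various sub-expressions), using the fact that every auxiliary evaluation judgement is total on every state. First I would split on $stop(appId)$. If $stop(appId)=\textit{true}$, rule \ref{eq:appStop} applies directly and produces $\langle app,\st{sid}{\sigma}\rangle \rightarrow \st{\perp}{erase(\sigma,\sigma)}$, which is exactly the second disjunct. If $stop(appId)=\textit{false}$, the only applicable app-level rule is \ref{eq:appSid}, whose single non-trivial premise is $\langle s,\st{sid}{\sigma}\rangle \rightarrow \st{sid'}{\sigma'}$ for the screen sequence $s$ of the app; so it suffices to show that the screen sequence always steps at a reachable configuration.

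Here I would first record the well-formedness invariant that $sid$ names an actual screen of the app: this holds for the start screen by rule \ref{eq:app} (via $init$) and is preserved by every transition rule, since \ref{eq:tr}, \ref{eq:trVanilla}, and \ref{eq:trSelf} all carry the side condition $validScr(sid')=\textit{true}$. Given this invariant, rules \ref{eq:screenSeq0} and \ref{eq:screenSeq1} let me navigate down to the individual screen named by $sid$, so it remains to show that that screen steps. If it is a proxy screen, rule \ref{eq:ps} applies unconditionally. If it is an ordinary screen, I apply \ref{eq:screen} (or \ref{eq:screenNoTran} when there are no outgoing transitions), whose premises are (i) its widget block extends the state and (ii), if present, its transition block produces a configuration. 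Claim (i) follows by structural induction over widget blocks: \ref{eq:wdgVal} covers a literal, \ref{eq:wdgSeq} covers concatenation, and \ref{eq:wdgVar}/\ref{eq:wdgInvVar} cover a widget driven by a variable or operation once I know such references always evaluate --- and they do, because the ID rules \ref{eq:widgetRef}, \ref{eq:varRef}, \ref{eq:varRef1}, \ref{eq:varInvRef}, \ref{eq:widgetInvRef} are exhaustive over the outcomes of $\sigma(\cdot)$ and $evalUriVar(\cdot)$ (each yielding a value or $\perp$), and the operation rules require only $eval$ (a total function) together with $\rho(rn,cn)=\textit{true}$, which the standing assumption that all referenced resources are parsed and registered supplies.

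Claim (ii) I would prove in two layers. For a single transition, the user action $wid.g$ always evaluates to a boolean by \ref{eq:ua} (since $occur$ is a total predicate), the boolean guard always evaluates by a routine structural induction over the boolean rules, and the parameter block always steps by induction using \ref{eq:pSeq}/\ref{eq:pVal}/\ref{eq:pVar}/\ref{eq:pInvVar}; hence a single transition takes \ref{eq:trFail1} if the action did not occur, \ref{eq:trFail2} if the guard is false, and otherwise \ref{eq:tr}/\ref{eq:trVanilla}/\ref{eq:trSelf} (using $validScr(sid')$ from the invariant, and splitting on whether the target equals $sid$ and on whether parameters are present). For an ordered list of transitions I then combine single transitions via \ref{eq:trSeq1} and \ref{eq:trSeq2}.

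The main obstacle lives precisely in that last combination: I must verify that the side conditions of \ref{eq:trSeq1} and \ref{eq:trSeq2} are \emph{jointly exhaustive} over the possible results of evaluating the head transition $tr_0$ --- that whenever $tr_0$ steps, it either leaves the configuration unchanged (so \ref{eq:trSeq2} fires and the tail is consulted) or it genuinely moves to a new configuration with $compr(\sigma,\sigma')$ holding (so \ref{eq:trSeq1} fires). Checking this forces a careful reconciliation of what $erase$, $erase_{WID}$, and the $compr$ predicate compute on the states produced by \ref{eq:tr}, \ref{eq:trVanilla}, and \ref{eq:trSelf}; the interaction of the self-transition rule \ref{eq:trSelf} with the sequencing rules is the delicate point, since the configuration $\st{sid}{erase_{WID}(\sigma,\sigma')}$ it produces is in general neither literally equal to $\st{sid}{\sigma}$ nor obviously $compr$-related to it, so discharging this case may require an auxiliary well-formedness constraint on how self-transitions occur in a transition ordering (for instance, only last). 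Modulo that bookkeeping the remaining cases are mechanical, and the induction is well-founded because storyboards are finite syntactic objects.
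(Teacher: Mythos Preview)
Your approach is essentially the same as the paper's: case split on $stop(appId)$, invoke rule~\ref{eq:appStop} in the stopping case, and otherwise reduce to showing that the screen block steps so rule~\ref{eq:appSid} fires. The paper's proof isolates exactly your ``Claim~(ii)'' as a separate lemma (Lemma~\ref{lem:trans}: every transition block $tr$ steps from any $(sid,\sigma)$) and proves it by structural induction on $tr$, then invokes that lemma together with rule~\ref{eq:appSid}. Your inline treatment covers the same ground, and in fact covers more: you additionally argue that widget blocks always extend the state, that the screen sequence rules let you locate the current screen, that proxy screens and transition-free screens are handled, and that the auxiliary ID/operation/boolean/parameter judgements are total. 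The paper's proof is silent on all of these and appeals only to the transition lemma.

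The delicate point you flag---whether rules~\ref{eq:trSeq1} and~\ref{eq:trSeq2} are jointly exhaustive, in particular when $tr_0$ is a self-transition producing $\st{sid}{erase_{WID}(\sigma,\sigma')}$ that is neither identical to $\st{sid}{\sigma}$ nor obviously $compr$-related to it---is real, and the paper's inductive step for Lemma~\ref{lem:trans} simply asserts ``either apply rule~\ref{eq:trSeq1}\ldots or apply~\ref{eq:trSeq2}'' without discharging the side conditions. So your proposal is not only aligned with the paper's strategy but more scrupulous about the cases; the obstacle you identify is a gap in the paper's own argument rather than in yours.
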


We need the following lemma to prove theorem \ref{theo:appSafety}.

\begin{lemma}
\label{lem:trans}
If an app is in a configuration $\st{sid}{\sigma}$ and $\forall tr \in TR : outTr(sid) = tr$, then $\exists (sid^\prime, \sigma\prime) : \srule{tr}{\st{sid}{\sigma}}{\st{sid^\prime}{\sigma^\prime}}$, where $(sid^\prime, \sigma\prime)$ may or may not equal $\st{sid}{\sigma}$.   
\end{lemma}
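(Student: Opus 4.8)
\textbf{Proof proposal for Lemma~\ref{lem:trans}.}
The plan is to argue by structural induction on the syntactic form of the transition term $tr = outTr(sid)$, which by the grammar of \Fref{sec:syntax} is either a single \keyword{transition} construct or a sequence $tr_0tr_1$. The engine of the argument is that the seven rules in \Fref{tab:trans-related-rules} are exhaustive across the possible user-action/guard combinations, so the induction reduces to showing that, for whichever rule applies, the side conditions can always be discharged: the destination screen is valid, the guard and the argument list evaluate, and the $compr$ test used to chain a sequence can be decided. Accordingly, I would first record a handful of totality facts and then do the main induction.

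Each totality fact is itself a small structural induction. (i) Every boolean expression evaluates to some $bv \in BV$: rules \ref{eq:andTrue}--\ref{eq:NotFalse} cover conjunction, disjunction, and negation for all value combinations, a literal $bv$ evaluates to itself, and an operation $f$ appearing in a boolean position evaluates through the total meta-function $eval$ to some $bv$ by the operation-consistency requirement of \Fref{sec:informal-sem}. (ii) Every operation $f$ evaluates to a value, using that $eval$ is total and --- when $f$ uses a capability $rn.cn$ --- the well-formedness of the storyboard, which gives $\rho(rn,cn)=\textit{true}$; and every widget or variable identifier evaluates either to a value or to $\perp$ by rules \ref{eq:widgetRef}, \ref{eq:varRef}, \ref{eq:varRef1}, \ref{eq:varInvRef}, \ref{eq:widgetInvRef}. (iii) Consequently the argument-list construct $p$ always takes a step $\srule{p}{\st{sid}{\sigma}}{\st{sid}{\sigma'}}$ that leaves the screen unchanged: rules \ref{eq:pVal}, \ref{eq:pVar}, \ref{eq:pInvVar} handle a single \keyword{param}, and \ref{eq:pSeq} plus the induction hypothesis handle a sequence.

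For the base case, let $tr$ have destination $sid'$, user action $ua$, optional guard $b$, and optional arguments $p$. I would case-split on $occur(ua)$: if it is $\textit{false}$, rule \ref{eq:trFail1} fires and the configuration is unchanged; if it is $\textit{true}$, evaluate $b$ (a $bv$, by (i)) --- if $\textit{false}$, rule \ref{eq:trFail2} fires unchanged, and if $\textit{true}$, then $validScr(sid')=\textit{true}$ since in a well-formed storyboard every transition destination is a screen of the app, and (iii) supplies $\srule{p}{\st{sid}{\sigma}}{\st{sid}{\sigma'}}$. A final split on $sid=sid'$ selects rule \ref{eq:trSelf} (landing in $\st{sid}{erase_{WID}(\sigma,\sigma')}$) or rule \ref{eq:tr} / \ref{eq:trVanilla} (landing in $\st{sid'}{erase(\sigma,\sigma')}$), so a step exists in every sub-case. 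For the inductive case $tr=tr_0tr_1$, I would strengthen the induction to also record that the reachable target is either literally $\st{sid}{\sigma}$ (no constituent transition fired) or is reached via one of the firing rules and hence satisfies the $compr(\sigma,\cdot)$ premise of \ref{eq:trSeq1}; this dichotomy is read off from the base case together with the definitions of $erase$ and $erase_{WID}$. Given it, the induction hypothesis on $tr_0$ lands in one of the two alternatives: in the first, rule \ref{eq:trSeq2} applies and the induction hypothesis on $tr_1$ yields the required step; in the second, rule \ref{eq:trSeq1} applies directly.

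The main obstacle I anticipate is not the rule bookkeeping but formulating and maintaining the well-formedness invariant on reachable configurations that keeps these side conditions from failing --- namely that every transition destination satisfies $validScr$, every operation-referenced capability satisfies $\rho$, and the operation-consistency condition holds so that boolean positions really receive booleans; the paper states these only informally, so pinning them down and checking they are preserved by every rule is the real content. A related delicate point is confirming that the $compr(\sigma,\sigma')$ premise of \ref{eq:trSeq1} does hold whenever a transition fires, which forces one to unfold exactly which identifiers $erase$ and $erase_{WID}$ leave defined; should the self-transition case fail to satisfy $compr$, the sequencing rules (or the $compr$ predicate) would need a small adjustment before the lemma goes through.
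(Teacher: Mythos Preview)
Your proposal is correct and follows essentially the same approach as the paper: structural induction on the form of $tr$, with the base case discharged by observing that one of rules \ref{eq:tr}, \ref{eq:trVanilla}, \ref{eq:trSelf}, \ref{eq:trFail1}, or \ref{eq:trFail2} always applies, and the inductive case handled via the sequencing rules \ref{eq:trSeq1} and \ref{eq:trSeq2}. In fact your argument is considerably more careful than the paper's own proof, which simply asserts that ``one of the rules applies'' without the totality sub-lemmas or the $compr$ analysis you supply; the concerns you flag about well-formedness side conditions and the $compr$ premise are real gaps that the paper leaves implicit.
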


\begin{proof}
We prove Lemma \ref{lem:trans} by induction on the structure of $tr$.

\textit{Base Case}: When $tr$ is one transition from the screen $sid$, Lemma \ref{lem:trans} is trivially true since either Rule \ref{eq:tr},  \ref{eq:trVanilla}, \ref{eq:trSelf}, \ref{eq:trFail1}, or \ref{eq:trFail2} will always apply.

\textit{Inductive Case}: When $tr$ is an ordered set of transitions, $tr_0tr_1$, from the screen $sid$, then either apply rule \ref{eq:trSeq1} and \ref{eq:trSeq2} with the induction hypothesis on $tr_0$ or apply \ref{eq:trSeq2} with the induction hypothesis on $tr_1$. 

Hence, by the principle of induction, Lemma \ref{lem:trans} is true.

\end{proof}

\begin{proof}

Let us assume that an app is in $\st{sid}{\sigma}$, where $sid$ is a screen in the app and $\sigma$ is the state of the app at screen $sid$.

Assume that there is an ordered set of transitions $tr$ from screen $sid$, then by lemma \ref{lem:trans}, $\exists (sid^\prime, \sigma^\prime)$ such that the app will be further evaluated in $(sid^\prime, \sigma^\prime)$ as per rule \ref{eq:appSid}. If the app is stopped, then the app will terminate as per rule \ref{eq:appStop}. 
\end{proof}

\section{Information Flow Analysis Algorithm}
\label{sec:infoFlowFormal}
This analysis tracks the flow of information in the form of values from \textit{sources} to \textit{sinks} in a storyboard, as defined in \Fref{sec:analysis}. In this section, We formally describe the algorithm. 

\paragraph{Meta-language:} Before specifying the algorithm, we define the following functions and structures required to understand it and the corresponding proof of correctness.

\begin{enumerate}
    \item $Idn : Exp \rightarrow IDN$, where $Exp = W \cup P \cup F$ and $IDN = WID \cup VID \cup FN$, is a function is used to obtain the identifier of a widget, input parameter, or an operation. It is defined as follows:
        \begin{enumerate}
            \item \textit{Widget} : if $w \in Exp$ and $w ::=\widget{wt}{wid}{x}$, then $Idn(w) = gen(wid,\phi(wid))$
            \item \textit{Parameter} : if $p \in Exp$ and $p ::= \param{vid}{x}$, then $Idn(p) = gen(vid,\phi(vid))$
            \item \textit{Operation} : if $\textit{f} \in Exp$ and $\textit{f} ::= \fn{fn}{rn.cn}{a}$, then $Idn(\textit{f}) = fn$
        \end{enumerate}

    \item $FV: Exp \rightarrow 2^{IDN}$, is a function that maps $e \in Exp$ to the power set of $IDN$. $FV$ is used to retrieve the identifiers that are used as values of a widget, input parameter, or operation. It is defined as follows:
    
    \begin{enumerate}
        \item if $e \in F$, then
        \begin{enumerate}
            \item if $e ::= \fn{fn}{rn.cn}{vid}$, then \textit{FV}$(e) = \{gen(vid,\phi(vid))\}$
            \item if $e ::= \fn{fn}{rn.cn}{wid}$, then \textit{FV}$(e) = \{gen(wid,\phi(wid))\}$
            \item if $e ::= \fn{fn}{rn.cn}{f\prime}$, then \textit{FV}$(e) = \{Idn(f\prime)\}$
            \item if $e ::= \fn{fn}{rn.cn}{a_0a_1}$, then \textit{FV}$(e) = $ \textit{FV}$(f_0)$ $\cup$ \textit{FV}$(f_0)$, \\ where $f_0 ::= \fn{fn}{rn.cn}{a_0}$ and $f_1 ::= \fn{fn}{rn.cn}{a_1}$ 
            \item otherwise, \textit{FV}$(e) = \emptyset$
        \end{enumerate}
    
        \item if $e \in W$, then
            \begin{enumerate}
                \item if $e ::= \widget{wt}{wid}{vid}$, then \textit{FV}$(e) = \{gen(vid,\phi(vid))\}$
                \item if $e ::= \widget{wt}{wid}{f}$, then \textit{FV}$(e) = \{Idn(f)\}$
                \item otherwise, \textit{FV}$(e) = \emptyset$
            \end{enumerate}
            
            \item if $e \in P$, then
            \begin{enumerate}
                \item if $e ::= \param{vid_0}{vid_1}$, then \textit{FV}$(e) = \{gen(vid_1,\phi(vid_1))\}$
                \item if $e ::= \param{vid}{wid}$, then \textit{FV}$(e) = \{gen(wid,\phi(wid))\}$
                \item if $e ::= \param{vid}{f}$, then \textit{FV}$(e) = \{Idn(f)\}$
                \item otherwise, \textit{FV}$(e) = \emptyset$
            \end{enumerate}
    \end{enumerate}
    
    
    \item The \textit{safe} relation is used to collect direct flows between sources and sinks in the storyboard that have been marked by a developer with the \textit{safe} keyword. \Fref{sec:syntax} shows the syntax for marking direct flows as \textit{safe}.
    
    \item \textit{safeParams} denotes the set of input parameters of a proxy screen $ps \in PS$ such that $ps$ has been marked \textit{safe} or $ps$ has the \textit{app} attribute set.
    
    \item $CV : F \rightarrow BV$ is a predicate such that $CV(f) = true$ if operation $f$ is an untrusted sink; $CV(f) = false$ otherwise.
    
    \item $IV : F \rightarrow BV$ is a predicate such that $IV(f) = true$ if operation $f$ is an untrusted source; $IV(f) = false$ otherwise.
    
\end{enumerate}

\paragraph{\textbf{Algorithm}:} The analysis algorithm proceeds in two stages. First, it captures the \textit{direct} relationships between the identifier of a widget, an input parameter, or an operation with the identifier used as value in a widget, input parameter, or an argument to an operation in a binary relation called $influences$. It then uses \textit{influences} to build a reflexive transitive closure $influences^*$ to capture the \textit{indirect} relationships between widgets, input parameters, and operations. Second, for each widget, operation, and input parameter expression, the analysis collects their identifiers if they are directly or indirectly influenced by unsafe and untrusted identifiers. A violation is detected if a widget, input parameter, or operation has a non-empty set of such identifiers associated with it. 

The algorithm is formally defined in the following steps:

\begin{enumerate}
    \item
    \label{def:influences}
    Calculate the binary relation \textit{influences}: $IDN \rightarrow IDN$ for each $e \in W \cup P \cup F$ as follows:
    \begin{enumerate}
        \item \textit{Widgets} $w \in W$:
        \begin{enumerate}
            \item if $w ::= \widget{wt}{wid}{vid}$, then $\{(gen(vid,\phi(vid)),gen(wid,\phi(wid)))\}$.
            \item if $w ::= \widget{wt}{wid}{f}$, then $\{(Idn(f),gen(wid,\phi(wid)))\}$.
            \item if $w ::= w_0w_1$, then $\textit{influences}_{w0} \cup \textit{influences}_{w1} \cup \textit{influences}$
        \end{enumerate}
        
        \item \textit{screen input parameters} $p \in P$:
            \begin{enumerate}
                \item if $p ::= \param{vid_0}{vid_1}$, then $\{(gen(vid_1,\phi(vid_1)),gen(vid_0,\phi(vid_0)))\}$
                \item if $p ::= \param{vid}{wid}$, then $\{(gen(wid,\phi(wid)),gen(vid,\phi(vid)))\}$
                \item if $p ::= \param{vid}{f}$, then $\{(Idn(f),gen(vid,\phi(vid)))\}$
                \item if $p ::= p_0p_1$, then $\textit{influences}_{p0} \cup \textit{influences}_{p1} \cup \textit{influences}$
            \end{enumerate}
            
        \item \textit{Operations} $f \in F$:
            \begin{enumerate}
                \item if $f ::= \fn{fn}{rn.cn}{vid}$, then $\{(gen(vid,\phi(vid)),fn)\}$
                \item if $f ::= \fn{fn}{rn.cn}{wid}$, then $\{(gen(wid,\phi(wid)),fn)\}$.
                \item if $f ::= \fn{fn}{rn.cn}{f\prime}$, then $\{(Idn(f\prime),fn)\}$
                
                \item if $f ::= \fn{fn}{rn.cn}{a_0a_1}$, then $\textit{influences}_{f0} \cup \textit{influences}_{f1} \cup \textit{influences}$, \\ where $f_{0} ::= \fn{fn}{rn.cn}{a_0}$ and $f_{1} ::= \fn{fn}{rn.cn}{a_1}$
            \end{enumerate}
    \end{enumerate}
    
    \item Calculate \textit{influences}$^*$, the reflexive transitive closure of the binary relation \textit{influences}.
    
    
    \item Mark untrusted sources using the \textit{Untrusted} : $IDN \rightarrow BV$ predicate for each $e \in F \cup U$ as follows:
    
    \begin{enumerate}
        \item if $e \in F \land IV(e)$, then $\forall (Idn(f),k) \in \textit{influences}^* : \textit{Untrusted}(k) = \textit{true}$
        \item if $e ::= us/K$, then $\forall k \in K : \textit{Untrusted}(k) = \textit{true} \land \forall (gen(vid,\phi(vid),y) \in \textit{influences}^* : \textit{Untrusted}(y) = \textit{true}$
    \end{enumerate}
    
    \item Collect flows marked \textit{safe} for each $e \in W \cup F \cup P$ as follows:
    
    \begin{enumerate}
        \item if $e ::= \textit{safe}$ $\widget{wt}{wid}{x}$, then $ \{x,gen(wid,\phi(wid)))\} \cup \textit{safe}$ and $ \forall (gen(wid,\phi(wid)),k) \in \textit{influences}^* : \{(gen(wid,\phi(wid)),k)\} \cup \textit{safe}$
        
        \item if $e ::= w_0w_1$, then $ \textit{safe}_{w0} \cup \textit{safe}_{w1} \cup safe$
        
        \item if $e ::= \fn{fn}{rn.cn}{\textit{safe}\hspace{1mm}x}$, then $ \{(gen(x,\phi(x)),fn)\} \cup \textit{safe}$, where $x \in (WID \cup VID)$
        
        \item if $e ::= \fn{fn}{rn.cn}{\textit{safe}\hspace{1mm}f}$, then $ \{(Idn(f),fn)\} \cup \textit{safe}$
        
        \item if $e ::= \fn{fn}{rn.cn}{a_0a_1}$, then $ \textit{safe}_{e0} \cup \textit{safe}_{e1} \cup \textit{safe}$, where $e0 ::= \fn{fn}{rn.cn}{a_0}$ and $e1 ::= \fn{fn}{rn.cn}{a_1}$
        
        \item if $e ::= \param{vid}{x}$ and $gen(vid,\phi(vid)) \in safeParams$, then $\{(x,gen(vid,\phi(vid))) \in \textit{safe}\}$
        
        \item if $e ::= p_0p_1$, then $ \textit{safe}_{p0} \cup \textit{safe}_{p1} \cup \textit{safe}$
    \end{enumerate}
    
    \item Calculate the function $\textit{IF} : Exp \rightarrow 2^{IDN}$ for each $e \in W \cup F \cup P$ as follows:
    
    \begin{enumerate}
    \item if $e \in W$, then
    \begin{enumerate}
        \item if $e ::= \widget{wt}{wid}{vid}$, then \\ $\textit{IF}(e) = \{gen(wid,\phi(wid)) \dash \textit{Untrusted}(vid) 
        \land (((vid,wid) \notin \textit{safe}) \land (\exists (z,vid): (z,vid) \in \textit{influences} \implies (\forall (y,z): (y,z) \in \textit{influences}^* \implies (y,z) \notin \textit{safe})))$
        
        \item if $e ::= \widget{wt}{wid}{f}$, then \\ $\textit{IF}(e) = \{gen(wid,\phi(wid)) \dash IV(f) \land (Idn(f),wid) \notin \textit{safe}$
        
        \item if $e ::= w_0w_1$, then $\textit{IF}(e) = \textit{IF}(w_0) \cup \textit{IF}(w_1)$
    \end{enumerate}
    
    \item if $e \in F$, then
    \begin{enumerate}
        \item if $e ::= \fn{fn}{rn.cn}{wid}$, then \\
        $\textit{IF}(e) = \{\textit{fn} | (wid,\textit{fn}) \notin \textit{safe} \land Idn(w) = wid \land \textit{IF}(w) \neq \emptyset \} \\ \cup 
        \{\textit{fn} | CV(e) \land ((wid,\textit{fn}) \notin safe) \land (\exists (\textit{fn}^\prime,wid) : (\textit{fn}^\prime,wid) \in \textit{influences} \implies (\textit{fn}^\prime,wid) \notin \textit{safe}) \land (\exists (y,vid), (vid,wid) : ((y,vid) \in \textit{influences} \land (vid,wid) \in \textit{influences}) \implies (\forall (k,vid) : (k,vid) \in \textit{influences}^* \implies (k,vid) \notin \textit{safe}))\}$
        
        \item if $e ::= \fn{fn}{rn.cn}{vid}$, then \\
        $\textit{IF}(e) = \{\textit{fn} | (\textit{Untrusted}(vid) \lor CV(e)) \land (vid,\textit{fn}) \notin \textit{safe} \land (\exists (z,vid) : (z,vid) \in \textit{influences} \implies (\forall (y,z) : (y,z) \in \textit{influences}^* \implies (y,z) \notin \textit{safe}))\}$
        
        \item if $e ::= \fn{fn}{rn.cn}{f}$, then \\
        $\textit{IF}(e) = \{\textit{fn} | (IV(f) \lor CV(e)) \land (Idn(f),\textit{fn}) \notin \textit{safe}\}$
        
        \item if $e ::= \fn{fn}{rn.cn}{a_0a_1}$, then \\
        $\textit{IF}(e) = \textit{IF}(e_0) \cup \textit{IF}(e_1)$, where $e_0 ::= \fn{fn}{rn.cn}{a_0}$ and $e_1 ::= \fn{fn}{rn.cn}{a_1}$
    \end{enumerate}
    
    \item if $e \in P$, then
    \begin{enumerate}
        \item if $e :: = \param{vid}{id}$, where $id \in (WID \cup VID)$, then \\
        $\textit{IF}(e) = \{gen(vid,\phi(vid)) \dash (id,vid) \in \textit{safe} \lor (\forall (y,id) \in \textit{influences}^*: (y,id) \notin \textit{safe})\}$
        
        \item if $e :: = \param{vid}{f}$, then
        $\textit{IF}(e) = \{gen(vid,\phi(vid)) \dash Idn(f) = \textit{fn} \land (\textit{fn},vid) \in \textit{safe}$
        
        \item if $e ::= e_0e_1$, then
        $\textit{IF}(e) = \textit{IF}(e_0) \cup \textit{IF}(e_1)$
    \end{enumerate}
    
\end{enumerate}
\end{enumerate}

The analysis reports a violation if $\exists e \in Exp: \textit{IF}(e) \neq \emptyset$ in the storyboard.

As an example, consider information flow analysis of the storyboard in \Fref{fig:story_ex}. As per the algorithm, we first build the binary relation: 

\begin{multline*}
    \textit{influences} = \{(y,Phone), (Phone,z), (Phone,x), (dispMsg, Status), (x,dispMsg), \\ (Phone,savePhone), (getContacts,sendMsg)\}
\end{multline*}

Next, a reflexive transitive closure is calculated as follows:

\begin{multline*}
    \textit{influences}^* = \{(y,Phone), (Phone,z), (Phone,x), (dispMsg, Status), (x,dispMsg), \\ (Phone,savePhone), getContacts,sendMsg), (y,z), (y,x), (Phone,dispMsg), \\ (y,dispMsg), (y,Status), (x,Status)), ...\}
\end{multline*}

The predicates \textit{IV} and \textit{CV} return \textit{false} for every operation in the storyboard since none of them use resources that correspond to untrusted source/sink. Since, $y$ is a variable from an external app, $\forall (y,k) \in \textit{influences}^* : \textit{Untrusted}(y,k) = \textit{true}$.

\Fref{tab:info-flow-ex} shows the result of computing \textit{IF} for each $e\in Exp$ in the storyboard before any flow was marked as \textit{safe}. Since $\exists e \in Exp : \textit{IF}(e) \neq \emptyset$, the analysis reports a violation. To fix the violation, the variable $y$ needs to be obtained from a trusted source or the flow/s related to $y$ should be marked \textit{safe}. If we take the latter approach and mark the flow between $y$ and the widget $Phone$ as \textit{safe} as shown in \Fref{fig:story_ex}, then on running the analysis again, \textit{safe} is as follows:

\begin{equation*}
    \textit{safe} = \{(y,Phone), (Phone,z), (Phone,x), ...\}
\end{equation*}

Since $\forall e \in Exp : \textit{IF}(e) = \emptyset$ as shown in \Fref{tab:info-flow-ex} third column, no violations are reported by the analysis.

\begin{table}
  \centering
  \ifdef{\TopCaption}{
    \caption{Information Flow Analysis}
  }{}
  \begin{tabular}{@{}ccc@{}} 
    \toprule
    \textit{Exp} & \textit{IF} & \textit{IF} (safe)\\
    \midrule
    TextView Phone y & \{gen(Phone,$\phi(Phone)$)\} & \{\}\\
    TextView Status "MsgSent" & \{\} & \{\}\\
    TextView Status dispMsg(x) & \{\} & \{\} \\
    sendMsg(...) & \{\} & \{\}\\
    getContacts(...) & \{\} & \{\}\\
    dispMsg(x) & \{Idn(dispMsg)\} & \{\}\\
    param z Phone & \{\} & \{\}\\
    param x Phone & \{\} & \{\} \\
    \bottomrule
  \end{tabular}
  \ifundef{\TopCaption}{
    \caption{Information Flow Analysis of Storyboard shown in \Fref{fig:story_ex}. The first column indicates an $e \in Exp$ (\ie widget, operation, or input parameter of a screen). The second column indicates a set obtained by evaluating IF(e) assuming that nothing was marked as \textit{safe}. The third column indicates IF with a flow marked as \textit{safe}.}
  }{}
  \label{tab:info-flow-ex}
\end{table}

\subsection{Proof of Correctness}

The correctness of the function \textit{IF} hinges on the calculation of the transitive closure $\textit{influences}^*$, predicate \textit{Untrusted}, and set \textit{safe}. Since \textit{Untrusted} and \textit{safe} are based on a pre-defined list of trusted and untrusted identifiers and developer-provided annotations/indicators, it is enough to prove that the relations captured in $\textit{influences}^*$ reflect the flow as specified in the semantics.

\begin{theorem}
\label{theo-DI2}
 $\forall a \in APP$, $\forall sid, sid^\prime \in SID, x,y \in ID, \sigma,\sigma^\prime$ : $\ruleExpL{a}{\st{sid}{\sigma}}{\langle a, \st{sid^\prime}{\sigma^\prime} \rangle}$ $\land$ $\sigma(x) = \sigma^\prime(y)$ $\implies$ $(x,y)$ $\in$ \textit{influences*}
\end{theorem}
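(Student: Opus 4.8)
The plan is to prove the statement by induction on the length $k$ of the derivation $\ruleExpK{a}{\st{sid}{\sigma}}{\langle a, \st{sid^\prime}{\sigma^\prime}\rangle}{k}$. First I would observe that the only app-level rules that can fire from a configuration $\st{sid}{\sigma}$ with $sid \in SID$ are rule~\ref{eq:appStop}, which leads to the terminal configuration $\st{\perp}{erase(\sigma,\sigma)}$ and cannot be extended (so it is irrelevant when the target is non-terminal), and rule~\ref{eq:appSid}, which merely lifts a screen-level step $\srule{s}{\st{sid}{\sigma}}{\st{sid_1}{\sigma_1}}$. Hence a $k$-step app derivation reaching a non-terminal configuration is, modulo the irrelevant stop step, a chain of $k$ screen-level steps, and it suffices to reason about those.

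The core is a single-step lemma: if $\srule{s}{\st{sid}{\sigma}}{\st{sid_1}{\sigma_1}}$ is one screen-level step and $\sigma_1(y)$ holds the value that $\sigma(x)$ held (value equality read under the distinctness invariant discussed below), then $(x,y) \in \textit{influences}^*$. The closure is genuinely needed already here: a single screen step first extends the state with the screen's widget bindings (Table~\ref{tab:widget-related-rules}) and only then applies the outgoing-transition rules (Table~\ref{tab:trans-related-rules}), so a value can travel from a parameter $vid$ to a widget $wid$ and then out to a destination parameter $vid^\prime$, i.e. two $\textit{influences}$-edges $vid \to wid$ and $wid \to vid^\prime$. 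With this lemma the inductive step is routine: split the derivation as $\st{sid}{\sigma} \rightarrow^{k-1} \st{sid_2}{\sigma_2} \rightarrow \st{sid^\prime}{\sigma^\prime}$, let $v = \sigma(x) = \sigma^\prime(y)$; the lemma applied to the last step yields an identifier $z$ with $\sigma_2(z) = v$ and $(z,y) \in \textit{influences}^*$ (a freshly produced value — an operation result via $eval$, or a URI-provided value via $evalUriVar$ — cannot equal $\sigma(x)$ under distinctness, so $z$ exists); the induction hypothesis gives $(x,z) \in \textit{influences}^*$, and transitivity gives $(x,y) \in \textit{influences}^*$. The base case $k=0$ gives $\sigma=\sigma^\prime$, hence $\sigma(x)=\sigma(y)$; distinctness forces $x=y$, and reflexivity of $\textit{influences}^*$ closes it.

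For the single-step lemma I would case-split on the screen rule (\ref{eq:screen} or \ref{eq:screenNoTran}) and, inside each, on the widget rules (\ref{eq:wdgSeq}--\ref{eq:wdgVal}) and the transition rules (\ref{eq:trSeq1}--\ref{eq:trFail2}, with the parameter rules \ref{eq:pSeq}--\ref{eq:pInvVar} and ID rules \ref{eq:widgetRef}--\ref{eq:widgetInvRef} underneath). Every state-updating rule has the shape $\sigma[\,gen(\cdot,\cdot)\mapsto v\,]$ where $v$ comes from evaluating a variable, widget, operation, or literal, and the matching clause of the $\textit{influences}$ definition (step~\ref{def:influences} of the algorithm, via $Idn$ and $FV$) records exactly an edge from the identifier supplying $v$ to $gen(\cdot,\cdot)$; the rules that do not alter the state (\ref{eq:wdgInvVar}, \ref{eq:trFail1}, \ref{eq:trFail2}, \ref{eq:pInvVar}, and proxy rule \ref{eq:ps}) are vacuous. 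The two delicate points are (i) the $erase$ and $erase_{WID}$ operators in \ref{eq:tr}, \ref{eq:trVanilla}, \ref{eq:trSelf}: I must argue they only delete bindings present in both arguments and never change a surviving binding's value, so they cannot create a flow $\textit{influences}$ misses; and (ii) the $gen$/$\phi$ bookkeeping — reconciling the $gen(wid,sid)$ and $gen(vid,\phi(vid))$ domain elements of the semantics with the identifiers produced by $Idn$ in the $\textit{influences}$ clauses — which is tedious but mechanical.

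The main obstacle I anticipate is not any individual rule but making the hypothesis $\sigma(x)=\sigma^\prime(y)$ carry provenance: literally, unrelated identifiers could hold equal values by coincidence, and an operation transforms values so a ``flow through an operation'' need not preserve the value at all. So the hard part will be first stating and maintaining an invariant along the derivation — each binding carries a distinct tag, operation results and URI-provided values get fresh tags, and an $\textit{influences}$-edge is present whenever a tag is copied or consumed — and re-reading ``$\sigma(x)=\sigma^\prime(y)$'' as ``the tag of $\sigma^\prime(y)$ was transitively produced from $\sigma(x)$.'' Formulating this invariant so that it is both preserved by every semantic rule and strong enough to drive the induction is the real work; once it is in place, the case analysis above is largely bookkeeping.
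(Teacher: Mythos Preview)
Your proposal follows essentially the same route as the paper: induction on the length of the app-level derivation, with the work concentrated in a single-step analysis that matches state updates of the form $\sigma[gen(\cdot,\cdot)\mapsto v]$ against the corresponding clauses of the $\textit{influences}$ definition, and then closes under transitivity. The paper's proof takes one step as the base case (rather than your $k=0$ with reflexivity) and splits the inductive case somewhat redundantly into a $k{+}1$ and an $m{+}n{+}1$ decomposition, but the substance is the same.

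Where you diverge is in rigor, not strategy. The paper's argument simply asserts that from $\sigma(x)=\sigma^\prime(y)$ one can read off a syntactic $l$ with $y=Idn(l)$ and $x$ (or an operation name consuming $x$) in $FV(l)$; it never confronts the issue you flag, namely that literal value equality does not by itself carry provenance (coincidental equal constants, or values transformed through $eval$). Your proposed tagging invariant is exactly the missing ingredient needed to make that step honest, and your treatment of $erase$/$erase_{WID}$ and the $gen$/$\phi$ bookkeeping is likewise more careful than anything in the paper. So your plan is sound and, if anything, tightens the paper's own proof.
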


\begin{proof}
We prove this by induction on the no. of steps it takes an app from the current configuration $(sid,\sigma)$ to reach a configuration $(sid^\prime,\sigma^\prime$).

\textit{Base Case}: Assume that $a$ is an app, $sid,sid^\prime \in SID, x,y \in ID, \sigma,\sigma^\prime$ such that $\ruleExp{a}{\st{sid}{\sigma}}{\langle a, \st{sid^\prime}{\sigma^\prime}} \rangle$ $\land$ $\sigma(x) = \sigma^\prime(y)$. In screen $sid^\prime$, $\exists l$ such that $l$ is a widget or provides an argument to an input parameter of another screen as part of a transition from $sid^\prime$. Since $\sigma(x) = \sigma^\prime(y)$, $y = Idn(l)$ and $k \in FV(l)$, where either $k = x$ or $k$ is the identifier of an operation with an argument $x$. By the definition of \textit{influences}, either $(x,y) \in \textit{influences}$ or $(x,k) and (k,y) \in \textit{influences}$. In either case, $(x,y) \in \textit{influences*}$.

Therefore, \ref{theo-DI2} is true.\\

\textit{Inductive Case}: 
\begin{itemize}
    \item{Case 1}: Let's assume that $a$ is an app, $sid,sid^\prime \in SID, x,z \in ID, \sigma,\sigma^{\prime\prime}$ such that  $\ruleExpK{a}{\st{sid}{\sigma}}{\langle a, \st{sid^\prime}{\sigma^{\prime\prime}} \rangle}{k}$ $\land$ $\sigma(x) = \sigma^{\prime\prime}(z)$ $\implies$ $(x,z)$ $\in$ \textit{influences*}, where $k$ is the no. of steps it takes to reach from $\sigma$ to $\sigma^{\prime\prime}$. This is our induction hypothesis.
    
    Further, assume that $\exists sid^{\prime\prime} \in SID, y \in ID, \sigma^\prime$ such that $\ruleExp{a}{\st{sid^\prime}{\sigma^{\prime\prime}}}{\langle a, \st{sid^{\prime\prime}}{\sigma^\prime}} \land \sigma^{\prime\prime}(z) = \sigma^\prime(y)$ and $l$ is a widget in screen $sid^{\prime\prime}$ or provides an argument to the input parameter of a screen via a transition from $sid^{\prime\prime}$. Since $\sigma^{\prime\prime}(z) = \sigma^\prime(y)$, $y = Idn(l)$ and $k \in FV(l)$, where $k = z$ or $k$ is the identifier of an operation that uses $z$ as an argument. In either case, $(z,y) \in \textit{influences}$ due to the definition of \textit{influences}. Since $\textit{influences} \subseteq \textit{influence*}$, $(z,y) \in \textit{influences*}$. From the induction hypothesis, $(x,z) \in \textit{influences*}$. Since, \textit{influences*} is a transitive closure, $(x,y) \in \textit{influences*}$. 
    
    \item{Case 2}: Let's assume that $a$ is an app, $sid,sid^\prime \in SID, \sigma,\sigma^\prime$ such that $\ruleExpK{a}{\st{sid}{\sigma}}{\langle a, \st{sid^\prime \rangle}{\sigma^\prime}}{m}$ $\land$ $\sigma(x) = \sigma^\prime(z)$ $\implies$ $(x,z)$ $\in$ \textit{influences*}, and $sid^{\prime\prime} \in SID, \sigma^{\prime\prime}$ such that $\ruleExpK{a}{\st{sid^\prime}{\sigma^\prime}}{\langle a, \st{sid^{\prime\prime}}{\sigma^{\prime\prime}} \rangle}{n}$ $\land$ $\sigma^\prime(z) = \sigma^{\prime\prime}(j)$ $\implies$ $(z,j)$ $\in$ \textit{influences*}, and $k = m+n$. 
    
    Further, assume that $\exists sid^{\prime\prime\prime} \in SID, y \in ID, \sigma^{\prime\prime\prime}$ : $\ruleExp{a}{\st{sid^{\prime\prime}}{\sigma^{\prime\prime}}}{\langle a,\st{sid^{\prime\prime\prime}}{\sigma^{\prime\prime\prime}}} \land \sigma^{\prime\prime}(j) = \sigma^{\prime\prime\prime}(y)$ and $l$ is a widget in screen $sid^{\prime\prime\prime}$ or provides the argument to an input parameter of a screen via a transition from $sid^{\prime\prime\prime}$. Since $\sigma^{\prime\prime}(j) = \sigma^{\prime\prime\prime}(y)$, $y = Idn(l)$ and $k \in FV(l)$, where $k = j$ or $k$ is the identifier of an operation that uses $j$ as an argument. In either case, $(j,y) \in \textit{influences}$ by the definition of \textit{influences}. Since $\textit{influences} \subseteq \textit{influence*}$, $(j,y) \in \textit{influences*}$. From the induction hypothesis, $(x,z) \in \textit{influences*}$ and $(z,j) \in \textit{influences*}$. Since, \textit{influences*} is a transitive closure, $(x,y) \in \textit{influences*}$. 
\end{itemize}
 
\textit{Conclusion}: By the principle of induction \ref{theo-DI2} is true.

\end{proof}

The converse of Theorem \ref{theo-DI2} does not hold for the analysis because the analysis does not consider the effects of constraints when building the \textit{influences} relation. So, it is possible that an ID $x$ \textit{influences} $y$ even if the semantics does not allow $x$ to flow into $y$. For example, let us assume that a transition $t$ is guarded by a constraint $b$ and when $b$ is \textit{true}, an argument $a$ is passed to the destination screen. Let's also assume that $x \in \textit{FV}(a)$ and $y = Idn(a)$. Finally, let us assume that $b$ is always \textit{false}. In such a scenario, $x$ \textit{influences} $y$ but the semantics will not allow $x$ to flow into $y$. This implies that the analysis will flag violations even if there isn't any. However, the developer can override the violation by setting the \textit{safe} attribute appropriately.

\end{document}